\numberwithin{figure}{section}
\def\l@subsubsection#1#2{}
\newtheorem{lemma}{Lemma}[section]
\newtheorem{corollary}[lemma]{Corollary}
\newtheorem{proposition}[lemma]{Proposition}
\theoremstyle{definition}
\newtheorem{definition}[lemma]{Definition}
\newtheorem{remark}[lemma]{Remark}
\newcommand{\CC}{{\mathbb{C}}}
\newcommand{\RR}{{\mathbb{R}}}
\newcommand{\ZZ}{{\mathbb{Z}}}
\newcommand{\Z}{{\mathbb{Z}}}
\newcommand{\diag}{{\mathrm{diag}}}
\DeclareMathOperator{\Supp}{\mathrm{Supp}}
\DeclareMathOperator{\Lnk}{Lnk}
\DeclareMathOperator{\Int}{Int}
\DeclareMathOperator{\Frm}{Frm}
\DeclareMathOperator{\BiFrm}{BiFrm}
\newcommand{\stab}{\mathcal{S}}
\newcommand{\cnf}{{\mathfrak{c}}}
\newcommand{\bM}{{\mathbf M}}
\newcommand{\tc}{{\tilde{\cnf}}}
\newcommand{\bd}{{\partial}} 
\newcommand{\db}{{\delta}} 
\begin{document}

\title{Gravitational anomaly of $3+1$ dimensional $\ZZ_2$ toric code with fermionic charges and fermionic loop self-statistics}

\author{Lukasz Fidkowski}
\affiliation{Department of Physics, University of Washington, Seattle, Washington, USA}

\author{Jeongwan Haah}
\affiliation{Microsoft Quantum, Redmond, Washington, USA}

\author{Matthew B.~Hastings}
\affiliation{Station Q, Microsoft Research, Santa Barbara, California, USA}
\affiliation{Microsoft Quantum, Redmond, Washington, USA}

\begin{abstract}
Quasiparticle excitations in $3+1$ dimensions can be either bosons or fermions.
In this work, we introduce the notion of fermionic {\emph{loop}} excitations in $3+1$ dimensional topological phases.
Specifically, we construct a new many-body lattice invariant of gapped Hamiltonians, 
the loop self-statistics $\mu=\pm1$, 
that distinguishes two bosonic topological orders that both superficially resemble $3+1$d $\mathbb Z_2$ gauge theory
coupled to fermionic charged matter.  
The first has fermionic charges and bosonic $\mathbb Z_2$ gauge flux loops~(FcBl) 
and is just the ordinary fermionic toric code.  
The second has fermionic charges and fermionic loops~(FcFl) 
and, as we argue, can only exist at the boundary of a non-trivial $4+1$d invertible phase, 
stable without any symmetries 
{\em i.e.}, it possesses a gravitational anomaly.
We substantiate these claims by constructing an explicit exactly solvable $4+1$d model using a method that bootstraps a boundary theory into a bulk Hamiltonian, analogous to that of Walker and Wang,
and computing the loop self-statistics in the fermionic $\mathbb Z_2$ gauge theory hosted at its boundary.  
We also show that the FcFl phase has the same gravitational anomaly as all-fermion quantum electrodynamics.  
Our results are in agreement with the recent classification of 
nondegenerate braided fusion $2$-categories by Johnson-Freyd, 
and with the cobordism prediction of a non-trivial $\mathbb Z_2$-classified $4+1$d invertible phase 
with action $S=\tfrac{1}{2}\int w_2 w_3$.
\end{abstract}

\maketitle
\tableofcontents

\section{Introduction}

In the past few decades it has been realized that sophisticated mathematical tools and structures can be applied to understand the classification of gapped many-body quantum phases.  This includes, for example, unitary modular tensor categories, which, having already had close connections to quantum field theory \cite{Witten1, Witten2, MS, Segal, RT, Drinfeld}, found an application in the classification of $2+1$-dimensional topological orders~\cite{FRS89, frohlich1990braid, Kitaev_2005}, as well as cobordism groups, which turn out to be useful in classifying invertible and symmetry protected topological (SPT) phases in arbitrary dimensions~\cite{Kapustin, Freed, Freed_Hopkins}.  More recently, braided fusion $2$-categories have been introduced in order to classify topological orders in $3+1$-dimensions \cite{KW14, LKW18, KTZ20}.  Validating any such mathematical classification scheme requires constructing a physical observable 
--- {\em i.e.}, a quantized invariant of many-body lattice Hamiltonians --- 
that distinguishes among the proposed phases.
While in some cases this is (in principle) straightforward 
--- {\it e.g.,} using interferometry to measure the mutual braiding statistics of anyons ---
in other cases it is more complicated.  
One example of the latter situation is the recent prediction~\cite{Freyd_2020}, 
based on a classification of nondegenerate braided fusion $2$-categories, 
of two distinct variants of $3+1$d $\ZZ_2$ gauge theory coupled to fermions:
the ordinary fermionic toric code, 
and an anomalous variant that can only exist at the boundary of a non-trivial $4+1$d invertible phase.
Assuming these phases do indeed exist, one can ask: what physical observable, defined in the context of gapped lattice spin Hamiltonians, distinguishes between them?  Furthermore, are there exactly solved models that realize these two phases?

We answer these questions by defining a new many-body lattice invariant of bosonic $3+1$d gapped Hamiltonians,
the loop self statistics $\mu=\pm 1$, and constructing models that realize both of these values.  
Importantly, the loop self statistics is well defined for $\ZZ_2$ gauge theory topological orders 
if and only if the gauge charge is a fermion.
In this case there are two possibilities: 
(i)~$\mu=1$, the usual fermionic toric code with fermionic charges and bosonic loops~(FcBl) 
and 
(ii)~$\mu=-1$, the anomalous variant with fermionic charges and fermionic loops~(FcFl).
The definition of~$\mu$ is reminiscent of the $T$-junction process 
(reviewed in \cref{subsec:review}) 
used to measure exchange statistics of identical anyons by applying a product of string operators 
that exchanges the anyons in such a way as to carefully cancel all non-universal phases (see \cite{LevinWen2003Fermions} and Sec $8.3$ and figure $10$ in \cite{Kitaev_2005}).
In the present case of loops in $3+1$d, 
the $T$-junction is replaced by a more complicated geometry, 
illustrated in \cref{fig:all_configurations_figure}, 
and the string operators are replaced with $2$d membrane operators.  
The process ends up effectively rotating the initial loop configuration 
in such a way as to reverse the orientation along the loop, 
as illustrated in \cref{fig:invariant_figure}.

As already alluded to, 
the loop self statistics are also an anomaly indicator: 
the FcFl phase cannot exist in a standalone $3+1$d lattice model, 
but only at the boundary of a non-trivial $4+1$d invertible phase, 
and hence possesses a gravitational anomaly.  
At an intuitive level this is because the fermionic nature of both the charges and the loops 
in the FcFl phase prevents either one from being condensed, 
making it impossible to drive a phase transition to a trivial phase, 
and hence leading to it being anomalous.
More formally, if the FcFl phase could be realized strictly in $3+1$d, 
then, by ``un-gauging'' fermion parity, we would obtain a non-trivial $3+1$d fermionic invertible phase, 
which is absent in current classification schemes and believed not to exist.

The anomalous nature of the FcFl phase is also related to its connection to 
all-fermion quantum electrodynamics (QED), {\em i.e.}, 
$3+1$d QED with fermionic charges, monopoles, and dyons \cite{Wang_2014, Swingle_2015}.
To elucidate this connection, 
let us imagine condensing pairs of fermionic charges in all-fermion QED.
The resulting Meissner effect confines the magnetic field in $\pm \pi$ flux tubes, 
with the domain wall between $\pi$ and $-\pi$ flux trapping a neutral fermionic monopole.
This intuitive picture motivates the construction of an exactly solved $4+1$d model using a method similar to that of Walker and Wang \cite{Walker_2011}, where a boundary theory is bootstrapped into a bulk Hamiltonian in one dimension higher.  Specifically, at a continuum level we view a spacetime trajectory of loops and particles in the anomalous $3+1$d theory as a configuration in $4$-dimensional space, and assign to it an amplitude equal to its exponentiated action.  

The fermionic nature of the charge worldlines leads us to pick a ``blackboard'' framing, 
as is typical in Walker--Wang constructions~\cite{BCFV}.  
This same blackboard framing can be used to put a local orientation 
on the (possibly not globally orientable) 2d loop trajectory worldsheet.  
The key feature which then makes our exactly solved model non-trivial is that 
the 1d domain wall where this local orientation reverses 
(referred to as the ``$w_1$ line'' below) 
is decorated with with an additional {\emph{gauge neutral}} fermion 
--- the remnant of the fermionic monopole.%
\footnote{
	Note that binding an additional physical gauge charged fermion to this $w_1$ line results in effectively a gauge charged boson being bound to it.  This gives another, equivalent, Walker--Wang model, where the $w_1$ line decoration is by a gauge charged boson.  The key point is that it is impossible to get rid of both the gauge charge and the statistics of the $w_1$ line decoration 
when the gauge charge is a fermion.  When the gauge charge is a boson, on the other hand, the decoration can be screened out, showing, at a heuristic level, why the loop self-statistics is not well defined in that case.
} The choice of blackboard framing in our model should be viewed as a technical tool used to obtain a lattice model whose boundary excitations have the appropriate statistics, verified through computing the appropriate commutation relations of string or world-sheet operators.  In particular, the notion of framing never needs to be explicitly referred to subsequently, when studying the lattice models.

Our model is specifically designed so that,  when truncated, its boundary hosts the FcFl fermionic toric code topological order, as we explicitly verify.  Furthermore, we explicitly show that the $4+1$d bulk is invertible --- in this case, it means that the bulk can be disentangled by a shallow depth circuit for two stacked copies of this phase.  In particular, this implies that there is no topological order in the bulk.  In view of the generalized Walker--Wang prescription guiding the construction of our model, this is a reflection of the nondegenerate nature of the braided fusion $2$-categories describing our boundary.  In fact, our model can be interpreted as a gauge theory involving $2$-form and $3$-form gauge fields whose action, valued in a certain cohomology group, contains the data of the braided fusion $2$-category.  Further discussion of this connection and the work of \cite{Freyd_2020} in particular is given in \cref{sec:discussion} below.  The boundary of our model can also be driven into the all-fermion QED phase (after the addition of some ancilla boundary degrees of freedom), confirming that all-fermion QED and the FcFl phase indeed possess the same gravitational anomaly.

There have been several previous works related to ours.  
The anomalous nature of all-fermion QED has been studied in a continuum field theory context 
in~\cite{Swingle_2015} and~\cite{Wang_2019}, 
where the anomaly is diagnosed by putting the theory on a ${\mathbb{CP}}^2$ spacetime topology.
The anomalous nature of all-fermion QED was also studied in~\cite{Wang_2014}, 
where a proof by contradiction exploited the edge-ability of any standalone $3+1$d model,
and relied on an assumption of the existence of a gapped surface topological order 
for any invertible phase of fermions.
In~\cite{Freyd_2020} the existence of two distinct variants of fermionic $\ZZ_2$ gauge theory 
was posited based on the classification of nondegenerate braided fusion $2$-categories.  
The fact that our loop self-statistics $\mu$ are well defined only in the case of fermionic gauge charges 
turns out to be a reflection of the trivialness 
of a certain automorphism of the corresponding braided fusion $2$-category in the work of \cite{Freyd_2020}.  
The notion of a fermionic loop excitation was also introduced in a field theory context 
in~\cite{Thorngren_2015}, 
although the relation between this and our loop self-statistics is not completely clear.

The rest of this paper is structured as follows.
In~\cref{sec:indicator} we construct the loop exchange statistics $\mu$ in fermionic gauge theories,
and show that it is independent of the various arbitrary choices made in the construction, 
assuming that the gauge charge is a fermion.
In~\cref{sec:WW}, motivated by a continuum intuition coming from a decorated domain wall picture, 
we construct a $4+1$d lattice model and verify in~\cref{sec:bdtheory} 
that it hosts the FcFl phase on its $3+1$d boundary.
We conclude with some remarks about the connection of our work to the classification of braided fusion $2$-categories and the work of \cite{Freyd_2020}, as well as some future directions in \cref{sec:discussion}.
  
{\bf{Note added:}} Near the completion of this work, we learned of another paper \cite{YuAn2021} in preparation that also constructs an exactly solvable lattice model for the nontrivial invertible bosonic phase in $4+1$ dimensions.  Also, after the completion of the initial draft of this work we learned about mathematical work of Johnson-Freyd and Reutter \cite{Freyd_Reutter} where a so-called `Klein' invariant is defined.  We believe that this Klein invariant should correspond to our loop self-statistics, modulo the fact that one is defined in the continuum field theory and the other for lattice many-body quantum systems.



\section{Fermionic loop self-statistics}\label{sec:indicator}

\subsection{Review of exchange statistics of identical point particles} \label{subsec:review}

Our loop self-statistics will be defined in analogy with the process 
that measures exchange statistics of identical 
(and for simplicity abelian) quasiparticle excitations (see \cite{LevinWen2003Fermions} and Sec $8.3$ and figure $10$ in \cite{Kitaev_2005}), 
so let us first review this process in a way that will naturally generalize.
We have a T-junction geometry, as in \cref{fig:T_junction}, 
with all distances much longer than the correlation length.  
We choose $6$ different states $|\cnf_i\rangle$, $i=1,\ldots,6$, 
corresponding to the ${4 \choose 2}=6$ configurations of two identical quasiparticles 
illustrated in \cref{fig:T_junction}.  
These states have the property that if $|\cnf_i\rangle$ and $|\cnf_j\rangle$ 
both have a given location occupied by a quasiparticle,
or both have it unoccupied, then the reduced density matrices of $|\cnf_i\rangle$ and $|\cnf_j\rangle$ 
in the neighborhood of that location are identical 
(in the latter case, being just the ground state reduced density matrix).
We then choose string operators~$M_i$, $i=1,2,3$, 
which move a quasiparticle from the center out to one of the three outer endpoints.
We require that the $M_i$ be shallow circuits 
supported in thin neighborhoods of the intervals connecting the center to these outer endpoints, 
with Lieb--Robinson length much smaller than the lengths of the intervals 
(but possibly larger than the correlation length).
We then compute
\begin{align}
M_2 M_3^{-1} M_1 M_2^{-1} M_3 M_1^{-1} \ket{\cnf_1} = \theta \ket{\cnf_1} \label{eq:exchangestatistics}
\end{align}
where $\theta$ encodes the exchange statistics of the quasiparticles.

\begin{figure}[b]
\centering
\includegraphics[width=0.7\textwidth, trim={20mm 70mm 90mm 70mm}, clip]{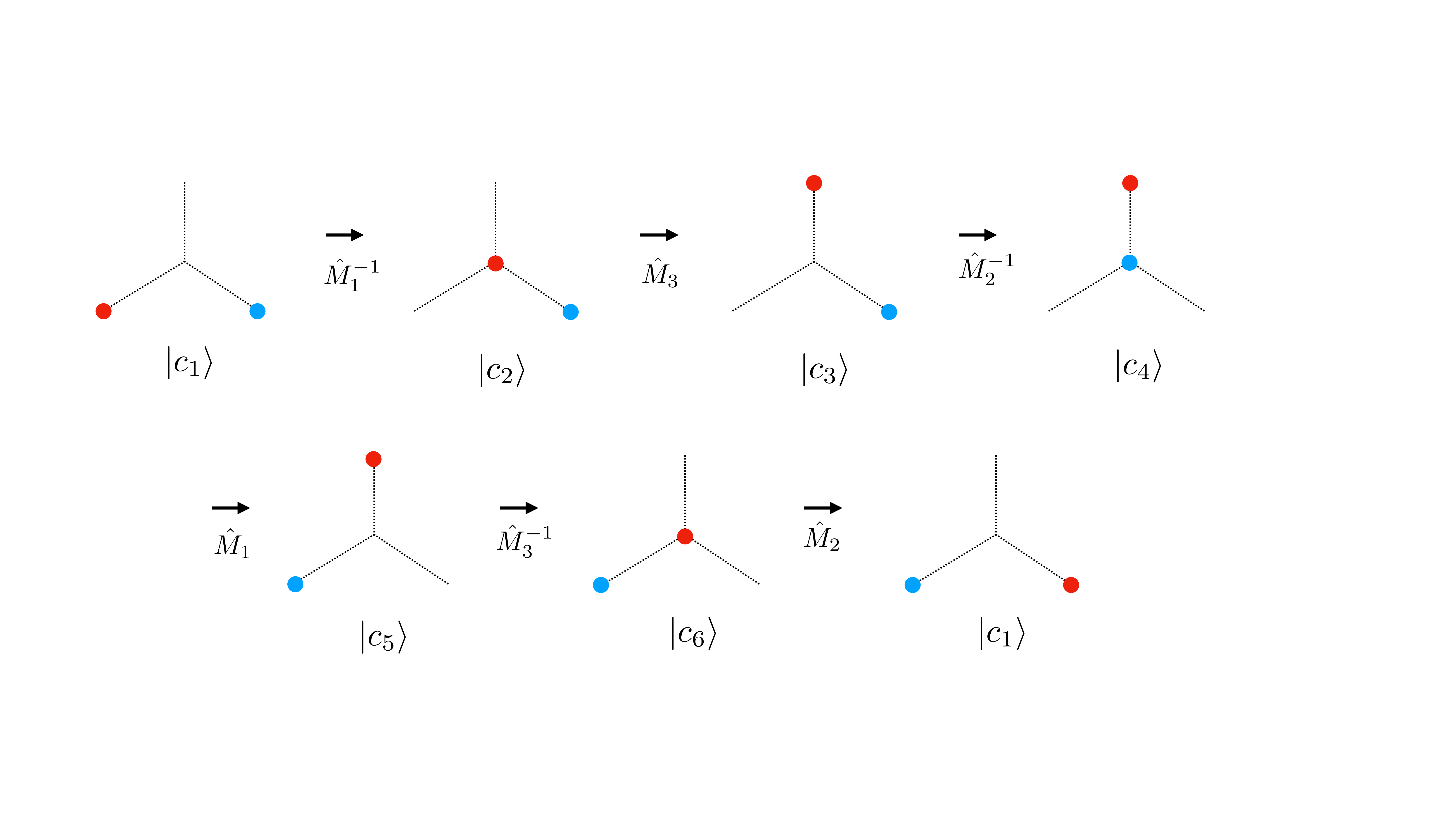}
\caption{T-junction process used to measure statistics of identical particles}
\label{fig:T_junction}
\end{figure}

To argue that $\theta$ is well defined,
one must show that 
(i) for a given choice of the states $\{ |\cnf_i\rangle \}$, 
the phase factor~$\theta$ is independent of the choice of the $M_i$, and 
(ii) $\theta$ is independent of the choice of $\{ |\cnf_i\rangle \}$.
By ``choice of $\{ |\cnf_i\rangle \}$'' we mean a potentially different set of states $\{ |\cnf_i'\rangle \}$ 
where all of the $|\cnf_i'\rangle$ have the same topological charges as the $|\cnf_i\rangle$ locally, 
but may differ by some topologically trivial excitations; 
for example, the locations of the quasiparticles may have moved slightly.

To prove the first statement, let us take, for a fixed set of $\{ |\cnf_i\rangle \}$, 
a different choice of string operators $M_i'$.  
Then we must have $M_i' = F_i M_i$, where $F_i$ is a shallow circuit.
$F_1$ has the property that $F_1|\cnf_1\rangle = \alpha |\cnf_1\rangle$, $F_1|\cnf_5\rangle = \alpha' |\cnf_5\rangle$.
Furthermore, $|\cnf_5\rangle = U|\cnf_1\rangle$ 
where $U$ is a shallow circuit supported away from the support of $F_1$.
Being supported on disjoint spatial regions, 
$U$ and $F_1$ commute, so that $\alpha = \alpha'$.  
A similar argument applies to $F_2$ and $F_3$, so that:
\begin{align*}
F_1|\cnf_1\rangle &= \alpha|\cnf_1\rangle, \quad
F_1|\cnf_5\rangle =\alpha|\cnf_5\rangle \\
F_2|\cnf_3\rangle &= \beta|\cnf_3\rangle,\quad
F_2|\cnf_1\rangle = \beta|\cnf_1\rangle \\
F_3|\cnf_5\rangle &= \gamma|\cnf_5\rangle,\quad
F_3|\cnf_3\rangle =\gamma|\cnf_3\rangle
\end{align*}
Thus,
\begin{align*}
M'_2 &\left({M'}_3\right)^{-1} M'_1 \left({M'}_2\right)^{-1} M'_3 \left({M'}_1\right)^{-1}|\cnf_1\rangle \\
&= F_2 M_2 M_3^{-1} F_3^{-1} F_1 M_1 M_2^{-1}F_2^{-1} F_3 M_3 M_1^{-1} F_1^{-1} |\cnf_1\rangle \\
&= \beta M_2 M_3^{-1} \gamma^{-1}\alpha M_1 M_2^{-1} \beta^{-1}\gamma M_3 M_1^{-1} \alpha^{-1} |\cnf_1\rangle \\
&= M_2 M_3^{-1} M_1 M_2^{-1} M_3 M_1^{-1}|\cnf_1\rangle
\end{align*}
so we get the same value of $\theta$.

To prove the second statement, 
suppose we have a different set of configuration states $\{ |\cnf_i'\rangle\}$. 
Then clearly there exists a shallow circuit~$V$ 
such that $|\cnf_i'\rangle = V |\cnf_i\rangle$ 
(just take one that moves the quasiparticles from their old positions to their new positions);
conjugating~$M_i$ by~$V$ gives a set of string operators for $\{ |\cnf_i'\rangle\}$,
and the composition of these new string operators used in computing~$\theta$
is simply the $V$ conjugate of the old composition (note that conjugation commutes with taking inverses).
Thus the exchange phase stays the same.

We will now use a similar procedure to define self-exchange statistics of loop excitations 
in $3$~spatial dimensions, 
when the gauge charge is a fermion.

\subsection{Data used to define the loop self statistics in $3+1$d} \label{subsec:data}

\begin{figure}
\centering
\includegraphics[width=6.5in]{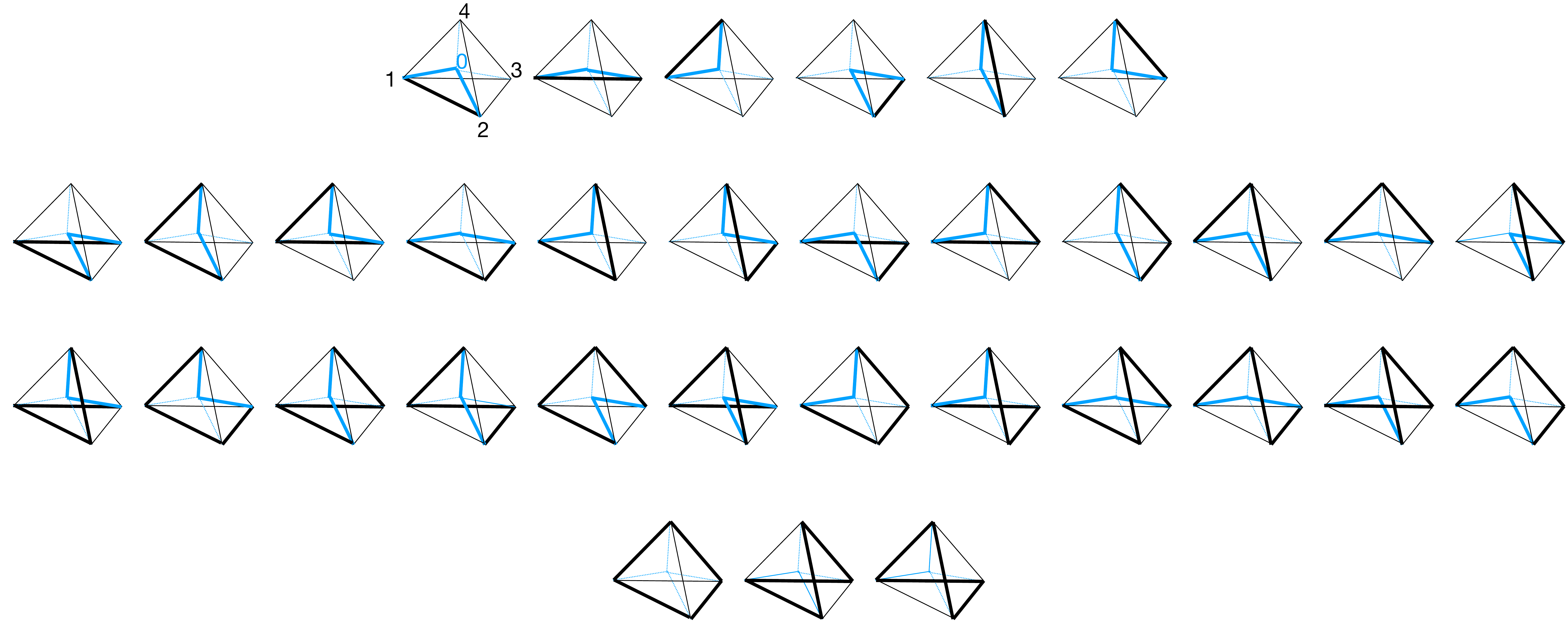}
\caption{
	All $33$ loop configurations.  
	The length scale is much longer than the correlation length.  
	Whenever two of these configurations look identical in some local region, 
	we require that their reduced density matrices in this local region be identical.  
	This in particular means that there are $6 =  {4 \choose 2}$ different reduced density matrices 
	in the neighborhood of each vertex. 
	We have colored the interior edges blue and the outside edges black for clarity.  
	The bottom three configurations appear twice in the sequence defining our invariant; 
	all other configurations appear exactly once.
}
\label{fig:all_configurations_figure}
\end{figure}

Before delving into the details of the process used to define the loop self statistics, let us make some general remarks.  The process will move a loop excitation in such a way that its final position is the same as its initial position, but the orientation along the loop is reversed.  This means that the spacetime history of this process, with periodic boundary conditions identifying the initial and final state of the loop, is a Klein bottle.  We believe that the resulting invariant is the same as the Klein invariant of Freyd and Reutter, defined in section 3.3 or \cite{Freyd_Reutter}.  We emphasize though that the key feature of the process defined below is that it defines a manifestly universal quantity in the lattice quantum many-body system, independent of arbitrary choices of the string movement operators.  This is the reason for the large number of seemingly un-motivated steps in the process.

Our geometry is now a tetrahedron with vertices $1,2,3,4$, together with a central vertex $0$.
Again, all length scales are much longer than the correlation length.
Consider $33$ loop configurations~$\cnf$ illustrated in \cref{fig:all_configurations_figure}.
The first piece of data we will need for defining our loop self-statistics 
is a corresponding set of $33$ states~$|\cnf\rangle$, 
where occupied edges of~$\cnf$ form a $\ZZ_2$ gauge flux loop.
We demand the following property of the $\{|\cnf\rangle \}$: 
if two configurations~$\cnf$ and~$\cnf'$ look the same locally, 
then the reduced density matrices of $|\cnf\rangle$ and $|{\cnf'}\rangle$ in that local region are identical.
More precisely, if $\cnf$ and $\cnf'$ both have the same edge occupied, or both have it unoccupied, 
then the reduced density matrices of $|\cnf\rangle$ and $|{\cnf'}\rangle$ 
in a neighborhood of the interior of that edge (not including its endpoints) are identical.
Also, if $\cnf$ and $\cnf'$ have the same two edges adjoining a given vertex occupied, 
or both have all edges adjoining that vertex unoccupied, 
then the reduced density matrices of $|\cnf\rangle$ and $|{\cnf'}\rangle$ 
in a neighborhood of that vertex are identical 
(in the latter case, being just the ground state density matrix).


To produce such states $|\cnf\rangle$, 
we can act with membrane operators on various plaquettes $(ij0)$ 
to produce the desired configuration of $\ZZ_2$ gauge flux loops, 
then act locally in the neighborhoods of the various edges $(i0)$ to remove local excitations, 
and then finally do the same in the neighborhoods of the various vertices.  
There is a possible obstruction that may potentially arise in this last step: 
there may be extra $\ZZ_2$ gauge charges stuck at the vertices, 
relative to the desired configuration.  
We will assume that these can always be removed by acting with gauge charge string operators along the various edges.
It is possible that such an obstruction never arises, 
but we do not prove this here.  
We note that this obstruction certainly does not arise in the two universality classes we are interested in in this paper,
namely that of the trivial $3+1$d fermionic toric code (FcBl), and the anomalous one (FcFl), 
as can be explicitly seen from the exactly solved model presented below.

The other piece of data we need for defining our loop self-statistics 
are the membrane operators~$M_{ij}$ ($i<j$), 
which nucleate a $\ZZ_2$ gauge flux loop around the plaquette $(ij0)$.  
We impose the following conditions on $M_{ij}$.  
First, we demand that $M_{ij}$ is a shallow circuit of local unitaries, 
with Lieb--Robinson length possibly longer than the correlation length 
but much shorter than the length scales associated with our tetrahedron. 
It could also be a circuit with tails, {\em i.e.}, a short time evolution of a quasi-local time dependent Hamiltonian.
Second, suppose that $\cnf$ is one of our $33$ configurations, 
with edge $(ij)$ unoccoupied, and that $\cnf'$ differs from $\cnf$ 
precisely in the occupation numbers of the three edges $(ij)$,$(i0)$,$(j0)$.  
Then, if $\cnf'$ is also one of our $33$ allowed configurations, we demand that
\begin{align} 
|\cnf'\rangle \langle \cnf'| = M_{ij} |\cnf\rangle \langle \cnf| M_{ij}^{-1}  \label{eq:Mcondition}
\end{align}
A construction of a particular set of $M_{ij}$ satisfying these conditions 
is illustrated in \cref{fig:F_move_figure2}.

\begin{figure}
\centering
\includegraphics[width=4.0in]{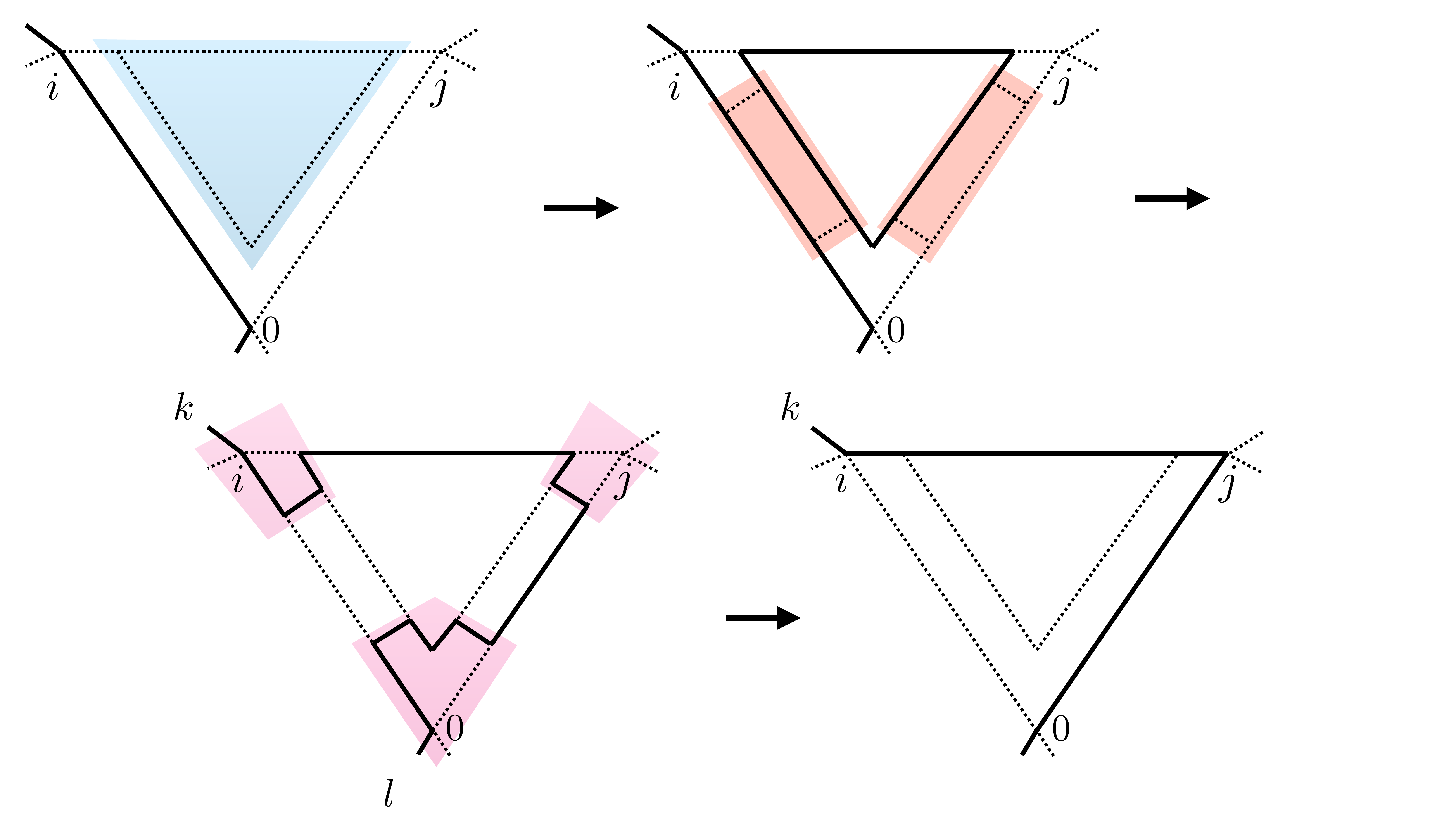}
\caption{
	A construction of an operator $M_{ij}$ satisfying the requisite conditions, 
	described in \cref{subsec:data}.  
	In this construction, $M_{ij}$ is obtained as the composition of a face operator (blue),
	which nucleates a loop of gauge flux in the interior of the plaquette, 
	with controlled shallow circuits, 
	which splice this loop into the perimeter of the plaquette.
	The edge operators (red) are controlled by the occupation numbers of the corresponding edges, 
	and the vertex operators (purple) are controlled by the local configurations near the corresponding vertices.  
	All distances in the figure are much greater than the correlation length.  
	This process is designed in such a way that \cref{eq:Mcondition} is satisfied, 
	as can be checked by comparing the local reduced density matrices 
	on both sides of this \cref{eq:Mcondition}.
}
\label{fig:F_move_figure2}
\end{figure}

\subsection{Definition of the loop self statistics $\mu$} \label{subsec:indicator}

The loop statistics are defined using the process illustrated in \cref{fig:invariant_figure}.
There is some arbitrariness in the choice of initial configuration, 
but, for concreteness, we begin with a starting configuration $\cnf_1$, 
consisting of the edges $(14),(10),(40)$ being occupied 
(the top left configuration in \cref{fig:invariant_figure}).
Then, we apply to $|\cnf_1\rangle$ a sequence of $36$ membrane operators or their inverses.
Specifically, in each step, one of two possibilities occurs: 
1) an edge $(ij)$ changes from being unoccupied to being occupied or 
2) an edge $(ij)$ changes from being occupied to being unoccupied.
We act with $M_{ij}$ and $M_{ij}^{-1}$ in these two cases respectively.
In this way we generate a sequence of states~$|\cnf_j\rangle$, $j=1,2,\ldots, 37$, with~$\cnf_{37}=\cnf_1$, 
{\em i.e.},
the configuration comes back to itself at the end.  
However, the orientation along the loop reverses at the end of the process, 
as illustrated in \cref{fig:invariant_figure}.
Note that since for each step in the process 
there is always an occupied edge $e$ which is not touched by the membrane operator acting at that step, 
we can consistently define what it means for the orientation to not change during a step 
by requiring that the orientation along $e$ be fixed.
Note also that this orientation is not in any way a physical observable, 
{\em i.e.,} $|\cnf_1\rangle = |\cnf_{37}\rangle$.

\begin{figure}[b]
\centering
\includegraphics[width=6.5in, trim={0ex 50ex 0ex 0ex}, clip]{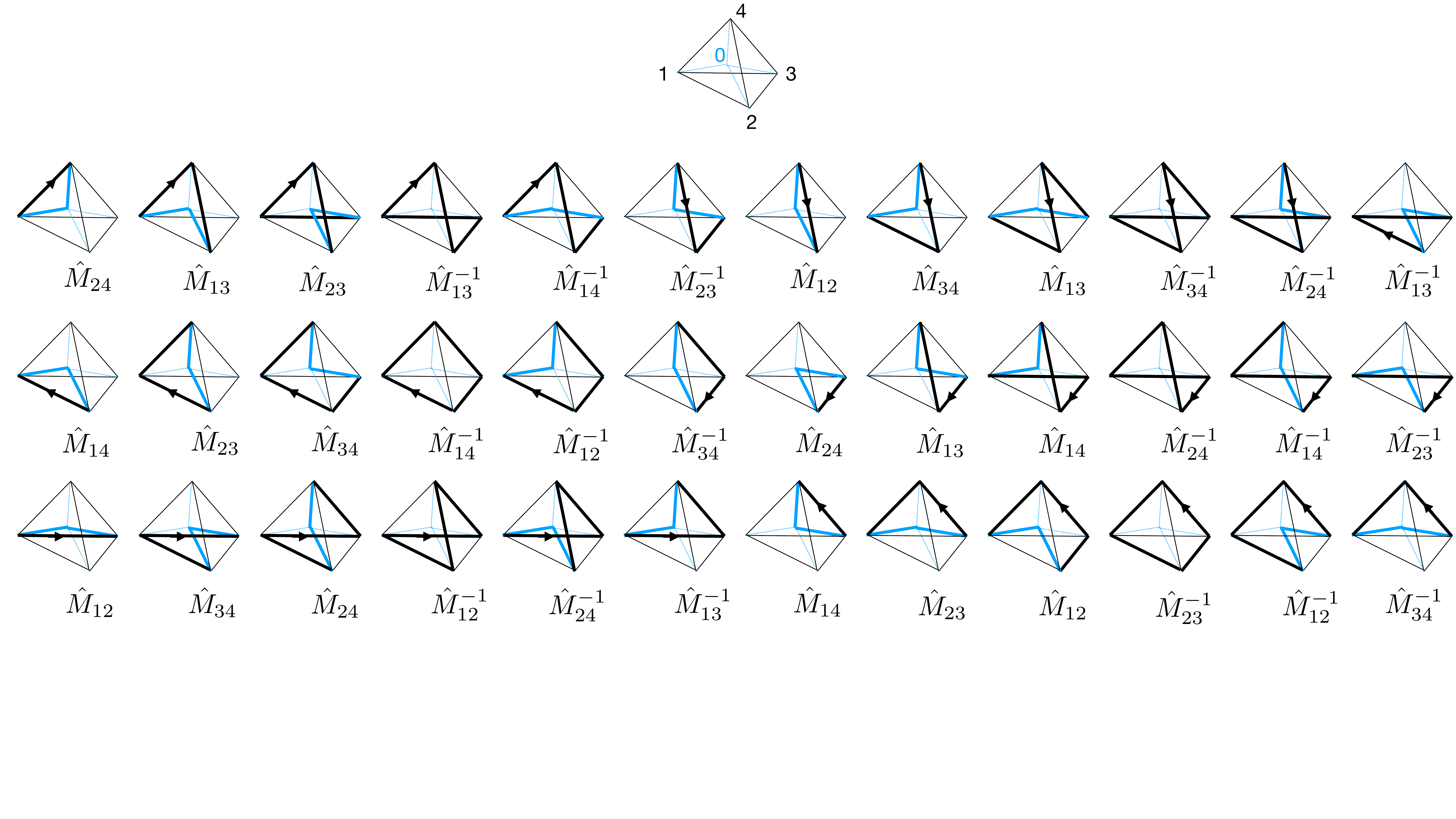}
\caption{
	Sequence of moves used in defining the loop self-statistics.  
	We start with the configuration $|\cnf_1\rangle$ in the upper left, 
	and, at each step, apply the operator below the configuration to obtain the next configuration 
	(reading left to right and up to down).  
	This sequence of operators is written as a product in \cref{eq:defhI}.
	Note that the loop comes back to itself at the end, 
	but the orientation along it reverses.  
	At the top we illustrate our labeling scheme for the vertices of the tetrahedron.
}
\label{fig:invariant_figure}
\end{figure}

Concretely, the process in \cref{fig:invariant_figure} is implemented by the following operator:
\begin{align} \label{eq:defhI}
\bM = &M_{34}^{-1}M_{12}^{-1}M_{23}^{-1}M_{12}M_{23}M_{14} M_{13}^{-1}M_{24}^{-1}M_{12}^{-1}M_{24}M_{34}M_{12}\\ \nonumber
 &M_{23}^{-1}M_{14}^{-1}M_{24}^{-1}M_{14}M_{13}M_{24} M_{34}^{-1}M_{12}^{-1}M_{14}^{-1}M_{34}M_{23}M_{14} \\ \nonumber
&M_{13}^{-1}M_{24}^{-1}M_{34}^{-1}M_{13}M_{34}M_{12} M_{23}^{-1}M_{14}^{-1}M_{13}^{-1}M_{23}M_{13}M_{24}
\end{align}
We define the loop self-statistics $\mu$ by
\begin{align} \label{eq:defI}
\bM |\cnf_1\rangle = \mu |\cnf_1\rangle .
\end{align}

\subsection{$\mu$ is well defined when the gauge charge is a fermion.}

We have to check that $\mu$ is well defined, 
{\em i.e.,} $\mu$ does not depend on the various arbitrary choices made above.
First, we will check that, for a given fixed set of~$\{|\cnf\rangle \}$, 
$\mu$ is independent of the particular choices of~$M_{ij}$.
Second, we will check that, when the gauge charge is a fermion, 
$\mu$ is independent of the choice of the $\{ |\cnf\rangle \}$.
From these two facts, 
it will follow that $\mu$ is an invariant of the phase, 
rather than just of a particular Hamiltonian.
The ground states of two Hamiltonians in the same phase can be related by a shallow circuit
(or, more generally, a short time evolution of a quasi-local pseudo-Hamiltonian evolution),
and conjugating by this circuit allows us to turn membrane operators~$M_{ij}$ 
associated with one Hamiltonian into those associated with the other,
and the latter are a valid choice.
Since the operator $\bM$ also ends up being conjugated, 
its eigenvalue $\mu$ does not change.

\subsubsection{$\mu$ against choices of $M_{ij}$}

Consider two different choices of membrane operators, 
$\{M_{ij}\}$ and $\{M'_{ij}\}$, 
and suppose that $|\cnf\rangle$ is a configuration state 
that is acted on by $M_{ij}$, 
or is the result of acting with $M_{ij}^{-1}$, in the expression for $\bM$.  
Then we claim that
\begin{align} \label{eq:hMprime}
	M'_{ij}|\cnf\rangle = u_{ij}^i(\cnf) u_{ij}^j(\cnf) u_{ij}^0(\cnf) M_{ij}|\cnf\rangle
\end{align}
where $u_{ij}^i(\cnf)$, $u_{ij}^j(\cnf)$, and $u_{ij}^0(\cnf)$ 
are $U(1)$ phases that depend only on the local occupation numbers,
in the configuration $\cnf$, 
of the edges that end at vertices $i,j$, and $0$, respectively.
We prove~\cref{eq:hMprime} in~\cref{app:pfMprime}.
Intuitively, it just states that the phase ambiguity associated to $M_{ij}$ 
depends locally on the configuration $|\cnf\rangle$ that $M_{ij}$ is acting on.

Now, let us compute the loop self-statistics~$\mu$
associated with the new set of membrane operators~$M'_{ij}$, 
{\em i.e.},
insert~$M'_{ij}$ in place of~$M_{ij}$ in the definition of~$\bM$ in~\cref{eq:defhI}
and act on~$|\cnf_1\rangle$, as in~\cref{eq:defI}. 
This expression is unaltered by inserting projectors $|\cnf_i\rangle\langle \cnf_i|$, 
for an appropriate~$i$,
between any two of the operators appearing in the expression for $\bM$, 
since, by~\cref{eq:Mcondition}, the operators $M'_{ij}$ just map between the configurations~$|\cnf_i\rangle$, 
up to an overall $U(1)$ phase.
We then use~\cref{eq:hMprime} or its conjugate
\begin{align} \label{eq:hMprimeconjugate}
	\langle \cnf | \left(M'_{ij}\right)^{-1} 
	=
	{\overline{u_{ij}^i}}(\cnf) {\overline{u_{ij}^j}}(\cnf) {\overline{u_{ij}^0}}(\cnf) \langle \cnf | M_{ij}^{-1}
\end{align}
on every one of the $36$ operators that appears in the expression.
We thus see that the new value of~$\mu$, computed using the $M'_{ij}$, 
is equal to the old value, computed using the $M_{ij}$, 
times a product of various $U(1)$ phases $u_{ij}^i(\cnf)$, $u_{ij}^j(\cnf)$, $u_{ij}^0(\cnf)$, and their complex conjugates.
We show in \cref{sec:app_details2} that these phases cancel,
{\em i.e.},
each phase appears the same number of times as its complex conjugate.
The core reason for this is that the expression in~\cref{eq:defhI} for~$\bM$ 
was engineered in such a way that, locally near each vertex,
each change in the configuration appears exactly the same number of times as its inverse.

\subsubsection{$\mu$ against choices of $\{ |\cnf\rangle \}$ when the gauge charge is a fermion}
\label{sec:againstChargeDecoration}

Now let us examine the dependence of~$\mu$ on the choice of~$\{ |\cnf\rangle \}$.  
First, note that if $U$ is a shallow circuit of local unitaries, 
then replacing~$\{ |\cnf\rangle \}$ with~$\{U |\cnf\rangle \}$ does not change~$\mu$.  
Indeed, we can simply conjugate the operators~$M_{ij}$ by~$U$ 
to obtain valid membrane operators for~$\{U |\cnf\rangle \}$.
Now suppose that $\{ |\cnf'\rangle \}$ is some other arbitrary set of configuration states, 
satisfying all of the properties in \cref{subsec:data}, 
and let us try to deform~$\{ |\cnf\rangle \}$ into~$\{ |\cnf'\rangle \}$.  
By possibly repeatedly applying shallow circuits,
we can bring~$\{ |\cnf\rangle \}$ near~$\{ |\cnf'\rangle \}$, 
and by applying shallow circuits on neighborhoods of the edges 
we can ensure that the reduced density matrices of~$\{ |\cnf\rangle \}$ and~$\{ |\cnf'\rangle \}$ 
on occupied edges are identical.  
However, there is a potential obstruction to the most natural way of completing this deformation on the vertices, 
since it may be the case that the disagreement near the vertices is by a gauge charge.  
In this case, there is no local unitary, acting near the vertices, that connects the two states, 
since they lie in different topological superselection sectors.%
\footnote{
	We exclude here the possibility of Cheshire charge~\cite{Cheshire},
	since our interest will be mostly in the case of fermionic gauge charges, which cannot condense on loops.
}  
Of course, by modifying the unitary on the edges by gauge charge string operators 
it may be possible to get rid of these extra gauge charges and connect the two sets of configurations.

To investigate this question, 
let us therefore consider the topological (shallow-circuit)
equivalence classes of states~$\{ |\tc \rangle \}$ 
which agree with~$\{ |\cnf\rangle \}$ on the interiors of the edges but may disagree on the vertices.
We may imagine that the states~$\{ |\tc \rangle \}$ are just the states $\{ |\cnf\rangle \}$, 
but with additional gauge charge decoration $x_v(\cnf) = 0,1$ at the various vertices $v=0,1,2,3,4$, 
which depends on the configuration~$\cnf$ only through the local portion of~$\cnf$ near~$v$.
This means that $x_v$ must be a function of the occupancy numbers on incident edges.
Such a function is a polynomial in binary variables~$y_j$ where $j$ ranges over all vertices other than~$v$, 
{\em e.g.}, $x_{v=0} = y_1 + y_2 + y_1 y_2 \in \ZZ_2[y_1,y_2,\ldots]/ (y_j^2 + y_j)$.
Since the number of occupied incident edges is always~$0$ or~$2$ in our set of configurations,
the occupancy variables obey the condition that~$y_j y_k y_\ell = 0$ for any distinct $j,k,\ell$
and~$\sum_j y_j = 0$.
In addition, if $y_j=0$ for all $j$, then $x_v$ must be zero;
with no incident flux tube near~$v$, we can detect a gauge charge at~$v$.
So, $x_v$ has no constant term.
Hence, the most general function~$x_v(y_j)$ is
a $\ZZ_2$-linear combination of quadratic functions;
a linear function $x_v = y_j$ is equal to $x_v = y_j^2 = y_j(\sum_{k \neq j} y_k)$.
Therefore, we introduce coefficients~$[avb] = [bva] \in \ZZ_2$ 
associated with each corner at~$v$ in a triangle~$avb$ so that
\begin{align}
	x_v(\cnf) &= \sum_{\text{triangle }avb} [avb] \cdot \delta_{avb}(\cnf)\label{eq:xv}\\
\text{where }	\delta_{avb}(\cnf) &= (\text{quadratic function~}y_a y_b) =
		\begin{cases} 
			1 & \text{if both edges $av$ and $vb$ are occupied in }\cnf,\\
			0 & \text{otherwise}.
		\end{cases}\nonumber
\end{align}
Such a decoration is subject to the consistency condition 
that every loop contain an even number of $\ZZ_2$ gauge charges.
This requirement translates to conditions on the coefficients~$[avb]$:
\begin{align}
	[jk0] + [k0j] + [0jk] &= 0 \label{eq:jk0}\\
	[ijk] + [jkl] + [kli] + [lij] &= 0 \label{eq:ijkl}
\end{align}
where $i,j,k,l \in \{1,2,3,4\}$ are distinct.
\Cref{eq:jk0} comes from loops around triangles
and \cref{eq:ijkl} from loops over four edges in~\cref{fig:all_configurations_figure}.
We have some loop configurations that occupy $5$ edges,
but we will see that we do not need to impose another set of constraints from these configurations.
Assuming this system of equations, we prove in~\cref{app:consistency} that
\begin{align}
	t = [avb] + [bvc] + [cva] \in \ZZ_2 \label{eq:constt}
\end{align}
is a constant independent of any distinct~$a,b,c,v$.

Next, we show that a gauge charge decoration~$x_v$ that is determined by a consistent set of coefficients~$[abc]$,
is realized by modified membrane operators.
For clarity, we may imagine that these gauge charges are offset from the vertices by a common spatial vector
which is long compared to the correlation length but short compared to the size of the tetrahedron.
Then, the modified membrane operators are
\begin{align} \label{eq:modified}
	\tilde M_{ij} = M_{ij} S_{0i}^{ [0ij] + t\cnf(0i) } S_{0j}^{ [0ji] + t\cnf(0j) } S_{ij}^{t\cnf(ij)}
\end{align}
where $S_{ab}$ is a string operator inserting a gauge charge at~$a$ and another at~$b$.
The appearance of the edge occupation number~$\cnf(ab) = \cnf(ba) = 0,1$ in the exponents
is a shorthand for~$S_{ab}$ being controlled on whether edge~$ab$ is occupied.
Let us explain why this modification realizes the charge decoration.
We have to check that the changes in charge decoration numbers 
are in accordance with those given by~$x_v$.
Upon the action by $\tilde M_{ij}$, 
the charge decoration number changes by
\begin{align}
	\Delta x_0 &= [0ij] + t\cnf(0i) + [0ji] + t\cnf(0j) 
			= [i0j] + t\cnf(0i) + t\cnf(0j) & \text{by \cref{eq:jk0},}\nonumber\\
	\Delta x_i &= [0ij] +  t\cnf(0i) + t\cnf(ij),\\
	\Delta x_j &= [0ji] + t\cnf(0j) + t\cnf(ij).\nonumber
\end{align}
This may be summarized as
\begin{align}
	\Delta x_v = [avb] + t\cnf(av) + t\cnf(vb).
\end{align}
On the other hand, 
if we toggle the edge occupation numbers on the three edges of a triangle~$avb$, 
then \cref{eq:xv} implies that
\begin{align}
	\Delta x_v 
	&= [avb] \Delta \delta_{avb} + 
		\sum_{k \neq a,v,b}  
			\left([avk] \Delta \delta_{avk} + 
			[kvb] \Delta \delta_{kvb}\right)
	\\
	&= [avb] \big((\cnf(av) + 1)(\cnf(vb) + 1) - \cnf(av)\cnf(vb) \big) +
		\sum_{k \neq a,v,b} 
			\left([avk] \cnf(vk) + [kvb] \cnf(kv)\right)\nonumber\\
	&= [avb](\cnf(av) + \cnf(vb) + 1) + \sum_{k \neq a,v,b}([avk]+[kvb])\cnf(kv) \nonumber\\
	&= [avb](\cnf(av) + \cnf(vb) + 1) + \sum_{k \neq a,v,b}(t + [avb])\cnf(kv) &\text{by \cref{eq:constt}} \nonumber\\
	&=[avb](\cnf(av) + \cnf(vb) + 1) + (t + [avb])(\cnf(av)+\cnf(bv))&\text{for } \sum_{l\neq v} \cnf(lv) = 0\nonumber\\
	&=[avb] + t\cnf(av) + t\cnf(bv).\nonumber
\end{align}
This completes the proof that our modified membrane operator realizes a given charge decoration.
Since charge decoration is realized by some string operators
whose end points are at vertices where flux loop passes through,
the requirement that any loop configurations occupying $5$ edges must have an even number of gauge charge decorations,
which we did not impose when we solved~\cref{eq:jk0,eq:ijkl},
is automatically satisfied.

In~\cref{app:consistency} we find that 
there are exactly two classes of solutions of~\cref{eq:jk0,eq:ijkl},
distinguished by the constant~$t=0,1$ of~\cref{eq:constt}.
When $t=0$, there is a new set of coefficients~$[ab] = [ba] \in \ZZ_2$,
where each~$[ab]$ is associated with an edge~$ab$, 
such that $[abc] = [ab] + [bc]$.
Then, the modified membrane operators are\footnote{When applied to our calculation of $\mu$, the vertex $a=0$.}
\begin{align}
	\tilde M_{\text{triangle }abc} 
	&= 
	M_{\text{triangle }abc} \, S_{ab}^{[ab]+[bc]} S_{ac}^{[ac] + [cb]}\\
	&\cong
	M_{\text{triangle }abc} \, S_{ab}^{[ab]} S_{ac}^{[ac]} S_{bc}^{[bc]}\nonumber\\
	&=
	U \,M_{\text{triangle }abc} \,U^\dagger \nonumber\\
\text{where }	U &=
	\prod_{\text{edge }yz} S_{yz}^{[yz] \cnf(yz)} .
\end{align}
Here, $\cong$ in the second line denotes 
modification of the membrane operator by $(S_{ab}S_{ac}S_{bc})^{[bc]}$; this modification does not affect $\mu$ due to~\cref{eq:hMprime}.
The appearance of $\cnf(yz)$ in the exponent of~$U$
means that the operator $S_{yz}^{[yz]}$ is controlled on the edge occupation on~$yz$.
Since $S_{yz}$ is a string-like shallow quantum circuit,
$U$ is also a shallow quantum circuit.
Therefore, any charge decoration with $t=0$ does not affect~$\mu$.

Since any two solutions with $t=1$ differ by a $t=0$ solution, it remains to confirm that $\mu$ is unaffected under any one particular $t=1$ solution.
One such $t=1$ solution is given by $[012] = 1$, $[034] = 1$, 
$[0jk] = 0$ for all other distinct $j,k \in \{1,2,3,4\}$,
and $[ijk] = 1 + [0jk] + [0ji]$ for all distinct $i,j,k \in \{1,2,3,4\}$.
It is straightforward to see that, for this charge decoration, the new value of~$\mu$, calculated from the decorated~$\tilde M_{ij}$ membrane operators, will differ by at most a sign from the old value.
Indeed, this sign difference has two contributions: 
(i)~the anticommutation of the gauge string operators with the membrane operators which they intersect, 
and 
(ii)~in the case of the gauge charges being fermions, 
the anticommutation of the gauge string operators~$S_{i0}$ with each other 
since all of them share the vertex $0$ in common.

\begin{figure}
\centering
\includegraphics[width=2.5in]{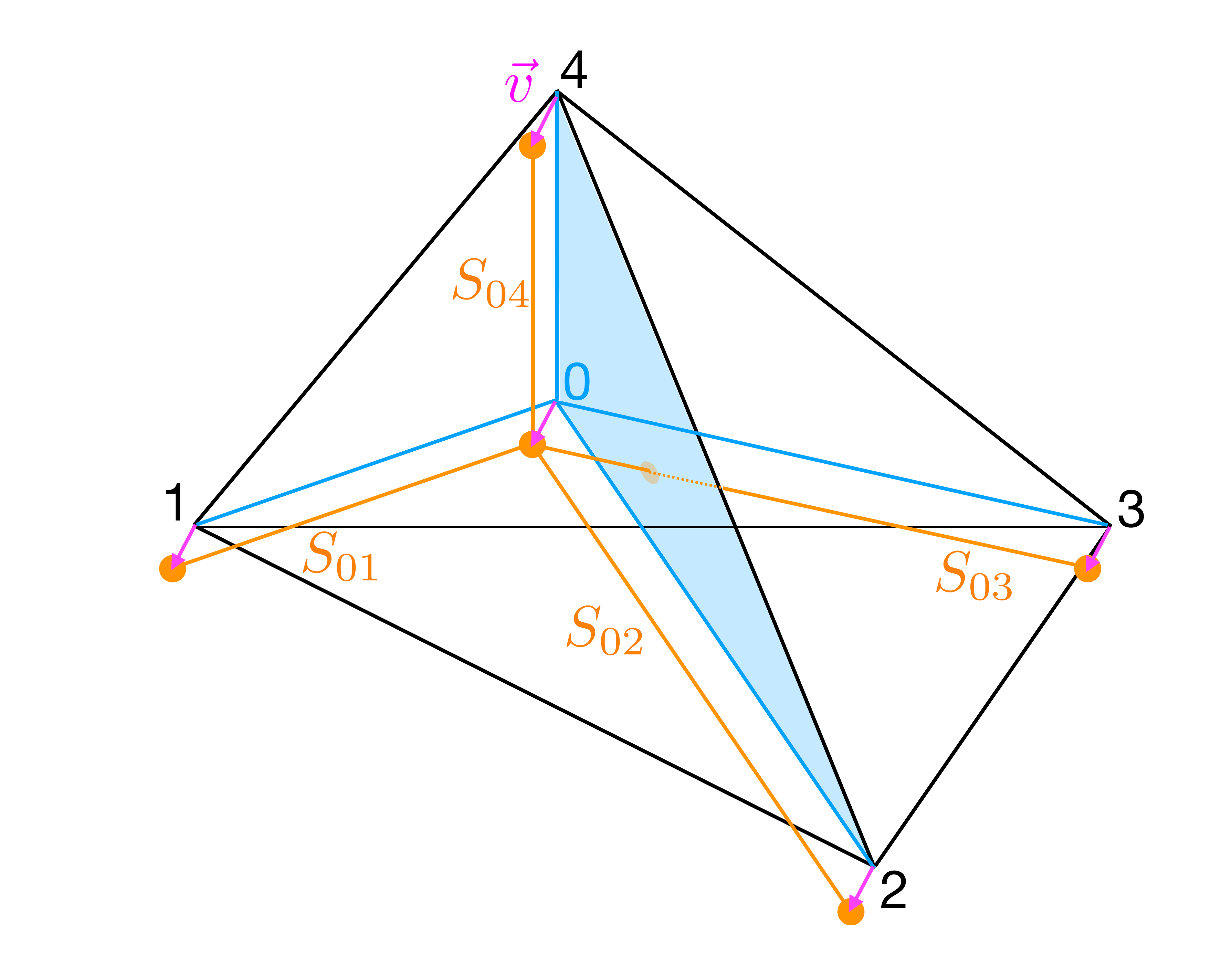}
\caption{
	There is a unique triple of distinct $j,i,k$, 
	such that the only non-trivial commutation relation is~$S_{0j}$ with~$M_{ik}$.  
	In the case shown in the figure, 
	this non-trivial commutation relation is between~$S_{03}$ and~$M_{24}$.
}
\label{fig:v_figure}
\end{figure}

\begin{figure}[b]
\centering
\includegraphics[width=6in]{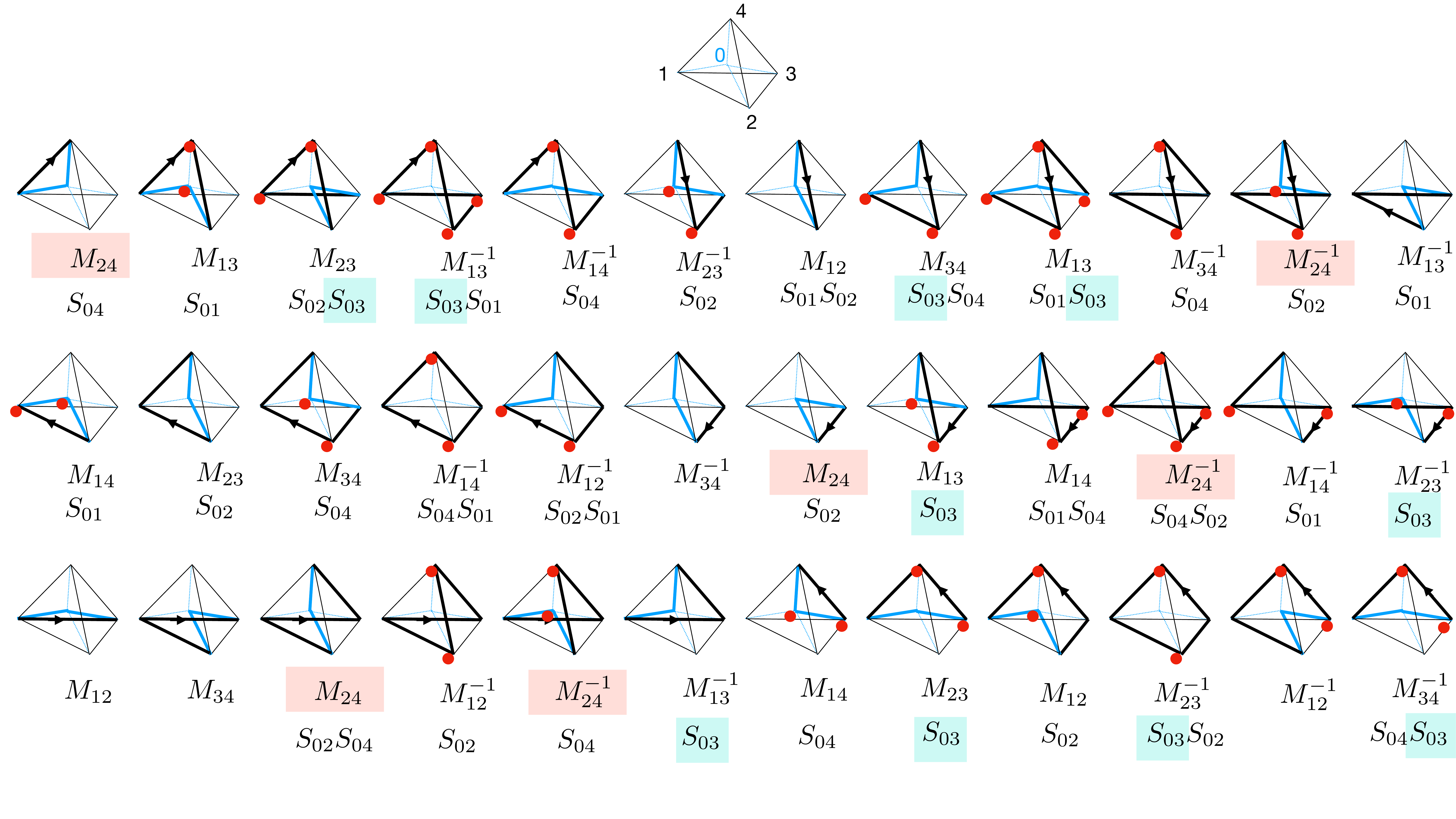}
\caption{
	The red dots indicate the gauge charges decorating the various vertices, 
	for the non-trivial $t=1$ decoration described in the text.  The red rectangles highlight the steps at which the membrane operator~$M_{24}$ or its inverse is applied.  
	The blue rectangles highlight an application of the string operator~$S_{03}$.  
	Commuting such a membrane operator past such a string operator 
	gives rise to a minus sign in a $\ZZ_2$ gauge theory.  
	As is apparent from the figure, 
	it takes an odd number of anticommutations to cancel off all of the membrane operators against each other 
	and all of the string operators against each other, 
	so the overall sign contribution from such anticommutations is $-1$.
	Note that the string operators were chosen to square to $+1$.
	Furthermore, we can assume that, in the case of fermionic gauge charges, the only pairs of anti-commuting string operators are $(S_{01},S_{02})$ and $(S_{03},S_{04})$ (one can always get this from any other choice by multiplying some of the string operators by gauge charge detection operators at~$0$).  We can check from the figure that the sign from anti-commuting $S_{01},S_{02}$ past each other, in such a way as to cancel all of these operators, is~$1$, whereas the sign from anti-commuting $S_{03},S_{04}$ past each other is~$-1$, leading to an overall sign of~$-1$ due to the fermionic statistics of the gauge charges.}
\label{fig:decorated3}
\end{figure}

Consider the first contribution.
Note that for a generic displacement~$\vec v$, 
there will be a unique triple of distinct~$j,i,k$, 
such that the only nontrivial commutation relation is~$S_{j0}$ with~$M_{ik}$, 
as illustrated in~\cref{fig:v_figure} with~$(ik)=(24)$ and~$j=3$.
As we check in~\cref{fig:decorated3}, 
these commutation relations contribute a factor of~$-1$ to the loop statistics.

The second contribution, due to the statistics of the gauge charges,
is nontrivial only in the case of fermionic gauge charges,
in which case the string operators~$S_{i0}$ anticommute for certain pairs.
We find in~\cref{fig:decorated3} that the product of all of the fermionic string operators 
appearing in our decorated $36$-step process yields a factor of~$-1$.
Hence, the second contribution is~$-1$ in the case of fermionic gauge charges, 
and $+1$ in the case of bosonic gauge charges.

Thus, in the case of bosonic gauge charges~$\mu$ is not well defined, 
since its value changes by an overall factor of~$-1$ 
when the loop configurations are nontrivially decorated by gauge charges.
However, in the case of fermionic gauge charges 
the two contributions to the sign always cancel and $\mu$ is well defined.  
Furthermore, for the usual $3+1$-dimensional fermionic toric code, 
{\em i.e.}, the FcBl model, $\mu=1$, as can be checked 
by an explicit calculation in the $3+1$d Walker--Wang model 
based on the premodular category $\{1,f\}$, where $f$ is a fermion.

We show in \cref{app:mu} that, if $\mu$ is well defined, it must always be equal to $\pm 1$, 
and in the discussion we give an argument, based on some physical assumptions, 
that any stand-alone $3+1$d realization of the fermionic toric code must have $\mu=1$.
Futhermore, in the next sections, we construct a $4+1$d invertible exactly solved Hamiltonian (the ``FcFl model'')
which realizes a $3+1$d fermionic toric code with~$\mu=-1$.
Taken together, these facts imply that our FcFl model is not shallow circuit equivalent to a product state.  
On the other hand, we will see that two copies of it are shallow circuit equivalent to a product state, 
so that the FcFl model is a $\ZZ_2$-classified invertible phase of matter.


\section{Exactly solved model} \label{sec:WW}

In this section we construct a $4+1$d exactly solved Hamiltonian 
which realizes an anomalous $\mu=-1$ fermionic toric code on its boundary.
Our construction is analogous to that of Walker and Wang~\cite{Walker_2011}, in the sense that it bootstraps a boundary topological order into a bulk Hamiltonian in one higher dimension.  However, our dimensions are shifted up by one from the case discussed in~\cite{Walker_2011}, and the input data is a braided fusion $2$-category rather than a premodular category.
Because such generalizations of Walker--Wang models have not been studied before, 
we first warm up by constructing simpler exactly solved models 
which realize nonanomalous bosonic and fermionic toric codes on their respective boundaries.
As we will explicitly check, 
these simpler models will be short-range entangled, 
{\em i.e.}, their ground states on geometries with no boundary will be small-depth (shallow) circuit disentanglable.
The model realizing the anomalous fermionic toric code, on the other hand, 
will be invertible but not shallow circuit disentanglable.

A guiding principle to define explicit Hamiltonians below is as follows.
We would like the ground state to be a superposition of 
vacuum-to-vacuum processes involving
topological excitations of an input theory,
where the amplitude is the corresponding transition amplitude.
If the input theory is in $3+1$-dimensional spacetime,
then each component of our ground state wavefunction
is a picture of a dynamical process
drawn in a $4$-dimensional canvas.
The input theory has point-like and loop-like excitations 
(generally in every dimension up to space codimension~$2$),
and a picture of ours consists of worldlines and worldsheets.
Fluctuation operators of these worldmembranes 
will be generators of our picture,
and the sum of these picture generators will be our $4+1$d Hamiltonian.
Whenever the fluctuation changes the topology of worldmembranes,
the topological interaction of the input theory 
gives nontrivial change in the transition amplitude,
which must be encoded in the fluctuation operators.
The point of our construction below is
that desired fluctuation operators can be constructed
such that they all commute
and the ground state is nothing else but what we want.

We assume that we have a cellulation of a $4$-dimensional space
which refines to a triangulation.
We will use the Poincar\'e dual cellulation that refines to a triangulation as well,
where a $k$-dimensional cell of the original cellulation 
corresponds to a $(4-k)$-dimensional dual cell of the Poincar\'e dual.
The simplest choice would be the $4$-dimensional cubic lattice 
whose $0$-cells are identified with~$\ZZ^4 \subset \RR^4$.
To distinguish the two cellulations,
we call the former the \emph{primary} cellulation consisting of primary cells,
and the latter the \emph{secondary}.
As usual, the sizes of all primary and secondary cells are uniformly bounded from below and above.
The sizes of cells define our lattice spacing, the smallest length scale in our construction.

Suppose there is one qubit~($\CC^2$) on each primary $2$-cell and on each secondary $1$-cell.  These degrees of freedom can be interpreted as $3$-form and $2$-form gauge fields, respectively; a short discussion of this gauge theory interpretation is given in \cref{sec:discussion}.
In view of the guiding principle above,
the primary $2$-cells and secondary $1$-cells 
will support worldsheets and worldlines
of line-like and point-like excitations of an input theory, respectively.
\Cref{fig:terms} might be helpful 
to understand the construction of Hamiltonian terms.

\begin{figure}
\includegraphics[width=\textwidth, trim={0ex 70ex 25ex 0ex}, clip]{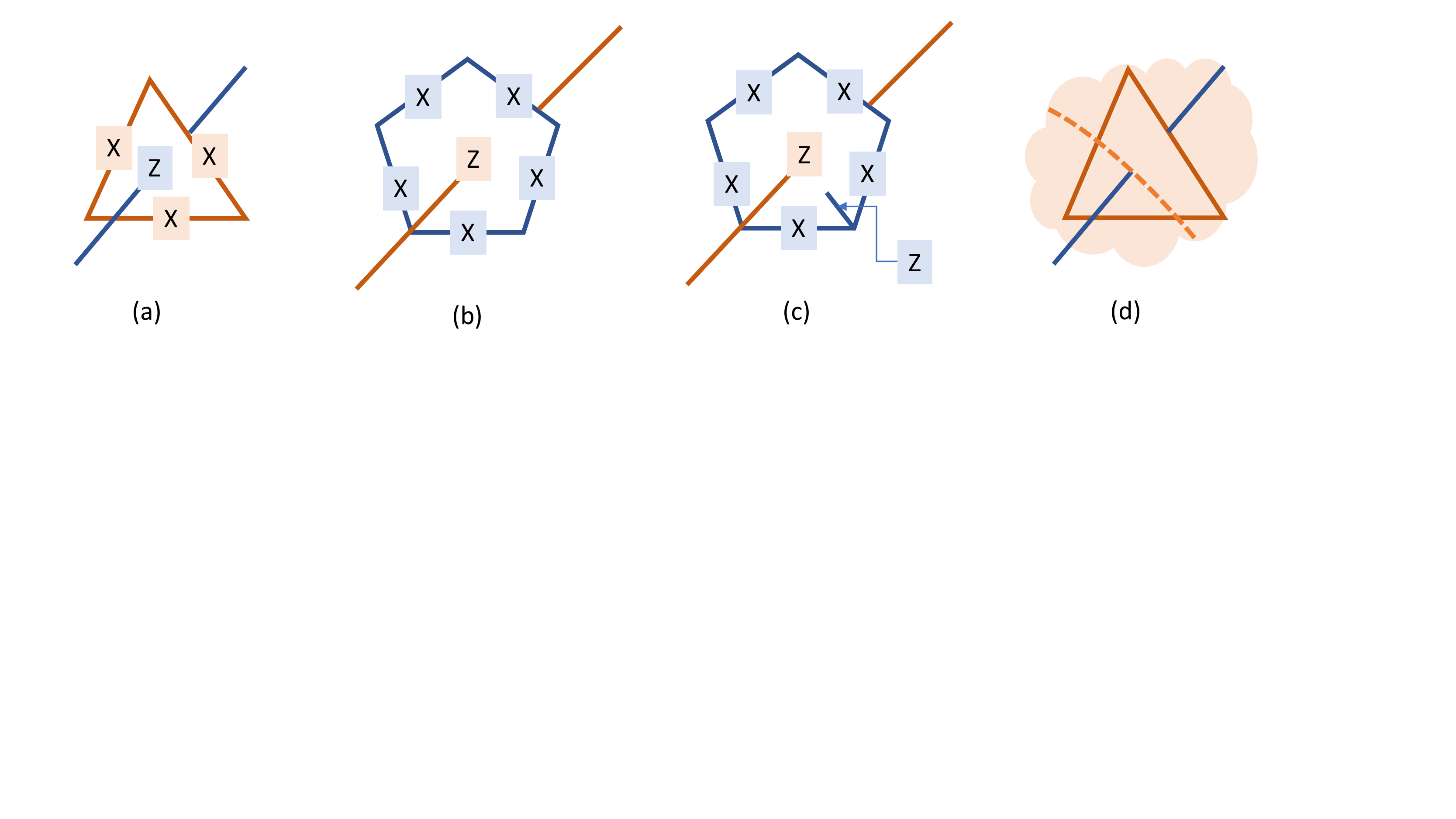}
\caption{
Red segments in the figure denote a primary $2$-cell;
they are drawn as if they were one-dimensional,
but they are two-dimensional and the fourth direction is not shown.
Blue segments are secondary $1$-cells.
($\mathsf a$)
A worldsheet fluctuation operator~$T_{Bl}$.
The $Z$ factor on the blue secondary $1$-cell, Poincar\'e dual
to the primary $3$-cell bounded by the depicted primary $2$-cycle,
captures the mutual braiding between a point-like excitation
and a line-like excitation of the input theory, 
the $3+1$-dimensional toric code.
The $X$ tensor factors implement the fluctuation of worldsheet.
($\mathsf b$)
A worldline fluctuation operator~$L_{Bc}$.
A secondary $2$-cell is not necessarily a pentagon.
($\mathsf c$)
A worldline fluctuation operator~$L_{Fc}$.
The extra $Z$ factor on the secondary $1$-cell
whose projection under~$\phi$ 
intersects the interior of the secondary $2$-cell
that bounds the secondary $1$-cycle with $X$ factors,
captures the fact that the worldline is a trajectory of a fermion.
($\mathsf d$)
This is meant to illustrate the worldsheet fluctuation operator~$T_{Fl}$,
which consists of the same operator content as~$T_{Bl}$,
and, in addition, extra nonPauli operators that captures 
the change in twisting of the orientation domain wall (orange dashed curve)
of the primary $2$-cycle.
}
\label{fig:terms}
\end{figure}

\subsection{Bosonic charge and bosonic loop excitations (BcBl)}\label{sec:bcbl}

\paragraph{Hamiltonian.}

We define a \emph{worldsheet} fluctuation operator $T_{Bl}$ for each primary $3$-cell~$e^3$, 
which is Poincar\'e dual to a secondary $1$-cell~$e_1$,
by the tensor product of Pauli~$X$ over all the primary $2$-cells $f^2 \in \bd e^3$ 
at the boundary of the $3$-cell~$e^3$ and Pauli~$Z$ on~$e_1$.
If the primary cellulation is the cubic lattice, each worldsheet fluctuation operator
is a product of $7$~Pauli operators.
In addition, we define a \emph{worldline} fluctuation operator $L_{Bc}$ for each secondary $2$-cell~$f_2$,
which is Poincar\'e dual to a primary $2$-cell~$f^2$,
by the tensor product of Pauli $X$ over all the secondary $1$-cells $\ell_1 \in \bd f_2$
at the boundary of~$f_2$ and Pauli~$Z$ on~$f^2$.
If the primary cellulation is the cubic lattice,
each worldline fluctuation operator is a product of $5$~Pauli operators.
We define a Hamiltonian $H^{WW}_{BcBl}$ to be the negative sum of 
all the worldsheet and worldline fluctuation operators.
\begin{align}
	H^{WW}_{BcBl} = - \sum_{f_2: \text{cells}} L_{Bc}(f_2) - \sum_{e^3: \text{cells}} T_{Bl}(e^3)
\end{align}
It is natural to include terms that 
enforce closedness of worldlines and worldsheets,
but they turn out to be redundant.
A worldline closedness is enforced by demanding that
the product~$\pi(v_0)$ of~$Z$ along the boundary of every secondary $0$-cell~$v_0$
should take eigenvalue~$+1$.
But this product of~$Z$ around~$v_0$ 
is equal to~$\pi(v_0) = \prod_{e^3: v_0 \in \bd e_1} T_{Bl}(e^3)$
where $e_1$ is Poincar\'e dual to~$e^3$.
Likewise, the worldsheet closedness is enforced 
by demanding that $\pi(a^1) = \prod_{f_2: a^1 \in \bd f^2} L_{Bc}(f_2)$
should take eigenvalue~$+1$
where $f_2$ is Poincar\'e dual to~$f^2$.

\paragraph{Commutativity.}

It is obvious that any pair of worldsheet fluctuation operators~$T_{Bl}$ commute
since the tensor factors on the primary qubits are all~$X$
and those on the secondary are all~$Z$.
The same is true for any pair of worldline fluctuation operators~$L_{Bc}$.
The less obvious case is when a worldsheet fluctuation operator~$T_{Bl}$ on~$e^3$
meets a worldline fluctuation operator $L$ on~$f_2$
where the Poincar\'e dual~$f^2$ of~$f_2$ is on the boundary of~$e^3$.
But $f^2 \in \bd e^3$ happens precisely 
when the Poincar\'e dual $e_1$ of $e^3$ is on the boundary of~$f_2$.
Hence, when the $X$ factor of~$T_{Bl}$ on~$f^2$ anticommutes with the $Z$ factor of~$L$,
the $X$ factor of~$L$ on~$e_1$ anticommutes with $Z$~factor of~$T_{Bl}$.
Therefore, $H^{WW}_{BcBl}$ consists of commuting terms.

\paragraph{Disentangling circuit.}

Let $U$ be a two-qubit unitary:
\begin{align}
U =
\left\{\begin{matrix}
\ket{++}\\
\ket{+-}\\
\ket{-+}\\
\ket{--}
\end{matrix}\right\}
\begin{pmatrix}
1 & 0 & 0 & 0 \\
0 & 1 & 0 & 0 \\
0 & 0 & 1 & 0 \\
0 & 0 & 0 & -1 
\end{pmatrix}
\left\{\begin{matrix}
\bra{++}\\
\bra{+-}\\
\bra{-+}\\
\bra{--}
\end{matrix}\right\},
\qquad
\begin{cases}
 U(X \otimes I) U^\dagger = X \otimes I \\
 U(I \otimes X) U^\dagger = I \otimes X \\
 U(Z \otimes I) U^\dagger = Z \otimes X \\
 U(I \otimes Z) U^\dagger = X \otimes Z 
 \end{cases}
 \label{eq:U}
\end{align}
where the braced column matrices denote the basis 
in terms of~$\ket{\pm} = \pm X\ket{\pm}$.
Note that $U$ is invariant under exchange of the two qubits in its support.
Consider a shallow quantum circuit
\begin{align}
	V = \prod_{e_1 \sim f^2} U_{e_1,f^2} \label{eq:V}
\end{align}
where~$e_1 \sim f^2$ ranges over all pairs of a secondary $1$-cell $e_1$ 
and a primary $2$-cell $f^2$ such that the Poincar\'e dual of~$e_1$ has~$f^2$ in its boundary.
This product is well defined because $U$ is diagonal in the $X$ basis.
Under conjugation by~$V$, 
every worldsheet fluctuation operator becomes a single-qubit operator~$Z$ on the associated secondary $1$-cell,
and every worldline fluctuation operator becomes~$Z$ on the associated primary $2$-cell.
Hence, $V H^{WW}_{BcBl} V^\dagger$ is the negative sum of single-qubit operator~$Z$ over all qubits in the system.
This implies in particular that the ground state of~$H^{WW}_{BcBl}$ is
the unique common eigenstate of all the worldsheet and worldline 
fluctuation operators with eigenvalue~$+1$ and is disentangled by~$V$.

In fact, the disentangling circuit is supplied by a general fact as follows.
\begin{lemma}\label{lem:disentanglingU}
Suppose there are $n$ qubits
and let $\stab = \{P_1,P_2,\ldots,P_n\}$ 
be a set of Pauli operators of form
\begin{equation}
P_j = Z(j) \otimes \bigotimes_{i \in \Gamma(j)} X(i)
\end{equation}
where the arguments denote the qubit the operator acts on
and $\Gamma(j)$ is a subset of \mbox{$\{1,2,\ldots, n\}\setminus\{j\}$}.
If $P_j$ commutes with $P_k$ for all~$j,k$,
then some product of~$U$ defined in~\cref{eq:U} maps $P_j$ to a single-qubit operator~$Z(j)$
for every~$j$.
\end{lemma}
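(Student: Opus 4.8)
The plan is to prove this by induction on $n$, peeling off one qubit at a time via a sequence of $U$-conjugations that localize each $P_j$ to $Z(j)$ while preserving the structural form of the remaining operators.

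First I would set up the inductive step. The key observation is that the operators $P_j$ have a bipartite-like structure: each $P_j$ is $Z$ on its ``home'' qubit $j$ and $X$ on a set $\Gamma(j)$ of other qubits. The commutativity condition $[P_j, P_k] = 0$ translates into a symmetry condition on the incidence data: for $j \neq k$, the operator $P_j$ has an $X$ on qubit $k$ (i.e. $k \in \Gamma(j)$) if and only if $P_k$ has an $X$ on qubit $j$ (i.e. $j \in \Gamma(k)$) --- because those are the only places where a $Z$ of one meets an $X$ of the other, and there must be an even number (zero or two) of such anticommuting overlaps; a short case check shows ``two'' is impossible unless $j \in \Gamma(k)$ and $k \in \Gamma(j)$ simultaneously, and more carefully one argues the overlap count $|\{j \in \Gamma(k)\}| + |\{k \in \Gamma(j)\}|$ is even so the two indicators are equal. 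This mutual-incidence symmetry is exactly what makes the construction work, and it mirrors the Poincar\'e-duality argument used for $H^{WW}_{BcBl}$ above.

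Next I would use the conjugation rules in~\cref{eq:U}. Recall $U(Z\otimes I)U^\dagger = Z \otimes X$ and $U(I \otimes Z)U^\dagger = X \otimes Z$, while $U$ fixes $X \otimes I$ and $I \otimes X$, and $U$ is symmetric under swapping its two qubits. Fix a qubit, say qubit $1$. For each $i \in \Gamma(1)$, conjugate by $U_{1,i}$ (the copy of $U$ acting on qubits $1$ and $i$). Conjugating $P_1 = Z(1) \otimes \bigotimes_{i\in\Gamma(1)} X(i)$ by $U_{1,i}$: the $Z(1)$ factor picks up an extra $X(i)$, which cancels the $X(i)$ already present in $P_1$; the net effect of doing this for all $i \in \Gamma(1)$ is to reduce $P_1$ to $Z(1)$. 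One must then check what happens to the other $P_k$. If $1 \notin \Gamma(k)$ and $k \notin \Gamma(1)$, then $P_k$ is untouched on both qubits $1$ and $k$... but wait, $P_k$ might have $X$ on some $i \in \Gamma(1)$. Conjugating such a $P_k$ by $U_{1,i}$: $U_{1,i}$ fixes $I \otimes X$ (the $X(i)$ in $P_k$), and if $P_k$ has nothing on qubit $1$ it is truly unchanged; if $P_k = P_1$'s home-qubit neighbor the mutual symmetry guarantees consistency. The case needing care: $k \in \Gamma(1)$, equivalently $1 \in \Gamma(k)$. Then $P_k$ has $X(1)$ and $Z(k)$, and qubit $k$ is one of the qubits we conjugate through ($i = k$). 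Conjugating $P_k$ by $U_{1,k}$ sends $X(1)\otimes Z(k) \mapsto X(1) \otimes Z(k)$ composed with the transformation of $Z(k)$, giving $Z(k) \mapsto X(1) Z(k)$ on those two slots, so $X(1)\cdot Z(k) \mapsto X(1)\cdot X(1) Z(k) = Z(k)$ --- the $X(1)$ is killed. So after this round, $P_k$ no longer has $X$ on qubit $1$, and the $Z(k)$, together with its remaining $X$'s on qubits other than $1$, still has the required form; moreover the new $\Gamma(k)$ is $\Gamma(k) \setminus \{1\}$ possibly symmetric-differenced with $\Gamma(1)\setminus\{k\}$, and one checks commutativity is preserved (conjugation by a unitary preserves commutators).

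Then I would observe that after this round, no $P_k$ with $k \neq 1$ involves qubit $1$ at all, and $P_1 = Z(1)$. So qubit $1$ is decoupled, and $\{P_2, \ldots, P_n\}$ restricted to the remaining $n-1$ qubits is again a commuting family of the same form. By the induction hypothesis, a further product of $U$'s localizes each of those; the base case $n=1$ is trivial since $P_1 = Z(1)$ already. Composing all these conjugations gives the desired product of $U$'s. The main obstacle I anticipate is the careful bookkeeping in the inductive step: verifying that after conjugating $P_k$ (for $k \in \Gamma(1)$) by all the $U_{1,i}$, $i \in \Gamma(1)$, the result still has exactly one $Z$ (on qubit $k$) and only $X$'s elsewhere --- in particular that the various $X$-updates on qubits in $\Gamma(1)$ combine to a set $\Gamma'(k) \subseteq \{2,\ldots,n\}\setminus\{k\}$ and that no stray $Z$ or $Y$ appears. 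This rests on the fact that $U$ acts diagonally in the $X$ basis (so the order of the $U_{1,i}$ doesn't matter and they're simultaneously applicable, as already noted for~\cref{eq:V}) and on the swap-symmetry of $U$, which ensures the roles of qubit $1$ and qubit $i$ are interchangeable in each $U_{1,i}$. Everything else is routine Pauli conjugation.
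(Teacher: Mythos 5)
Your proof is correct and follows essentially the same route as the paper: both extract from commutativity the symmetric incidence relation $i\in\Gamma(j)\Leftrightarrow j\in\Gamma(i)$, form the resulting graph, and conjugate by $U$ on every edge. The paper applies all the edge gates at once and checks the result directly, whereas you organize the same gate set as an induction peeling off one vertex at a time; the bookkeeping you worry about (a possible symmetric difference in $\Gamma(k)$) in fact never arises, since the only $Z$ of $P_k$ sits on qubit $k$ and the only gate in your round touching qubit $k$ is $U_{1,k}$.
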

This can be thought of as a characterization of so-called graph states~\cite{Hein2003}.
\begin{proof}
If $i \in \Gamma(j)$, then $P_i$ has the factor $Z(i)$ anticommuting with $X(i)$ of $P_j$.
The commutativity demands that this anticommutation must be canceled by another anticommutation
which can only be given by a factor $X(j)$ of $P_i$.
This means that $j \in \Gamma(i)$.
Hence, we have an undirected graph with $n$ nodes where there is an edge between~$i$ and~$j$
if and only if $i \in \Gamma(j)$ or equivalently $j \in \Gamma(i)$.
If we apply $U$ over every edge of this graph, then $\stab$ is disentangled.
\end{proof}

\paragraph{Amplitudes in the ground state.}

Given a primary $1$-cell~$e^1$, consider its Poincar\'e dual~$e_3$
and consider the product~$\prod L$ of all the worldline fluctuation operators on the boundary~$\bd e_3$.
The $X$ factors of these operators are on the boundary of $b_2 \in \bd e_3$
and hence all cancel since $\bd \bd e_3 = 0 \bmod 2$.
The remaining factors are~$Z$ on the primary $2$-cells whose boundary includes~$e^1$.
The product~$\prod L$ taking an eigenvalue~$+1$ 
means that the ground state consists of configurations of primary $2$-cells
that must define a $2$-cycle with $\ZZ_2$~coefficients.

Likewise, the product~$\prod T_{Bl}$ of all the worldsheet fluctuation operator on the boundary~$\bd h^4$
of a primary $4$-cell~$h^4$ has surviving factors~$Z$ on all secondary $1$-cells 
whose boundary contain the Poincar\'e dual of~$h^4$.
The product~$\prod T_{Bl}$ taking an eigenvalue~$+1$
means that the ground state consists of configurations of secondary $1$-cells
that must define a $1$-cocycle with $\ZZ_2$~coefficients.

The Hamiltonian terms drive fluctuations in the worldsheet and worldline configurations.
In the basis where $Z$ is diagonal, 
the qubit states~$\ket 1$ and~$\ket 0$ on a primary $2$-cell indicate
that the worldsheet of a loop excitation has and has not swept that location in spacetime,
respectively.
Likewise, the qubit state~$\ket 1$ and~$\ket 0$ on a secondary $1$-cell represent
the occupancy of the worldline of a charge on that spacetime location.
Hence, the Hamiltonian terms correctly capture the property that 
if the worldline of a charge links with the worldsheet of a loop,
there must be an amplitude factor of $-1$ relative to the configuration where they are not linked.
Therefore $H^{WW}_{BcBl}$ is a Walker--Wang Hamiltonian in $4+1$d
with respect to the $3+1$d toric code topological order.

\paragraph{Linking number.}

Given any exact secondary $1$-chain~$a_1$,
let~$\tilde a_2$ be any secondary $2$-chain whose boundary is~$a_1$.
Likewise, given any exact primary $2$-chain~$b^2$,
let~$\tilde b^3$ be any primary $3$-chain whose boundary is~$b^2$.
Since a secondary chain is canonically a cochain,
we can consider mod~$2$ intersection numbers~$\Int_2(\tilde a_2, b^2)$ 
and~$\Int_2(a_1, \tilde b^3)$ by the evaluation of the cochain on the chain.

The amplitude~$\pm 1$ for the configuration of~$a_1$ and~$b^2$
can be computed starting from the vacuum
by applying various fluctuation operators,
and a choice of set of fluctuation operators determines~$\tilde a_2$
and~$\tilde b^3$.
The amplitude is precisely the mod~$2$ intersection numbers, 
which must be the same:
if we first build $b^2$ by worldsheet fluctuation operators
and then insert $a_1$ by worldline fluctuation operators,
then we will be computing~$\Int_2(\tilde a_2, b^2)$;
if we first build~$a_1$ and then insert~$b^2$,
then we will be computing~$\Int_2(a_1, \tilde b^3)$.
Hence, the existence of the ground state of~$H^{WW}_{BcBl}$
shows that the intersection number is well defined independent
of~$\tilde a_2$ and~$\tilde b^3$.
We conclude that we may \emph{define} the mutual linking number modulo~$2$
as a \emph{function} of the exact cycles:
\begin{align}
	\Lnk( a_1, b^2) = \Int_2( \tilde a_2, b^2 ) = \Int_2( a_1, \tilde b^3) \label{eq:Lnk}
\end{align}
Note that this conclusion only uses Poincar\'e duality with $\ZZ_2$ coefficients,
which is valid even for nonorientable manifolds.

\subsection{Fermionic charge and bosonic loop excitations (FcBl)}\label{sec:fcbl}

We keep using the worldsheet fluctuation operator of $H^{WW}_{BcBl}$.
The worldline fluctuation operator, on the other hand, must be modified
since we want the worldline to be one of a fermion, not of a boson.
The canonical way to describe a fermion via worldlines
is to consider \emph{framed} worldlines.
If a worldline closes, the holonomy of the frame along the line 
is valued in $\pi_1(SO(3))\cong \ZZ_2$
and the closed worldline has quantum amplitude~$\pm 1$.

In this section we take a simplified approach, 
motivated by the idea of a ``blackboard'' framing determined by a projection to $2$~dimensions.
Although we could in principle just work with a $4$-dimensional hypercubic lattice 
with a generic linear projection to $2$d --- thereby directly generalizing~\cite{BCFV} ---
we instead find it useful to work in the more general setting of an arbitrary cellulation 
of a general $4$-dimensional spatial manifold.

Consider therefore a piecewise linear map 
\begin{align}
	\phi : \mathcal K \to \RR^2
\end{align}
called a \emph{projection},
from the $2$-skeleton $\mathcal K$ of the secondary cellulation down to $\RR^2$.  
We require that $\phi$ should map every secondary $1$-cell to a straight line segment of nonzero length,
and the images of $1$-cells are transverse to one another.
These conditions are generically satisfied:
we can simply project all the $0$-cells to points in generic position on $\RR^2$,
and connect two projected points if the pair is the boundary of some $1$-cell.
We also require that the image of a $2$-cell~$f_2$ is defined as in~\cref{fig:phi}.
That is, if $v_0(1),v_0(2),\ldots,v_0(n)$ are the vertices of~$f_2$,
then we subdivide~$f_2$ into $n-2$ triangles~$t_j$, each formed by~$v_0(1),v_0(j),v_0(j+1)$ for $j=2,\ldots,n-1$,
and map each~$t_j$ injectively to the triangle in $\RR^2$ formed by $\phi(v_0(1)),\phi(v_0(j)),\phi(v_0(j+1))$.\footnote{
	Previously in~\cite[\S II]{4dbeyond} we introduced a similar projection,
	but required that every 2-cell is mapped injectively.
	We no longer require this injectivity.
}
So, every $2$-cell is projected to a polygon, which may be folded.
\begin{figure}
	\includegraphics[width=2in, trim={5ex 95ex 140ex 5ex}, clip]{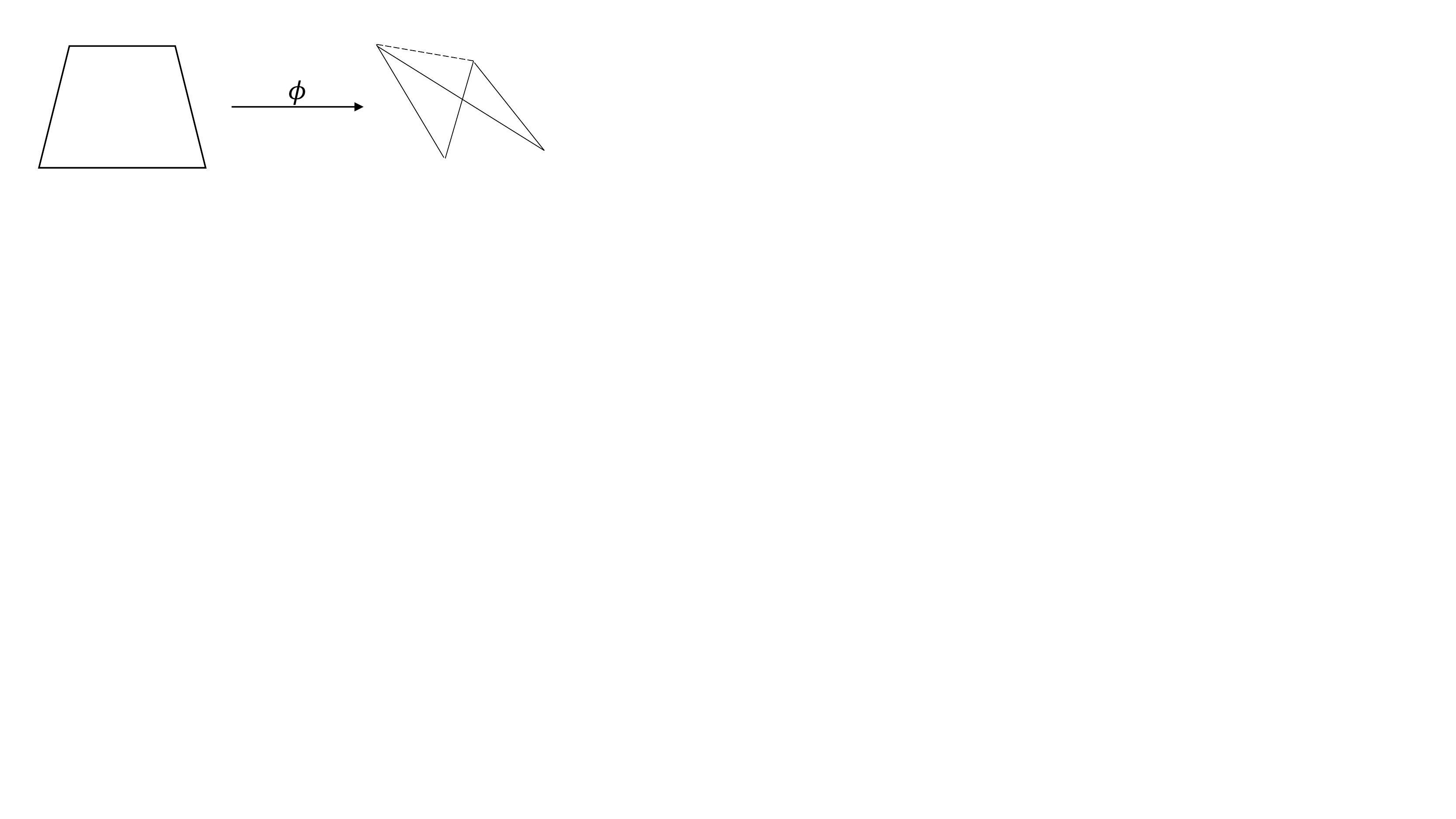}
	\caption{
		A $2$-cell that has four or more sides may be mapped to a folded figure under $\phi$.
	}
	\label{fig:phi}
\end{figure}

\paragraph{Hamiltonian.}

We have to define a worldline fluctuation operator for each secondary $2$-cell~$f_2$.
First, assume that $\phi$ is injective on $f_2$ ({\em e.g.}, $f_2$ is a triangle).
Take the worldline fluctuation operator $L_{Bc}$ of~$H^{WW}_{BcBl}$,
which consists of Pauli~$X$ along the boundary of~$f_2$ and 
Pauli~$Z$ on the Poincar\'e dual $f^2$ of~$f_2$.
We multiply this operator by Pauli~$Z$ tensor factors
on the secondary $1$-cells~$e_1$ such that
there is a factor~$Z$ on~$e_1$ if and only if
$\phi(e_1)$ intersects the interior of~$\phi(f_2)$.
This defines a term $L_{Fc}(f_2)$.
Given two such $2$-cells $f_2$ and $f_2'$, $L_{Fc}(f_2)$ and $L_{Fc}(f_2')$ then commute for a nontrivial reason.  
We will not give the argument here because it is verbatim the same as in~\cite[\S II.C.a]{4dbeyond}.

Generally, even if $\phi|_{f_2}$ is not injective,
the projected polygon~$\phi(f_2)$ defines a subdivision~$(f_2)_1,\ldots,(f_2)_n$ of~$f_2$
such that each subcell $(f_2)_i$ is injectively mapped under $\phi$.
For example, in~\cref{fig:phi} a quadrilateral decomposes into two triangles.
Except for the Pauli~$Z$ on $f^2$,
we follow the prescription for each $(f_2)_i$
as if it were a genuine $2$-cell and the folding lines (the dashed line in~\cref{fig:phi})
were occupied by qubits.
Then, we multiply all these to obtain 
$\hat L(f_2) = \prod_i \left[ L_{Fc}((f_2)_i) \setminus Z(f^2)\right]$.
Since the factors are commuting, the multiplication order does not matter and $\hat L(f_2)$ is unambiguous.
The product $\hat L(f_2)$ does not contain any $X$ factor on the folding lines
because a folding line is shared by two subcells.
However, there could be $Z$ factors on folding lines in $\hat L(f_2)$.
We define
\begin{align}
L_{Fc}(f_2) = Z(f^2) \bra{0_{\text{folding lines}} } \hat L(f_2) \ket{0_{\text{folding lines}} }. \label{eq:lfc}
\end{align}
where $\ket{0_{\text{folding lines}} }$ is the $+1$ eigenvalue eigenstate of all of the $Z$ operators on the folding lines.  That is, we remove any $Z$ factor on the folding line 
but keep any signs that may have been accumulated while forming~$\hat L(f_2)$.

The Hamiltonian~$H^{WW}_{FcBl}$ is the negative sum of all the worldsheet fluctuation operators,
which are the same as in~$H^{WW}_{BcBl}$,
and all the worldline fluctuation operators,
which is the product of the worldline fluctuation operator of~$H^{WW}_{FcBl}$ 
and the extra $Z$ factors determined by~$\phi$.
\begin{align}
	H^{WW}_{FcBl} = - \sum_{f_2: \text{cells}} L_{Fc}(f_2) - \sum_{e^3: \text{cells}} T_{Bl}(e^3)
\end{align}
As in~$H^{WW}_{BcBl}$, 
we do not have to include the worldline closedness terms~$\pi(v_0)$.
Also, it turns out that we do not have to include worldsheet closedness terms;
this is not immediately obvious, 
but will be handled by a disentangling circuit below.

\paragraph{Commutativity.}

The worldsheet fluctuation operators are the same as those of the BcBl model, so we already know that they commute with each other.  Any worldsheet fluctuation operator commutes with 
any worldline fluctuation operator because of the same reason as in $H^{WW}_{BcBl}$;
the extra $Z$ factors in the FcBl worldline fluctuation operator are 
on the secondary cellulation,
on which the worldsheet fluctuation operator has $Z$ factors only.
Any two worldline fluctuation operators commute 
because each is a product of commuting operators associated with sub-$2$-cells.
Therefore, $H^{WW}_{FcBl}$ consists of commuting terms.

\paragraph{Disentangling circuit.}

The unitary~$V$ of~\cref{eq:V} transforms the worldsheet fluctuation operator
to the single-qubit operator $Z$.
Hence, $H ' = V H^{WW}_{FcBl} V^\dagger$ 
has a completely disentangled ground state on the secondary cellulation.
Next, we focus on the worldline fluctuation operator $L=L_{Fc}(b_2)$ after~$V$.  
Letting $b^2$ denote the primary $2$-cell that is Poincar\'e dual to $b_2$,
\cref{eq:U} implies that $VLV^\dagger$ consists of three kinds of Pauli tensor factors:
$Z$ on~$b^2$,
$Z$ on some secondary $1$-cells $e_1$ near $b^2$ determined by $\phi$
(denoted as $e_1 \propto b^2$),
and
$X$ on primary $2$-cells~$p^2$ if they are on the boundary of (Poincar\'e dual of) 
an odd number of such $e_1$'s (denoted as $p^2 \in \db e_1$).
On the ground state of~$H'$, 
the $Z$ on the secondary $1$-cells~$e_1$ have eigenvalue~$+1$,
so we may ignore them. Hence,
\begin{align}
	VLV^\dagger \bmod Z \text{ on secondary qubits} 
	= Z(b^2) \prod_{e_1 \propto b^2}\prod_{p^2 \in \db e_1} X(p^2) .
\end{align}
Now, there is exactly one $VLV^\dagger$ term for each primary qubit 
on which $VLV^\dagger$ act by $Z$.
The other factors of $VLV^\dagger$ is $X$ (ignoring $Z$ on secondary qubits).
Any $VLV^\dagger$ term commutes with another, 
and \cref{lem:disentanglingU} guarantees that $V H^{WW}_{FcBl} V^\dagger$ can be disentangled.
We have completed the disentangling circuit.

A consequence of the disentangling circuit is
that the ground state of $H^{WW}_{FcBl}$ is unique.
Namely, it is
the superposition of all cycles in the trivial $2$-homology and $1$-cohomology 
class with an appropriate amplitude~$\pm 1$.
This implies that the sign is a \emph{function} of cycles.
Since the mutual linking in~\cref{eq:Lnk} is a function of cycles,
we conclude there is a well-defined amplitude $(-1)^{\Frm}$
assigned to each worldline~$a_1$ (in the trivial homology class),
regardless of which $2$-chain bounding this worldline is used to compute the amplitude.
This assignment is defined whenever we have a projection~$\phi$.

In \cref{app:frm} we elaborate on~$\Frm$ to 
construct a commuting Pauli Hamiltonian~$H_{Fc}$,
which is a topologically ordered model with a fermionic point-like excitation
on any combinatorial manifold of dimension two or higher.

\subsection{Fermionic charge and fermionic loop excitations (FcFl)}

In this subsection we finally construct a Hamiltonian for a non-trivial invertible $4+1$d bosonic phase.
Again we find it useful to do this for a general cellulation of a general $4$-manifold, 
though we can easily specialize to a hypercubic lattice.
We refer to our model as the FcFl model, 
because there is a sense in which the loop is fermionic.  
Specifically, there is an extra factor ($\pm 1$) 
in the amplitude for the worldsheet of fermionic loop excitations,
in addition to the framing of fermionic worldline and the mutual linking between worldlines and worldsheets.
This extra $\pm 1$ factor can be read off as follows.
On a $2$-cycle representing the worldsheet, 
there is an orientation domain wall, which is a $1$-cycle $W^1$.
We regard~$W^1$ as a worldline of a (fictitious) fermionic charge
and read off the associated amplitude of~$W^1$.
The cycle~$W^1$ may or may not link with the worldsheet,
but there is no contribution to the quantum amplitude from this linking;
the imagined fermionic charge along~$W^1$ is gauge-neutral.

The orientation domain wall of a surface is only defined as a homology class;
however, we always choose a preferred cycle~$W^1$.
Our prescription to this end is determined by globally defined (local) orientations for each primary $2$-cell.
Given any primary $2$-chain~$b^2$ with $\ZZ_2$ coefficients
we promote it to a $2$-chain~$\xi(b^2)$ with $\ZZ_4$ coefficients,
\begin{align}
	\xi : \ZZ_2[\text{$2$-chains}] \to \ZZ_4[\text{$2$-chains}],
\end{align}
by the rule $\ZZ_2 \ni 0 \mapsto 0 \in \ZZ_4$ and $\ZZ_2 \ni 1 \mapsto 1 \in \ZZ_4$,
where every cell in a chain with $\ZZ_4$ coefficients carries its orientation.
Then, for a closed $2$-chain~$b^2$ we define
\begin{align}
W^1(b^2) = \tfrac12 \bd \xi(b^2)
\end{align}
where $\bd$ is the mod~$4$ boundary operator
and $\tfrac12$ is well defined because $\bd \xi(b^2)$ vanishes mod~$2$.

\paragraph{Hamiltonian.}

There is one qubit on each primary $2$-cell and one qubit on each secondary $1$-cell.
We continue to use the worldline fluctuation operators $L_{Fc}$ of $H^{WW}_{FcBl}$,
for which we need to fix a projection~$\phi_2$ from the $2$-skeleton of the secondary cellulation.
Also, we fix another projection~$\phi^2$ from the $2$-skeleton of the primary cellulation;
$\phi$ is piecewise linear, mapping $1$-cells to transverse straight line segments. 
See~\cref{sec:fcbl}.
Such a projection exists for the standard hypercubic cellulation of $\RR^4$.
The worldsheet fluctuation operator~$T_{Fl}$ on a primary $3$-cell~$e^3$ 
({\em i.e.}, a secondary $1$-cell~$e_1$)
is then defined as follows.
\begin{align}
	T_{Fl} = D \cdot \underbrace{Z( e_1 ) \cdot \prod_{f^2 \in \bd e^3} X(f^2)}_{T_{Bl}}
\end{align}
The operator~$D$ is a diagonal operator in the $Z$ basis, acting on qubits on primary $2$-cells,
and calculates the amplitude change due to the orientation domain wall change.
We define $D$ on the subspace of closed chains below;
if there is open boundary detected on the support of~$D$,
then we let $D$ have eigenvalue~$0$ on that state.
The change $\Delta W^1$ of the orientation domain wall 
can be computed using that fact that for all $2$-chains $x^2$ and $y^2$,
\begin{align}
\xi(x^2+y^2) = \xi(x^2)+\xi(y^2)\quad\text{ if }\quad x^2 \cap y^2 = \emptyset, \label{eq:xi-linear-nonintersecting}
\end{align}
and is given by, with $s^2 \equiv b^2 \cap \bd e^3$,
\begin{align}
	\Delta W^1 
	&= W^1(b^2 + \bd e^3) - W^1(b^2) \nonumber\\
	&= \tfrac12 \bd \left(\xi\big( (b^2 + s^2) + (\bd e^3 + s^2) \big) - \xi\big( (b^2 + s^2) + s^2 \big)\right) \label{eq:DeltaW1}\\
	&= \tfrac12 \bd \left(\xi(\bd e^3 + s^2) - \xi(s^2)\right). \nonumber
\end{align}
Hence, $\Delta W^1$ is local.
Reading the existing orientation domain wall,
and gluing $\Delta W^1$ in with interpretation 
that they are fermionic worldlines, materialized by the projection~$\phi^2$,
we compute the change in the amplitude locally.
This is $D$.
Therefore, $T_{Fl}$ is local.

In addition to the worldline and worldsheet fluctuation operators,
we also include
\begin{align}
\pi(a^1) = \prod_{f^2: a^1 \in \bd f^2} Z(f^2), \qquad
\pi(v_0) = \prod_{e_1: v_0 \in \bd e_1} Z(e_1)
\end{align}
in $H^{WW}_{FcFl}$ which enforce closed primary $2$-chains and closed secondary $1$-chains.
So, the full Hamiltonian reads
\begin{align}
H^{WW}_{FcFl} = - \sum_{a^1: \text{cells}} \pi(a^1) - \sum_{b_2: \text{cells}} L_{Fc}(b_2) - \sum_{e^3: \text{cells}} T_{Fl}(e^3) - \sum_{v_0: \text{cells}} \pi(v_0).
\end{align}
The term $T_{Fl}$ is not a tensor product of Pauli matrices.

\paragraph{Commutativity.}
The $\pi$ terms commute as they are diagonal,
and they also commute with $L_{Fc}$ and $T_{Fl}$ 
as the fluctuation operators preserve the subspace of closed chains.
The worldline fluctuation operators $L_{Fc}$ commute with each other as we remarked before,
which relies on~\cite[\S II.C.a]{4dbeyond}.
A worldline fluctuation operator $L_{Fc}$ and a worldsheet fluctuation operator $T_{Fl}$ commute
for the same reason as in $H^{WW}_{FcBl}$ since the diagonal factor $D$ of $T_{Fl}$ 
does not overlap with $X$ factors of $L_{Fc}$.
It remains to check the commutation of the operators $T_{Fl}$ with themselves.
It suffices to check this on the subspace of closed chains
since $T_{Fl}$ annihilates other states.
But for a state with closed chains,
the action of $T_{Fl}$ is to change the cycle configuration 
and insert a sign that is a \emph{function} of cycle configurations.
Hence, $T_{Fl}$ commute among themselves, and $H^{WW}_{FcFl}$ is a commuting Hamiltonian.

\paragraph{Unique ground state of $H^{WW}_{FcFl}$.}

We have defined $H^{WW}_{FcFl}$ such that there is a ground state where all terms (without the overall minus sign)
assume $+1$ on that ground state.
This does not immediately imply that there is only one ground state.
The uniqueness of the ground state is proven for $H^{WW}_{BcBl}$ and $H^{WW}_{FcBl}$
since we have the circuits that disentangle entire spectra of these Hamiltonians,
but we do not yet know the uniqueness of the ground state of $H^{WW}_{FcFl}$.
However, we can still show the uniqueness, independent of disentangling transformations.

Any ground state must be in the subspace of closed chains;
otherwise, the $\pi$ terms will not assume $+1$ and the energy will be higher.
Then, the worldline and worldsheet fluctuation operators 
hybridize all configurations in a (co)homology class $[h] \in H_2(\mathcal K^4;\ZZ_2) \oplus H^1(\mathcal K^4;\ZZ_2)$.
Suppose that 
\begin{itemize}
\item[($\star$)] if a product of~$L_{Fc}$ and~$T_{Fl}$ is diagonal in $Z$ basis,
{\em i.e.}, the product does not change the configuration,
then it acts as $+1$ on the space of all cycles in~$[h]$.
\end{itemize}
We have proved this condition for products over null-homologous cycles;
the existence of one ground state is a proof.
Then, we can consider a basis of closed chains with the amplitudes in place implied by the fluctuation operators.
In this basis $L_{Fc}$ and $T_{Fl}$ have off-diagonal elements that are either~$+1$ or~$0$.
The Perron--Frobenius theorem implies that 
there is a unique ground state in $[h]$ 
under the assumption~($\star$).
If ($\star$) is violated for~$[h]$, then there is no ground state in~$[h]$.

It remains to show that~($\star$) is violated for all but the zero homology class.
Suppose a primary $2$-cycle~$s^2$ (worldsheet) is in a nontrivial homology class.
Then, there is a secondary $2$-cycle~$s_2$ that intersects this $2$-cycle at odd number of positions
by Poincar\'e duality.
Take the product~$\prod L$ of all worldline fluctuation operators~$L_{Fc}$ on~$s_2$.
Since $L_{Fc}$ on any secondary $2$-cell~$b_2$ has $X$ factors on~$\bd b_2$,
the product~$\prod L$ has no $X$ factors, and $\prod L$ is a diagonal operator
with $Z$ factors on every primary $2$-cell dual to secondary $2$-cells on $s_2$.
These $Z$ factors probe the parity of the intersection number between $s^2$ and $s_2$,
which is odd.
But this is contradictory to the fact that each $L_{Fc}$ has eigenvalue~$+1$
on a hypothetical ground state with a nontrivial primary $2$-cycle $s^2$.
A similar argument rules out any nontrivial homology class for secondary $1$-cycles;
the product of worldsheet fluctuation operators~$T_{Fl}$ over nontrivial primary $3$-cycle
detects any nontrivial secondary $1$-cycles.
Therefore, any ground state is in the trivial homology sector,
and $H^{WW}_{FcFl}$ has only one ground state.

\subsection{Disentangling two FcFl states}\label{sec:disentangle2FcFl}

We are going to construct a shallow quantum circuit
that maps the stacked tensor product of two FcFl states to the stacked tensor product of two FcBl states.
Since we have a shallow disentangling circuit for an FcBl state,
the two FcFl states will be disentangled.
We put one FcFl state on our usual system with qubits on primary $2$-cells and secondary $1$-cells.
The other FcFl state is put on the dual system of the same cellulation,
{\em i.e.}, on a system with qubits on secondary $2$-cells and primary $1$-cells.
We do not try to find a shallow circuit between the two FcFl states,
but we note that on $\RR^4$ with the standard hypercubic cellulation
the two states are equal up to a small ``diagonal'' translation,
and hence are considered two identical states laid side by side.

Let $c$ denote any configuration of a primary $2$-cycle and a secondary $1$-cycle in the trivial homology and cohomology class.
Similarly, let $c'$ denote any configuration of a secondary $2$-cycle and a primary $1$-cycle.
We write $c = a \oplus b$ and $c' = a' \oplus b'$
where $a$ is the secondary (or primary if primed) $1$-cycle 
and $b$ is the primary (or secondary if primed) $2$-cycle.
We have
\begin{align}
	\ket{2 \text{FcFl}} = \sum_{c,c'} \eta^f(c,c')  \ket{c,c'}
\end{align}
where the sign $\eta^f$ is determined by three contributions:
\begin{itemize}
	\item linking of $a$ with $b$ and that of $a'$ and $b'$;
	\item framing of $a$ and $a'$;
	\item framing of $W^1(b)$ and $W_1(b')$.
\end{itemize}
If the last contribution was absent, the wavefunction becomes that of the two FcBl states $\ket{2 \text{FcBl}}$.
Now, we introduce a nominally different basis for the closed chain subspace.
Define an involution~$\omega$ by
\begin{align}
(c,c') &= a' \oplus b \oplus b' \oplus a \nonumber\\
\mapsto \omega(c,c') &= (a' + W^1(b)) \oplus b \oplus b' \oplus (a + W_1(b'))\\
\mapsto (\omega \circ \omega)(c,c') &= (c,c') . \nonumber
\end{align}
This involution~$\omega$ is simply a permutation on the set $\{ (c,c') \}$ of all closed chains.
We use the $\omega$-permuted basis to write
\begin{align}
	\ket{2 \text{FcBl}} = \sum_{c,c'} \eta^b(c,c') \ket{\omega(c,c')}
\end{align}
where the sign $\eta^b$ is determined by two contributions:
\begin{itemize}
	\item linking of $a + W_1(b')$ with $b$ and linking of $a' + W^1(b)$ with $b'$;
	\item framing of $a + W_1(b')$ and framing of $a' + W^1(b)$.
\end{itemize}

Therefore, the difference $\eta^f \eta^b = (-1)^{q+r_1+r^1}$ for a given pair $c,c'$ is
given by the linking number difference $q$ and the framing difference $r_1 + r^1$:
\begin{align}
	q(c,c') &= \Lnk(W_1(b'), b) + \Lnk(W^1(b),b')  \nonumber \\
	r^1(c,c') &= \Frm(a'+W^1(b)) + \Frm(a') + \Frm(W^1(b)) \\
	r_1(c,c') &= \Frm(a+W_1(b')) + \Frm(a) + \Frm(W_1(b')) \nonumber
\end{align}

We will show below that each of $q,r^1,r_1$ is computed by a shallow quantum circuit.
The circuit for~$q$ will require global orientation of the $4$-space.
That is, we construct a shallow quantum circuit $Q$ such that
\begin{align}
	Q \eta^f(c,c')\ket{c,c'} = \eta^b(c,c') \ket{c,c'}
\end{align}
for all $c,c'$.
Then,
\begin{align}
	\ket{2\text{FcBl}} = \underbrace{\sum_{c,c'} \ket{\omega(c,c')} \bra{c,c'}}_{\Omega} Q \ket{2\text{FcFl}}.
\end{align}
Here, $\Omega$ can be realized by a shallow quantum circuit.
One can apply control-$\sqrt X$ or control-$\sqrt{X}^\dagger$ from each $2$-cell to each of its boundary $1$-cells,
depending on the local orientation differences.
The displayed expression for $\Omega$ does not determine the circuit;
we are only specifying the action of $\Omega$ on $2$-cycles of trivial (co)homology.

After showing that each of $q,r^1,r_1$ is computed by a shallow quantum circuit in the subsections below, we will have completed the construction of a disentangling circuit for two FcFl states on any \emph{orientable} $4$-space with a cellulation which refines to a triangulation.

\subsubsection{Circuit for $q(c,c')$}


We claim that, \emph{if the ambient $4$-space is oriented},
\begin{align}
	\tfrac 1 2 \Int_4(b,b') = \Lnk(W_1(b'), b) + \Lnk(W^1(b),b') \label{eq:IntLnk}
\end{align}
where $\Int_4$ is the mod~$4$ intersection number which we will define shortly.
It is well defined to take the half because $\Int_4(b,b') = 0 \bmod 2$
if $b$ is null-homologous.
$\Int_4(b,b')$ is defined by regarding $b'$ as a 2-cochain over $\ZZ_4$ and evaluating it at $b$ as a 2-chain over $\ZZ_4$.
Geometrically, this can be described as follows.
Each $2$-cell carries a preferred local orientation,
so at an intersection point we have two mutually transverse $2$-dimensional ordered bases.  
Concatenating these, starting with the primary cell's basis, 
gives an ordered local basis of the full $4$-dimensional space, 
which can be compared to the ambient orientation.  
This gives a sign and the mod~$4$ intersection number is the mod~$4$ sum of all the signs.

Consider the ``error'' $\Delta(b,b') = \tfrac 1 2 \Int_4(b,b') + \Lnk(W_1(b'), b) + \Lnk(W^1(b),b')$ in our claim.
We observe that $\tfrac 1 2 \Int_4(b,b') + \Lnk(W^1(b),b')$ is
independent of how we choose preferred local orientations for each $2$-cell of $b$.
Indeed, if we change the local orientation for a cell nonintersecting with $b'$,
then the intersection number is unaffected and there is no new linking.
If we change the local orientation for an intersecting cell,
then $\Int_4$ changes by $\pm 2 = 2 \bmod 4$
and the newly created piece of $W^1$ links with $b'$.  Thus both $\tfrac 1 2 \Int_4(b,b')$ and $\Lnk(W^1(b),b')$ change by $1$, whereas $\Lnk(W_1(b'), b)$ is unaffected, so $\Delta(b,b')$ does not change.  Therefore, $\Delta(b,b')$ is independent of local orientations of $2$-cells.

We further claim that $\Delta(b,b')$ is constant
even if deform $b$ or $b'$ by the boundary of a $3$-cell.
Suppose we deform $b$ by $\bd e^3$. 
Any deformation on $b'$ will be handled by symmetry.
Since our $\Delta$ is independent of local orientations of $2$-cells,
we may choose them so that \mbox{$W^1(\bd e^3) = \frac 1 2 \bd \xi \bd e^3 = 0$.}
Recall that $\xi$ is a lifting from $2$-chains over $\ZZ_2$ to $2$-chains over $\ZZ_4$ determined by local orientations.
This means that $e^3$ can be given an orientation, 
{\em i.e.}, a choice of a mod~$4$ chain denoted by~$\xi e^3$,
such that 
\begin{align}
	\bd \xi e^3 = \xi \bd e^3. \label{eq:bdxie3}
\end{align}
Here, the boundary operator~$\bd$ on the left-hand side is mod~$4$ and that on the right-hand side is mod~$2$. 
Then, by~\cref{eq:DeltaW1} we have, with a mod~$2$ chain~$u = b \cap \bd e^3$,
\begin{align}
	W^1(b + \bd e^3) - W^1(b) 
	&= \tfrac 1 2 \bd\left( \xi(\bd e^3 + u) - \xi u \right) \nonumber\\
	&= \tfrac 1 2 \bd\left( \xi \bd e^3 - \xi u - \xi u \right) \\
	&=\tfrac 1 2 \bd ( -2 \xi u)  = \bd u, \nonumber
\end{align}
where in the second line we used
\begin{align}
	\xi(f^2 + g^2) = \xi f^2 - \xi g^2 \qquad \text{for any } f^2 \supseteq g^2. \label{eq:xi-linear-subset}
\end{align}
By standard equivalence between linking and intersection, the changes in the linking numbers are
\begin{align}
\Lnk(W_1(b'),b+\bd e^3) - \Lnk(W_1(b'),b) &= \Lnk(W_1(b'),\bd e^3) = \Int_2(W_1(b'), e^3),\nonumber\\
\Lnk(W^1(b+\bd e^3), b') - \Lnk(W^1(b),b') &= \Lnk(\bd u, b') = \Int_2(u,b').
\end{align}
To calculate the change in $\Int_4$,
let us be more explicit about the lifting of coefficients from~$\ZZ_2$ to~$\ZZ_4$
by writing
$\Int_4(b,b') = \Int_4[ \xi b, \xi b' ]$.
Then, with the mod~$2$ chain~$u = b \cap \bd e^3$,
\begin{align} 
&\Int_4[\xi(b+\bd e^3), \xi b'] - \Int_4[\xi b, \xi b']  \mod 4
\nonumber\\
&= 
\Int_4[ \xi( (b+u)+(\bd e^3 + u) ), \xi b')  - \Int_4[\xi b, \xi b'] 
&\text{decomposition of }b+\bd e^3
\nonumber\\
&=
\Int_4[ \xi(b + u) + \xi(\bd e^3 + u), \xi b']  - \Int_4[\xi b,\xi b'] 
&\text{\cref{eq:xi-linear-nonintersecting}}
\nonumber\\
&=
\Int_4[ \xi b - \xi u,\xi b'] + \Int_4[ \xi\bd e^3 - \xi u, \xi b']  - \Int_4[\xi b,\xi b'] 
&\text{\cref{eq:xi-linear-subset}}
\nonumber\\
&=
-2\Int_4[\xi u, \xi b'] + \Int_4[\bd \xi e^3, \xi b']
&\text{\cref{eq:bdxie3}}
\nonumber\\
&=
-2\Int_4[\xi u, \xi b'] + \Int_4[\xi e^3,\bd \xi b'] 
&\Int\text{= cochain evaluation}
\label{eq:lotsofintersectionnumbers}\\
&=
-2\Int_4[\xi u, \xi b'] + \Int_4[\xi e^3, 2W_1(b')] 
&\text{by def. of $W_1$}
\nonumber \\
&=
2 \Int_2(u,b') + 2\Int_2(e^3, W_1(b')) 
&(2 x) \bmod 4 = 2(x \bmod 2)
\nonumber
\end{align}
where the reader might want to recall that we write $W_1$ as it is a chain of secondary $1$-cells,
which are dual of primary $3$-cells,
so it can be regarded as a $3$-cocycle.
Therefore, $\Delta$ is constant under changes of the $2$-cycles by the boundary of any $3$-chain. 

Finally, since the cycles that enter the FcFl ground state are all (co)homologically trivial,
it suffices to evaluate $\Delta(b,b')$ for zero cycles $b,b'$,
for which it is obvious that $\Delta(b,b') = 0$.
We have proved~\cref{eq:IntLnk} and
\begin{align}
	q(c,c') = \tfrac12 \Int_4(b,b').
\end{align}

The mod~$4$ intersection number is computable by a depth~$1$ quantum circuit.
Indeed, the local preferred orientations of $2$-cells $f^2$ and its dual $f_2$
determines the contribution $\pm 1$ to the intersection number when they are both occupied by $2$-cycles $b,b'$.
This information is fixed once and for all, and the quantum gate $\diag(1,1,1,\pm i)$ reads this off.

\subsubsection{Circuit for $r_1$ and $r^1$}\label{sec:circuitBiFrm}

The construction will be the same for primary and secondary cellulations, 
so we only discuss~$r^1$.
By~\cref{eq:Frm} and~\cref{eq:association} we know that
$r^1( c,c' ) = \BiFrm(W^1(b), a')$,
which we show can be evaluated by a shallow circuit.
In fact, the extension~$\overline\BiFrm$ can be evaluated by a shallow circuit:
\begin{lemma}\label{lem:circuitR}
	Under the assumption of~\cref{prop:longS},
	there exists a shallow circuit~$R$ such that 
	$R \ket{a^1 , b^2} = (-1)^{\overline\BiFrm(a^1,W^1(b^2))} \ket{a^1 , b^2}$
	for all cycles~$a^1$ and~$b^2$.
\end{lemma}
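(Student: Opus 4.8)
The plan is to build $R$ as a constant-depth circuit of mutually commuting diagonal gates, one localized near each primary $1$-cell, whose accumulated phase reconstructs $\overline\BiFrm(a^1,W^1(b^2))$ term by term. I would work throughout in the $Z$-eigenbasis of the doubled system of \cref{sec:disentangle2FcFl}, in which the primary $1$-cell qubits record the chain $a^1$ and the primary $2$-cell qubits record the chain $b^2$; it is enough to produce an $R$ with the advertised diagonal action on \emph{every} such $Z$-basis state, not merely on cycles, so closedness of $a^1$ and $b^2$ plays no role in the construction.

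First I would record the locality of the map $b^2\mapsto W^1(b^2)$. Since $W^1(b^2)=\tfrac12\bd\xi(b^2)$, the computation leading to \cref{eq:DeltaW1} shows that the coefficient of a given primary $1$-cell $e^1$ in $W^1(b^2)$ is a fixed $\ZZ_2$-valued function of the occupation numbers of the bounded collection of primary $2$-cells $f^2$ with $e^1\in\bd f^2$ --- one forms the mod-$4$ boundary using the globally fixed preferred local orientations of the $2$-cells and then halves. Consequently the restriction of $W^1(b^2)$ to any bounded region is a fixed function of the restriction of $b^2$ to a bounded enlargement of that region.

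Next, using that $\overline\BiFrm$ --- defined in \cref{eq:Frm} and \cref{eq:association}, and $\ZZ_2$-bilinear on $1$-chains, this bilinearity being the point of passing to the barred extension --- I would expand
\begin{align}
	\overline\BiFrm(a^1, W^1(b^2))=\sum_{e^1} W^1(b^2)(e^1)\cdot g_{e^1}(a^1),\qquad g_{e^1}(a^1):=\overline\BiFrm(a^1,e^1),
\end{align}
the sum running over primary $1$-cells $e^1$. The key claim --- and here is where I would invoke the hypothesis of \cref{prop:longS} --- is that $g_{e^1}$ depends on $a^1$ only through the occupation numbers of primary $1$-cells in a bounded neighborhood of $e^1$: the framing string one attaches to the single cell $e^1$ may be chosen so that its bi-framing against an arbitrary $1$-chain is a bounded-support $\ZZ_2$-functional. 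Granting this, each summand $h_{e^1}(a^1,b^2):=W^1(b^2)(e^1)\cdot g_{e^1}(a^1)\in\{0,1\}$ is, by the previous paragraph, a function of the $Z$-eigenvalues in a ball $B_{e^1}$ of bounded radius about $e^1$, and $\overline\BiFrm(a^1,W^1(b^2))=\sum_{e^1} h_{e^1}(a^1,b^2)$.

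To finish, for each $e^1$ I would take $R_{e^1}$ to be the $Z$-diagonal unitary supported on $B_{e^1}$ that acts by the scalar $(-1)^{h_{e^1}}$; since $h_{e^1}$ is a polynomial in finitely many $Z$-eigenvalues, $R_{e^1}$ is a product of controlled-sign gates. The $R_{e^1}$ are all diagonal, hence commute, and each qubit lies in $\calO(1)$ of the balls $B_{e^1}$, so a coloring of the balls partitions $\{R_{e^1}\}$ into $\calO(1)$ layers of pairwise non-overlapping gates; thus $R:=\prod_{e^1}R_{e^1}$ is a shallow circuit, and on any $Z$-basis state --- in particular on $\ket{a^1,b^2}$ for cycles $a^1,b^2$ --- it acts by $(-1)^{\sum_{e^1}h_{e^1}(a^1,b^2)}=(-1)^{\overline\BiFrm(a^1,W^1(b^2))}$, as needed. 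I expect the main obstacle to be the middle step: checking that $\overline\BiFrm(\,\cdot\,,e^1)$ is a genuinely bounded-support functional rather than a global projection-crossing count. Heuristically the localization should hold because $W^1(b^2)=\tfrac12\bd\xi(b^2)$ is a boundary and bi-framing against a boundary is an intersection pairing, which localizes; but the clean statement I would actually quote is the one packaged into \cref{prop:longS}, and once it is in hand the rest is bookkeeping with commuting diagonal gates and a coloring argument.
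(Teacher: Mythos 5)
Your overall architecture is sound and matches the shape of the paper's argument: the coefficient of $W^1(b^2)$ at each primary $1$-cell is a local function of $b^2$ (this is exactly the paper's closing observation), the phase decomposes as a sum over $1$-cells of local contributions, and commuting diagonal gates assembled by a coloring argument give a constant-depth circuit. But the step you yourself flag as ``the main obstacle'' --- that $g_{e^1}(a^1)=\overline\BiFrm(a^1,e^1)$, i.e.\ the thorn parity of the fermion string operator $S(a^1)$ at a fixed cell $e^1$, depends on $a^1$ only through a bounded neighborhood of $e^1$ --- is a genuine gap, and your proposed fix does not close it. \cref{prop:longS} does not contain this statement: it gives existence of $S(a^1)$ for non-null-homologous cycles, uniqueness up to sign and cycle-enforcing operators, $\ZZ_2$-linearity, and mutual commutativity, but nothing about the thorn content at a given cell being a \emph{local} functional of $a^1$. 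Indeed, the construction in \cref{prop:longS} routes through an arbitrary separated homologous copy and an arbitrary bounding $2$-chain, so a priori the thorns could depend on global features of $a^1$ and of the chosen bounding chain; \cref{cor:localizing} only pins them down up to $\pi$'s, which is weaker than local determination.

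The missing ingredient, which is the actual content of the paper's proof, is a \emph{canonical, locally computable} choice of bounding $2$-chain: one adjoins an auxiliary apex $\infty$ projected to a generic point far from $\phi(\mathcal K)$ and cones the whole $2$-skeleton over it. By \cref{rem:BiFrmByExt}, $\overline\BiFrm$ may be evaluated on this extension; on the extension every $1$-cycle is null-homologous (so the hypothesis of \cref{prop:longS} is automatic), and $a^1$ bounds the cone $\tilde a$ over itself. Because the segments to $\infty$ are almost parallel and their arrangement near $\phi(\mathcal K)$ stabilizes as $\infty$ recedes, the Pauli content of $L_{Fc}(f)$ on the $1$-cells of $\mathcal K$, for each cone cell $f$, is locally determined, and hence so are the thorns of $S(a^1)$ obtained by localizing via \cref{lem:cancelthorns}. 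That is the statement your $g_{e^1}$ needs; your heuristic that ``bi-framing against a boundary is an intersection pairing, which localizes'' gestures at it but supplies neither the extension nor the stabilization argument. (Your remark that closedness of $a^1,b^2$ ``plays no role'' should also be tempered: $W^1$ and $\overline\BiFrm$ are only defined on cycles, so extending the gates to all $Z$-basis states is a definitional convenience, with the claimed identity holding only on cycles.)
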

\begin{proof}
\cref{rem:BiFrmByExt} says that $\overline\BiFrm$ may be evaluated 
by arbitrarily extending the $2$-skeleton~$\mathcal K$ of our $4$-space.
We introduce an auxiliary $0$-cell $\infty$ and project it to a generic position 
``infinitely'' away (which can be specified finitely but we will not) from the image of any other cell.
We build a cone over the $2$-skeleton~$\mathcal K$ with an apex~$\infty$.
We can obviously extend~$\phi$ over the $2$-skeleton of the cone to have $\phi^e$,
and all the line segments from a $0$-cell of~$\mathcal K$ to~$\infty$ on the projection plane~$\RR^2$
are almost parallel to each other since $\infty$ is far away.
Since $\mathcal K$ is locally finite,
the topology of line segments in the vicinity of $\phi(\mathcal K) \subset \RR^2$ stabilizes as $\infty$ is pushed to ``infinity.''
This extension fulfills the assumption of~\cref{prop:longS},
since all $1$-cycles become null-homologous.

Now, the cycle~$a$ is the boundary of the cone~$\tilde a$, a $2$-chain, over~$a$ itself.
For any $2$-cell~$f \in \tilde a$ in the cone, 
the Pauli factors of the operator~$L_{Fc}(f)$ on the $1$-cells of~$\mathcal K$
(those that do not touch~$\infty$)
are locally determined.
In particular the thorns of $S(a)$ are locally determined.
Since every coefficient of cells in~$W^1(b^2)$ is locally determined,
we find a desired shallow circuit~$R$.
\end{proof}


\section{Boundary topological order}\label{sec:bdtheory}

For any of our models BcBl, FcBl, and FcFl,
we have shown that the bulk does not admit any deconfined topological excitation
since the stack of two copies of a model can be completely disentangled.
We have also shown that the unique ground state on a closed $4$-manifold
is a superposition of null-homologous cycles.
However, in the presence of boundaries 
we will show that the boundary hosts nontrivial topological excitations and
that the ground state becomes degenerate whenever the $3$-dimensional boundary 
has nontrivial first, and hence second, homology.

Before proceeding with the detailed construction, let us make some comments.  First, the boundary Hamiltonians discussed below are not unique, and in general we expect the various models to admit many different boundary topological orders.  However, we will show that, in particular, the FcFl model admits an anomalous boundary topological order, {\em i.e.}, one which is not possible strictly in $3+1$ dimensions (modulo an assumption that there are no nontrivial $3+1$-dimensional fermionic invertible phases).  This fact by itself is sufficient to show that the FcFl model defines a nontrivial invertible phase of matter.  For if it did not, we could find a shallow depth disentangler for it, and by truncating this disentangler appropriately we could then disentangle the $3+1$-dimensional surface from the $4+1$-dimensional bulk and obtain a standalone $3+1$-dimensional model of the anomalous topological order, a contradiction.

Imagine we have a closed $4$-space which we will cut out to open up a $3$-dimensional boundary.
We choose a connected set of \emph{primary}
$4$-cells, and declare that the qubits contained in (the closure of) the chosen $4$-cells are our system qubits.
This choice of a connected subset of qubits
leaves a complementary set of \emph{external} qubits.
By construction, the system boundary is a closed $3$-manifold
and inherits the cell structure of the $4$-space.
Namely, the system boundary is filled with primary $3$-cells,
each of which is the intersection of a system primary $4$-cell and an external primary $4$-cell.

Given our Hamiltonian, be it BcBl, FcBl, or FcFl, on a closed $4$-space,
we define a new Hamiltonian as
\begin{align}
	H = -\sum_j h_j 
	\quad \longrightarrow \quad 
	H_\text{w/bd} = -\sum_j \bra{0^{\otimes \text{external}}} h_j  \ket{0^{\otimes \text{external}}} .
\end{align}
In other words,
if a term $h_j$ contains a $Z$ factor at an external qubit, we replace it with a number~$1$,
but if a term $h_j$ contains an $X$ factor at an external qubit, we drop the term.
One may regard the prescription as introducing (infinitely) strong single-qubit terms $-Z$ on external qubits.
The Hamiltonian with boundary is still commuting.
Any surviving pair of two terms that are modified by the qubitwise sandwiching
must have had diagonal factors on external qubits,
and the commutativity of the modified terms 
is inherited from that of the unmodified terms.

The original Hamiltonian~$H$ contains (in the algebra generated by its terms)
cycle-enforcing terms~$\pi(a^1) = \prod_{f^2: \bd f^2 \ni a^1} Z(f^2)$ 
and~$\pi(v_0) = \prod_{e_1: \bd e_1 \ni v_0} Z(e_1)$.
These terms survive under the qubitwise sandwiching
to become cycle-enforcing terms on the nonexternal, system qubits.
It is possible that all but one factor of a closed-chain enforcing term
may be sandwiched out to produce a single-qubit~$Z$,
in which case, effectively, the set of external qubits is enlarged.
Indeed, for any secondary $1$-cell~$e_1$ (a primary $3$-cell) at the system boundary,
there exists an external primary $4$-cell~$v_0$ such that $e_1 \in \db v_0$.
If $4$-cell $v_0$ had no other system qubit on its boundary,
the cycle-enforcing term $\pi(v_0)$ becomes a single-qubit $Z$ under the sandwiching,
and the qubit at~$e_1$ is disentangled out.
If we consider a hypercubic lattice with a flat cubic lattice boundary,
then every secondary qubit within the boundary is disentangled out.

\subsection{Gapped boundary}

Our Hamiltonian with boundary is gapped in a strong sense:
if $O$ is an operator on a not-too-big ball-like subsystem (excluding external qubits)
that commutes with every term of~$H_\text{w/bd}$,
then $O$ belongs to the complex algebra generated by terms of $H_\text{w/bd}$ near the support of~$O$.
Colloquially, this is to say that any commuting term that can be added to~$H_\text{w/bd}$ is already in $H_\text{w/bd}$.
Suppose that $K$ is any operator supported on a ``small'' ball-like region,
not necessarily commuting with any Hamiltonian term.
We multiply~$K$ by~$t_j$, the projector onto $+1$-eigenspace of~$h_j$ of~$H_\text{w/bd}$,
on the left and right for all~$j$ 
such that the support of~$t_j$ overlaps with that of~$K$,
to obtain an operator 
\begin{align}
	K' = \underbrace{(\prod_j t_j )}_{\Pi(K)} K \underbrace{(\prod_j t_j )}_{\Pi(K)}.
\end{align}
The support of~$K'$ is only slightly larger than that of~$K$,
and thus still supported on a ball-like region,
but $K'$ now commutes with all Hamiltonian terms.
Our claim implies that 
$K'$ is a $\CC$-linear combination of products of Hamiltonian terms near the support of~$K$.
This is the local topological order condition,
that is used to prove the gap stability against perturbations~\cite{BravyiHastingsMichalakis2010stability}.
It demands that if $K$ is any observable on a small box,
then the expectation value of $K$ on any true, global ground state 
be the same as that on any state of form~$\Pi(K)\ket \psi$ 
where $\ket \psi$ is arbitrary, bearing no reference to $H_\text{w/bd}$.
Indeed, since~$K'$ belongs to the algebra generated by operators, 
each of which takes a definite eigenvalue on $\Pi(K)$,
we see that 
\begin{align}
\Pi(K) \cdot K \cdot \Pi(K) = \Pi(K) \cdot K' \cdot \Pi(K) = c(K) \Pi(K) \label{eq:KLcondition}
\end{align}
for some scalar~$c(K)$.

We have not specified how ``small'' or ``not-too-big'' subsystems have to be where $O$ or $K$ is supported.
In fact, we only need that the $\ell$-neighborhood of a region we consider contains only null-homologous (co)cycles,
where $\ell$ is the maximum diameter of the support of a Hamiltonian term.
Since~\cref{eq:KLcondition} is the Knill-Laflamme criterion for error correction~\cite{KnillLaflamme1996},
the ground state subspace of~$H_\text{w/bd}$ obeys
the ``homogeneous'' topological order condition~\cite{Haah2020},
a geometry-free form of an error correction property.

Let us prove the claim.
Given an operator~$O$ that commutes with every Hamiltonian term,
we expand~$O$ in the Pauli operator basis,
\begin{align}
	O &= \sum_j P_j \\
	0 =[O,\pi]&= \sum_j [P_j, \pi] \nonumber
\end{align}
where~$P_j$ are distinct tensor products of Pauli matrices,
and~$\pi$ is either~$\pi(v_0)$ or $\pi(a^1)$.
Each commutator, if nonvanishing, is a Pauli operator $2P_j \pi$,
and is thus different from any other appearing in the sum.
Since Pauli operators are an orthonormal operator basis (under Hilbert--Schmidt inner product),
we see that every commutator must vanish.
We can write each $P_j$ as a product of some Pauli~$Z$ operators and some Pauli~$X$ operators up to a phase.
The $X$ part of~$P_j$ must form a cycle because $P_j$ commutes with~$\pi(a^1)$ and~$\pi(v_0)$ for all~$a^1$ and~$v_0$.
Being supported on a ball-like region, the cycle must be null-homologous.
But the fluctuation operators of~$H_\text{w/bd}$
explore all null-homologous cycles within the ball-like region,
and hence, the $X$ part can be canceled by multiplying $P_j$ by fluctuation operators.
This multiplication results in $P_j'$, a diagonal operator in $Z$ basis,
which may have a larger support than~$P_j$,
but not more than the size of the fluctuation operators.

Our next goal is to show that 
if $D$ is any diagonal operator on a ball-like region which commutes with every Hamiltonian term,
then it belongs to the algebra generated by Hamiltonian terms.
A diagonal operator~$D$ commutes with a fluctuation operator
if and only if $D$ commutes with the $X$ part of fluctuation operators.  
Now, because $Z$ is diagonal,
each Pauli summand~$D_k$ of~$D = \sum_k D_k$ in the Pauli basis expansion 
is a product of $Z$'s.  
By the same argument as before, 
each such $D_k$ must individually commute with the $X$ part of fluctuation operators.
Thus, each such $D_k$ may be viewed as a dual chain that is closed relative to the spatial boundary.
Because the support of $D_k$ is topologically trivial, 
this dual chain is exact.
However, the terms~$\pi(v_0)$ and~$\pi(a^1)$ 
are precisely the generator of null (co)homology cycles, 
relative to the spatial boundary,
which shows that each~$D_k$, and hence~$D$, belongs to the algebra generated by~$\pi$'s.

Therefore, each $P_j'$ belongs to the algebra of $\pi$'s, so
$P_j$ belongs to the algebra of $\pi$'s and the fluctuation terms,
and, finally, $O$ belongs to the algebra of the Hamiltonian terms.  Note that a similar prescription introduces a gapped boundary to the $2+1$d and any higher dimensional toric code.

\subsection{Operators for topological excitations}

\subsubsection*{$1$-dimensional operators}

Consider a secondary $1$-cycle~$a_1$ near the system boundary which may be homologically nontrivial.
Since we have a direction normal to the surface,
there is a homologous shift~$a_1'$ of~$a_1$ into the exterior of the system,
also near the surface.  So, $a_1$ is entirely in the interior of the system, and $a_1'$ is entirely in the exterior.

\paragraph{For bosonic charges.}

The null-homologous cycle~$a_1 + a_1'$ is the boundary of some $2$-chain~$b_2$,
and the product~$L_{Bc}(b_2)$ of~$L_{Bc}$ over~$b_2$ is an $X$ string along $a_1 + a_1'$
times $Z$ factors on the primary $2$-cells, 
each of which is Poincar\'e dual to cells of~$b_2$.
Let us strip off the $X$ factors of~$L_{Bc}(b_2)$ along~$a_1'$ and apply the qubitwise sandwiching on external qubits,
to obtain a system operator~$S_{Bc}(a_1)$.
This string operator~$S_{Bc}(a_1)$ commutes with every term of~$H^{WW}_{BcBl\text{ w/bd}}$;
the stripped off part from~$L_{Bc}(b_2)$ is not seen by terms of~$H^{WW}_{BcBl\text{ w/bd}}$.
When $a_1$ is homologically nontrivial,
$S_{Bc}(a_1)$ is not generated by terms of~$H^{WW}_{BcBl\text{ w/bd}}$.

If we truncate a long~$S_{Bc}(a_1)$, then at the end point the cycle-enforcing term is violated.
This excitation is topological and deconfined as witnessed by the following $3$-dimensional operator.
In the $4$-space in which our system lives,
there is a primary $3$-sphere~$r^3$ that encloses~$v_0$ but not~$v_0'$.
The product~$M(r^3)$ of secondary qubit operator~$Z$ over~$r^3$ 
is a product of cycle-enforcing terms~$\pi(v_0)$ over the $4$-ball bounded by~$r^3$,
and $M(r^3)$ anticommutes with~$S_{Bc}(a_1)$.
The sphere $r^3$ is in fact a $3$-dimensional hemisphere if we neglect the external qubits,
where the equatorial boundary is a $2$-sphere near the system boundary.
Since there is no restriction on the geometry (size) of these operators,
the excitation at~$v_0$ is topological.
We will later find a strictly $2$-dimensional operator 
that does not penetrate into the $4$-dimensional bulk.

Our construction of~$S_{Bc}(a_1)$ depends on~$a_1'$ and~$b_2$,
but the resulting operator is unique up to $\pi(a^1)$'s within the system
if the cellulation is so refined 
that the distance between~$a_1$ and its shift~$a_1'$ 
is smaller than the injectivity radius~$r_\text{inj}$ of the system boundary manifold.
Indeed, the difference~$\Delta S_{Bc}$ resulting from these arbitrary choices, 
before we take the qubitwise sandwiching,
is a product of primary-qubit~$Z$ which as a whole commutes with every term of~$H^{WW}_{BcBl}$.
We know that $\Delta S_{Bc}$ must be a product of terms of~$H^{WW}_{BcBl}$
because we have a Hamiltonian-disentangling circuit~\cref{eq:V}.
The group of all primary-qubit diagonal Pauli operators in the group of terms of~$H^{WW}_{BcBl}$
consists of~$L_{Bc}(c_2)$ for secondary $2$-cycles~$c_2$.
Since $\pi(a^1)$'s viewed as secondary $2$-chains generate all null-homologous secondary $2$-cycles,
it remains to show that if $\Delta S_{Bc} = L_{Bc}(c_2)$, then $c_2$ is null-homologous.
To see this, just note that the cycle~$c_2$ is contained in the $r_\text{inj}$-neighborhood of~$\Supp(a_1)$,
which deformation-retracts to~$\Supp(a_1)$,
that contains no nontrivial $2$-cycle, so $c_2$ is null-homologous.

The quasiparticle at the end of truncated string operator is a boson 
as seen by a T-junction exchange process in~\cref{eq:exchangestatistics};
any segment of $S_{Bc}$ commutes with any other.

\paragraph{For fermionic charges.}

For FcBl, a similar string operator~$S_{Fc}$ near the boundary 
exists.  It is likewise a product of~$X$ along a secondary cycle~$a_1$,
but is additionally decorated with $Z$ thorns.
The construction is the same as that for string operators of~$H^{WW}_{BcBl\text{ w/bd}}$;
take a pair of a secondary $1$-cycle within the system and its shift to the exterior,
pick a secondary $2$-chain bounded by the pair to multiply~$L_{Fc}$ over,
and read off the system part.
We have an extra property that the secondary part of~$S_{Fc}$ localizes along~$a_1$
by~\cref{lem:cancelthorns}.
The $Z$ factors on primary $2$-cells that were responsible for linking parities do not localize,
but they are unique up to~$\pi(a^1)$'s as we have shown in the construction of~$S_{Bc}$ above.
The string operator~$S_{Fc}$ 
that is first localized by~\cref{lem:cancelthorns} and then truncated by the system boundary,
is equivalent to one that is first truncated and then localized.
These two procedures commute 
because the cycle-enforcing terms~$\pi(v_0)$ that localize the string before the system boundary truncation
can be first truncated and then localize the string operator.
The same string operators are applicable for FcFl;
we have used the worldline fluctuation operators only
which are the same for both FcBl and FcFl.

The string operators of FcBl indeed transport fermions 
as one can easily check by a T-junction process with~\cref{eq:exchangestatistics}.
A concrete expression for string operators on a hypercubic lattice is shown in~\cref{fig:stringop}(c).
Note that the fourth direction (into the $4$-dimensional bulk) is irrelevant 
for the evaluation of exchange statistics of boundary quasiparticles,
because any potential $Z$ thorns in the fourth direction will commute with all other operators, as there are no $X$'s in the fourth direction.  Also, factors of~$Z$ on primary qubits are not relevant for this computation.

\subsubsection*{$2$-dimensional operators}

Let $b^2$ be a primary $2$-cycle that sits within the system boundary,
which may represent a nontrivial homology of the system boundary.
Similarly to the string operator construction above,
we consider a translate~$(b^2)'$ of~$b^2$ along the direction normal to the system boundary
towards the exterior.
We know that $b^2 + (b^2)' = \bd c^3$ for some primary $3$-chain~$c^3$.
Before the qubitwise sandwiching,
the product of~$T_{Bl}$ over~$c^3$ 
is a Pauli operator with $X$ factors on $b^2 \sqcup (b^2)'$
and some $Z$ factors on secondary $1$-cells in between.
The product of~$T_{Fl}$ over~$c^3$ is more complicated,
but only in the diagonal part.
That is, for $T=T_{Bl}$ and $T=T_{Fl}$ we have
\begin{align}
	\prod_{a_1 \in c^3} T(a_1)
	= \left(\prod_{a_1 \in c^3} Z(a_1)\right) 
	\cdot D \cdot 
	\left(\prod_{f^2 \in b^2} X(f^2) \right) \cdot
	\left( \prod_{g^2 \in (b^2)'} X(g^2) \right) \label{eq:c3T}
\end{align}
where $D$ is some operator on the primary qubits which is diagonal in $Z$ basis.
In the case of $T_{Bl}$, $D$ is the identity.  
This follows from the fact that each term~$T(a_1)$ 
maps a configuration of primary $2$-cycle to another with some sign,
where the change in the cycle is the boundary of primary $3$-cell, Poincar\'e dual to~$a_1$.
The change in cycles is expressed by the $X$ factors in~\cref{eq:c3T}.
To define membrane operators, we strip off the $X$ factors on~$(b^2)'$ 
and apply the qubitwise sandwiching on the remaining operator:
\begin{align}
	M(b^2) = 
	\bra{0^{\otimes \text{external}}} 
		\left(\prod_{e_1 \in c^3} Z(e_1)\right) D
	\ket{0^{\otimes \text{external}}}
	\left(\prod_{f^2 \in b^2} X(f^2) \right).
\end{align}
Starting with~$T_{Bl}$, we obtain~$M_{Bl}(b^2)$
that consists of $X$ factors on~$b^2$ that is within the system boundary
and some $Z$ factors on secondary $1$-cells, also within the system boundary.
With~$T_{Fl}$, the operator~$M_{Fl}(b^2)$ differs from~$M_{Bl}(b^2)$ only in the diagonal operator on primary qubits.
For either case, Bl or Fl,
on the hypercubic lattice $4$-space and the cubic lattice boundary,
no $Z$ factors on secondary $1$-cells survive after qubitwise sandwiching.

The membrane operator~$M(b^2)$ is $2$-dimensional
and commutes with every Hamiltonian term of~$H_\text{w/bd}$.
The commutativity is obvious for the worldsheet fluctuation operators in these Hamiltonians.
The part of~$M_{Bl}(b^2)$ seen by the worldline fluctuation operators
is unchanged under stripping and qubitwise sandwiching, 
so the commutativity is retained.
The constructed operator~$M(b^2)$ 
complements the $1$-dimensional string operators we constructed above;
the anticommutation relations come from the primary-qubit $X$ factors of~$M(b^2)$ and $Z$ factors of~$S(a_1)$.
In other words, the membrane operators over primary homology $2$-cycles 
and the string operators over secondary homology $1$-cycles
are a generating set for the algebra acting on the ground state subspace.
Here, the homologies are those of the boundary $3$-manifold.

\subsection{Loop self-statistics $\mu=-1$ on boundary of the FcFl model}

Any truncation of~$M_{Bl}$ satisfies the requirements of~\cref{subsec:data} to compute the loop self-statistics~$\mu$,
since $M_{Bl}$ is a tensor product of single-qubit unitary operators.
It is obvious that $\mu_{Bl} = +1$,
since $X$ factors are on primary cells and $Z$ factors are on the secondary cells,
so any truncated versions commute with one another.

To compute the loop self-statistics~$\mu_{Fl}$ with $M_{Fl}$,
we resort to a representation of~$M_{Fl}$ inside the boundary theory of a model Fc$\times$FcBl,
which will be shown to be isomorphic by a shallow quantum circuit to Fc$\times$FcFl.
Here, Fc is a topologically ordered model in $4+1$d defined in~\cref{rem:Fc}
that has a fermionic charge.
We will employ the circuits developed in~\cref{sec:disentangle2FcFl}.

\paragraph{Fc$\times$FcBl $\cong$ Fc$\times$FcFl.}

Recall that the amplitude~$\braket{b^2 , a_1 | \text{FcFl}}$ is given
by the product of three signs:
the linking between~$a_1$ and~$b^2$,
the frame parity of~$a_1$,
and the frame parity of~$W^1(b^2)$.
On the other hand, the amplitude~$\braket{b^2 , a_1 | \text{FcBl} }$ is given
by the product of two signs:
the linking between~$a_1$ and~$b^2$
and the frame parity of~$a_1$.
For a fixed primary homology $1$-cycle~$h^1$,
the amplitude~$\braket{x^1 | \text{Fc};[h^1]}$ with $x^1 \in [h^1]$
is given by the frame parity~$\overline\Frm(x^1)$ as constructed in~\cref{prop:globalFrm}.
Given a projection from our $2$-skeleton down to $\RR^2$,
$\overline\Frm$ may not be unique, precisely when the degree-$1$ homology of the $2$-skeleton is nonzero.
We choose an arbitrary $\overline\Frm$ extending~$\Frm$.
As in~\cref{sec:disentangle2FcFl} we consider
\begin{align}
\braket{x^1 , b^2 , a_1 ~|~ \text{Fc}\times\text{FcFl}}
&=
(-1)^r \braket{x^1+W^1(b^2) , b^2 , a_1 ~|~ \text{Fc}\times\text{FcBl}}\\
\text{where }\quad 
r
&=
\overline\BiFrm(x^1, W^1(b^2)) . \nonumber
\end{align}
We know from~\cref{lem:circuitR} that $r: (y^1,b^2) \mapsto \overline\BiFrm(y^1, W^1(b^2))$
where $y^1$ is any primary $1$-cycle and $b^2$ is any primary $2$-cycle,
is computable by a shallow circuit~$R$.
We also know  that
\begin{align}
\Omega|_\text{cycles} = \sum_{x^1 \in [h^1], b^2 \in [0], a_1 \in [0]}
\ket{x^1+W^1(b^2) , b^2 , a_1}\bra{x^1 , b^2 , a_1}
\end{align}
is implemented by a shallow circuit,
where $\Omega$ does not change the homology class of the primary $1$-cycle
since $W^1$ is always null-homologous.
Therefore, we have a shallow-circuit equivalence
\begin{align}
\Omega R \ket{\text{Fc}\times\text{FcFl}} = \ket{\text{Fc}\times\text{FcBl}}
\end{align}
regardless of what superselection sector the ground state $\ket{Fc}$ is in.

In the presence of spatial boundary,
the circuit~$R$ has gates that straddle system qubits and external qubits;
there could be an external primary $2$-cell that 
meets a system primary $2$-cell along a system $1$-cell, 
and some gate of $R$ needs to access both $2$-cells to infer~$W^1$.
However, being a diagonal circuit, $R$ becomes a unitary circuit
under the qubitwise sandwiching, 
which is nothing but restriction of domain of the unitary gates,
and hence the sandwiched circuit locally computes~$\overline \BiFrm$.
The layer of $\Omega$ does not have any gates that straddle system qubits and external qubits,
and $\Omega$ restricts to a shallow circuit under the qubitwise sandwiching.
Hence, the shallow-circuit equivalence between Fc$\times$FcFl and Fc$\times$FcBl
continues to hold in the presence of spatial boundary.

\paragraph{Representation of $T_{Fl}$ in the bulk.}

We will find a representation of open membrane operators of FcFl
via operators in Fc$\times$FcBl.
(The worldsheet fluctuation terms~$T_{Fl}$ of $H^{WW}_{FcFl}$ were rather implicit, 
which is why it is easier to work with $H^{WW}_{FcBl}$.)
Before we consider open membrane operators of~FcFl,
we first find a representation of the worldsheet fluctuation operator~$T_{Fl}$.
Then, we will be able to read off an open membrane operator representation.

Suppose there is a local orientation for each primary $2$-cell.
Consider an operator
\begin{align}
	T'(g^2) &\equiv X(g^2) S(W^1(g^2)) \label{eq:Tprime}
\end{align}
associated with each primary $2$-cycle~$g^2$,
where $X(g^2) \equiv \prod_{f^2 \in g^2} X(f^2)$ 
is a product of Pauli~$X$ over all $2$-cells of~$g^2$.
Such an operator gives
\begin{align}
	T'(g^2) \ket{x^1, b^2} (-1)^{\overline\Frm(x^1)}
	&= 
	\ket{x^1 + W^1(g^2), b^2 + g^2}(-1)^{\overline\Frm(x^1)+\Frm(W^1(g^1))+\overline\BiFrm(x^1,W^1(g^2))}\nonumber\\
	&=
	\ket{x^1 + W^1(g^2), b^2 + g^2}(-1)^{\overline\Frm(x^1+W^1(g^1))}.
\end{align}
Then, we will have
\begin{align}
&
R \Omega T'(g^2) \Omega R \ket{x^1,b^2} (-1)^{\overline\Frm(x^1) + \Frm(W^1(b^2))} \nonumber\\
=&
R \Omega T'(g^2) \Omega \ket{x^1, b^2} (-1)^{\overline\Frm(x^1+W^1(b^2))}\nonumber\\
=&
R \Omega T'(g^2) \ket{x^1 + W^1(b^2), b^2} (-1)^{\overline\Frm(x^1+W^1(b^2))}\\
=&
R \Omega \ket{x^1 + W^1(b^2)+W^1(g^2), b^2 + g^2} (-1)^{\overline\Frm(x^1+W^1(b^2)+W^1(g^2))}\nonumber\\
=&
R \ket{x^1 + W^1(b^2)+W^1(g^2) + W^1(b^2+g^2), b^2 + g^2} (-1)^{\overline\Frm(x^1+W^1(b^2)+W^1(g^2))}\nonumber\\
=&
\ket{x^1 + W^1(b^2)+W^1(g^2) + W^1(b^2+g^2), b^2 + g^2} \nonumber\\
&\qquad\cdot (-1)^{\overline\Frm(x^1+W^1(b^2)+W^1(g^2))+\overline\BiFrm(x^1 + W^1(b^2)+W^1(g^2), W^1(b^2 + g^2))}\nonumber\\
=&
\ket{x^1 + W^1(b^2)+W^1(g^2) + W^1(b^2+g^2), b^2 + g^2} \nonumber\\
&\qquad \cdot (-1)^{\overline\Frm(x^1+W^1(b^2)+W^1(g^2)+W^1(b^2+g^2))+\overline\Frm(W^1(b^2 + g^2))}\nonumber
\end{align}
where in the second from the last equality
we use $\overline\BiFrm(u^1,u^1) = 0$ for any $1$-cycle~$u^1$.
That is,
\begin{align}
(\Omega R)^\dagger T'(g^2)(\Omega R) &\ket{x^1,b^2} (-1)^{\overline\Frm(x^1) + \Frm(W^1(b^2))} \nonumber\\
= &\ket{(x^1)', b^2 + g^2 } (-1)^{\overline\Frm((x^1)') + \Frm(W^1(b^2+g^2))} .
\end{align}
where $(x^1)' = x^1 + W^1(b^2)+W^1(g^2) + W^1(b^2+g^2)$.
Since $(x^1)'$ is always homologous to~$x^1$,
if we sum over all $1$-cycles in the homology class~$[x^1]$,
we have
\begin{align}
	(\Omega R)^\dagger T'(\bd c^3)Z(c^3)(\Omega R) &\ket{\text{Fc}} \ket{b^2, a_1} (-1)^{\Frm(W^1(b^2))} \nonumber\\
	= &\ket{\text{Fc}}\ket{b^2 + \bd c^3, a_1} (-1)^{\Frm(W^1(b^2+\bd c^3))+\Int(c^3,a_1)}.
\end{align}
This is the correct relation of amplitudes in the ground state~$\ket{\text{FcFl}}$.
Therefore, $T'(\bd c^3)Z(c^3)$ is a correct representation of the worldsheet fluctuation operator~$T_{Fl}$
acting on the state of form~$\ket{\text{Fc}}\ket{b^2,a_1}$ where $b^2, a_1$ are cycles.
Note that it is important that the first factor~$\ket{\text{Fc}}$ is the ground state of~$H_{Fc}$.

We wish to express $T'(g^2)$ as a product of operators~$O$, each of which is associated with a $2$-cell:
\begin{align}
	T'(g^2) \equiv X(g^2) S(W^1(g^2)) = \pm \prod_{f^2 \in g^2} O(f^2) \label{eq:repr-opreq}
\end{align}
where we allow for the product to be ambiguous up to a sign. 
The challenge is to find operators $O(f^2)$ \emph{and} a local orientation of all $2$-cells~$f^2$
such that the product of $O(f^2)$ over an arbitrary null-homologous $2$-cycle~$g^2$ gives $S(W^1(g^2))$ 
as in~\cref{eq:repr-opreq}.  To this end,
we assume that the primary cellulation is a triangulation.
We use a set of coefficients~$[xyz]=[zyx] \in \ZZ_2$ where $x,y,z$ are any distinct primary $0$-cells
subject to conditions that 
\begin{align}
[xyz]+[yzx]+[zxy] = 0,\label{eq:t1conditions}\\
[xwy]+[ywz]+[zwx] = 1\nonumber
\end{align}
for any distinct~$x,y,z,w$.
These coefficients are similar to those used in~\cref{sec:againstChargeDecoration} with $t=1$,
but previously the coefficients~$[xyz]$ were defined over five vertices.
In~\cref{app:consistency} we show that there exists a solution to \cref{eq:t1conditions}
over arbitrarily many vertices,
and any two solutions differ by~$\Delta[xyz] = [xy] + [yz]$
where $[xy]=[yx]\in \ZZ_2$ are free parameters.
Ignoring secondary qubits for now,
we define an open membrane operator~$O(xyz)$ for any primary triangle~$xyz$:
\begin{align}
	O(xyz) \equiv X(xyz) S(xy)^{ [xyz] + \cnf(xy) } S(xz)^{ [xzy] + \cnf(xz) } S(yz)^{\cnf(yz)}
	\label{eq:Odef}
\end{align}
where $S(xy) = S(yx)$ are fermion string segments on primary $1$-cell~$xy$ 
constructed in~\cref{rem:fermionstringsegments},
and $\cnf(xy)=1$ if and only if the underlying configuration of primary $2$-chain has $\ZZ_2$-boundary at~$xy$.
The form of~$O(xyz)$ is virtually the same as~$\tilde M$ of~\cref{sec:againstChargeDecoration},
but here a triangle~$xyz$ ranges over all primary $2$-cells,
as opposed to the case of $\tilde M$, where we only looked at a finite, small number of triangles.
Every fermion string segment is a product of Pauli~$X$ on~$xy$ and some $Z$s around it and squares to~$+1$.
Due to~$\cnf(xy)$ in the exponents, the open membrane operator~$O(xyz)$ is not a product of Pauli operators.
Note that there is a distinguished vertex~$x$ in~\cref{eq:Odef}.  We assume that a distinguished vertex is chosen arbitrarily for each triangle.

The open membrane operators commute up to a sign regardless of distinguished vertices of triangles:
\begin{align}
O(xyz)O(yzw)O(xyz)^{-1}O(yzw)^{-1} = \pm 1.
\end{align}
So, the product of~$O$ over a primary $2$-cycle is well defined up to a sign.
Now, we claim that $O(f^2 = xyz) = O(xyz)$ satisfies~\cref{eq:repr-opreq}.
We will prove this claim below.

\paragraph{A local orientation for triangles.}

In the definition~\cref{eq:Odef} of open membrane operators,
we have not specified orientation for triangles;
the choice of a distinguished vertex of a triangle has nothing to do with the orientation of the triangle.
Hence, our goal is to define a local orientation for all triangles
such that \cref{eq:repr-opreq} is met.  There is essentially a unique local orientation for triangles such that the operator $O(g^2)$ that we have defined (up to sign) takes the form of \cref{eq:repr-opreq}, 
{\em i.e.}, has the fermion string operators precisely on the $W^1(g^2)$ line determined by the orientation.  If $xyz$ and $yzw$ are the only triangles of a $2$-cycle~$b^2$ which touch edge~$yz$,
then other open membrane operator factors are ``too far'' from $yz$.
So, the product $O(xyz)O(yzw) = \pm O(yzw) O(xyz)$ has to act 
by the fermion string operator on~$yz$
if and only if the orientations of $xyz$ and $yzw$ do not agree along~$yz$.
Put differently, 
each of the orientations of $xyz$ and $yzw$ induces an ordering on the edge~$yz$,
and the fermion string operator has to be produced upon multiplication of the two open membrane operators
if and only if these induced orderings on~$bc$ are the same.
Given such a local orientation, \cref{eq:repr-opreq} immediately follows.
Note that $O(xyz)^2 = \pm S(xy)S(yz)S(zx)$ 
is the closed fermion string operator along the boundary of triangle~$xyz$.

To find the requisite local orientation,
we define 
\begin{align}
\chi(xyz|yzw) = \begin{cases}
1 & \text{if }\bra{0(yz)} O(xyz)O(yzw) \ket{0(yz)} = 0,\\
0 & \text{otherwise}
\end{cases}
\end{align}
where $\ket{0(yz)}$ is an eigenstate of single-qubit Pauli~$Z$ on edge~$yz$.
The product~$O(xyz)O(yzw)$ may have $S(yz)$ factor,
in which case the factor gives an $X$ factor on edge~$yz$
and \mbox{$\chi(xyz|yzw) =1$.}
Obviously, $\chi$ is symmetric, {\em i.e.}, $\chi(yzw|xyz) = \chi(xyz|yzw)$.
We find it convenient to define $\chi$ for arbitrary pairs of triples of vertices.
Each string operator is some operator acting on an edge,
and for the definition of~$\chi$
we use neither the fact that the string operator transports a fermion,
nor that it is a tensor product of Pauli operators.
With this abstraction,
although we have only defined the open membrane operators~$O(xyz)$
for primary $2$-cells~$xyz$,
we may use the same formula~\cref{eq:Odef} to define $O$ for an arbitrary triple of vertices,
regardless of their distances.
This can be regarded as an embedding of our primary $2$-skeleton into a very high dimensional simplex.
To make sense of~\cref{eq:Odef} we have to introduce a qubit, unless it is already there,
for each triple (triangle) of vertices and for each pair (edge) of vertices.

Then, for any distinct vertices $x,y,z,u,v,w$, we will show that
\begin{align}
	\chi(uvx|uvy) + \chi(uvy|uvz) + \chi(uvz|uvx) &= 1 &\text{(edge)}\label{eq:Ocond}\\
	\chi(wxy|wyz) + \chi(wyz|wzx) + \chi(wzx|wxy) &= 0 &\text{(vertex)}\nonumber
\end{align}
Note here that the vertices in the argument of $\chi$ may be arbitrarily far apart.
Once we show this, \cref{lem:alignment} will imply that there exists a local orientation for all triangles
such that $S(yz)$ appears as a factor in the product $O(xyz)O(yzw)$
if and only if the orientations of $xyz$ and $yzw$ do not agree along~$yz$.
This is what we have sought.

Now, we prove~\cref{eq:Ocond} from the definition~\cref{eq:Odef} of~$O$ using~\cref{eq:t1conditions}.
We put a dot above a vertex symbol, {\em e.g.}, $\dot x y z$,
to denote that the dotted vertex~$x$ is distinguished for~\cref{eq:Odef}.
It is not difficult to enumerate all cases to check~\cref{eq:Ocond},%
\footnote{
The following enumerates all dotting possibilities
for a vertex-sharing triple of triangles without an edge common to all three, up to re-labeling:
$	\dot wxy, \dot wyz, \dot wzx$ /
$	\dot wxy, \dot wyz, w\dot zx$ /
$	\dot wxy, w\dot yz, wz\dot x$ /
$	\dot wxy, w\dot yz, w\dot zx$ /
$	\dot wxy, wy\dot z, w\dot zx$ /
$	w\dot xy, w\dot yz, w\dot zx$ /
$	wx\dot y, w\dot yz, w\dot zx$.
}
but the vertex condition follows without any further calculation.
Since relevant edges~$wx,wy,wz$ in the vertex condition are distinct,
we simply consider the product $O(wxy)O(wyz)O(wzx)O(xyz) = O(\bd (wxyz))$
where $wxyz$ is a $3$-simplex.
The calculation of~\cref{sec:againstChargeDecoration}
implies that if we multiply $O$ one at a time to cover $\bd(wxyz)$,
the loop~$\ell$ along the boundary of any intermediate $2$-chain
contains an even number of charges,
and as $\ell$ disappears eventually,
all charges must disappear, too.
This means that the remaining string operator in $O(\bd(wxyz))$ must be $\ZZ_2$-closed,
which implies the vertex condition of~\cref{eq:Ocond}.
As for the edge condition,
the following enumerates all dotting possibilities for edge-sharing triples of triangles:
\begin{align}
	uv\dot x, uv\dot y, uv\dot z;\qquad
	uv\dot x, uv\dot y, \dot u v z;\qquad
	uv\dot x, \dot u v y, \dot u v z;\\
	uv\dot x, \dot u v y, u \dot v z;\qquad
	\dot u v x, \dot u v y, \dot u v z;\qquad
	\dot u v x, \dot u v y, u \dot v z. \nonumber
\end{align}
It is straightforward to prove the edge condition%
\footnote{
This is the step where we really need the second condition~($t=1$) of~\cref{eq:t1conditions};
if $[xwy]+[ywz]+[zwx] = 0$, then the edge condition would not hold.
}
by checking all of the above possibilities, using
\begin{align}
	\chi(uv\dot x| uv \dot y) &= 1, & \chi(uv \dot x| \dot u v y) &= 1 + [uvy],\\
	\chi(\dot u v x| u \dot v y) &= 1 + [uvx]+[vuy],& \chi(\dot uvx | \dot uvy) &= 1+[uvx]+[uvy] = [xvy].\nonumber
\end{align}

\paragraph{$\mu_{Fl} = -1$.}

We have found a local orientation for all primary $2$-cells and
shown that the open membrane operators~$O$ 
multiply over a $2$-cycle to leave a string operator along the $W^1$ of the $2$-cycle,
implying that $O(b^2)$ of~\cref{eq:Odef} fulfills~\cref{eq:repr-opreq}.
Hence, we have found an explicit representation of~$M_{Fl}$
in the presence of an auxiliary ground state~$\ket{\text{Fc}}$.
If we change the orientation of some triangle,
then the open membrane operator at that triangle 
is multiplied by unconditional closed fermion string operator along its perimeter.
Therefore, we have obtained a set of (big) open membrane operators~$M_{ij}$ satisfying 
all requirements of~\cref{subsec:data} ---
simply multiply $O$ over some large triangle.

Then, applying our indicator definition in~\cref{subsec:indicator},
we see that nontrivial commutators among open membrane operators
come from nontrivial commutators among decorating open fermion string operators.
The overall effect of these nontrivial commutators of fermion string operators
is calculated in~\cref{fig:decorated3},
and the result is that $\mu_{Fl} = -1$.

\section{Discussion}\label{sec:discussion}

In this work we defined an invariant of 3+1d fermionic $\ZZ_2$ gauge theories, 
the loop self-statistics~$\mu$.
We demonstrated that $\mu$ can be measured using a process that rotates the gauge flux loop 
by an angle of $\pi$ around its diameter, 
in a way that is independent of the arbitrary choices made in the process.  
In other words, all of the nonuniversal geometric phases cancel out.  
Crucially, the gauge charge has to be a fermion in order for~$\mu$ to be well defined; 
if the gauge charge is a boson, $\mu$ is only defined up to sign.  

For the ordinary 3+1d fermionic toric code, we saw rather easily that~$\mu=+1$.
Less trivially, we also constructed a 4+1d model~$H^{WW}_{FcFl}$ (the FcFl model)
which realizes a $\mu=-1$ fermionic $\ZZ_2$ gauge theory on its boundary.
We demonstrated that this FcFl model is invertible 
by explicitly disentangling two stacked copies of it into a product state.
Further, we argued that any stand-alone 3+1d fermionic $\ZZ_2$ gauge theory must have~$\mu=+1$,
for otherwise we would be able to construct a nontrivial invertible fermionic phase in 3+1d,
which is believed not to exist.  
Hence, the FcFl model in 4+1d realizes a $\ZZ_2$ classified nontrivial bosonic phase of matter, 
stable without any symmetry.  
We argued that a $\ZZ_2$ fermionic gauge theory with~$\mu=-1$ 
can be obtained by Higgsing all-fermion QED, 
so, by a standard argument, 
the FcFl model can realize all-fermion QED on its boundary.

Let us now discuss the connection of our model to the work of~\cite{Freyd_2020} in a little more detail.  Our exactly solved models have two types of degrees of freedom: those living on secondary $1$-cells, and those living on primary $2$-cells.  These can be interpreted as $4-1=3$-form and $4-2=2$-form gauge fields respectively.  Ref.~\cite{Freyd_2020} refers to these as $E$ and $M$ respectively, and shows in \cite[Eq.~(7)]{Freyd_2020} that a general $3+1$ dimensional action for these is given by $H^5(K(\mathbb Z_2,2) \times K(\mathbb Z_2,3),{\rm U}(1))$, which by the K\"unneth formula is equal to a direct sum of the three tensor factors $H^5(K(\mathbb Z_2,3),{\rm U}(1)) \simeq \mathbb \Z_2$, $H^3(K(\mathbb Z_2,3),{\rm U}(1)) \otimes H^2(K(\mathbb Z_2,2),{\rm U}(1)) \simeq \mathbb \Z_2$ and $H^5(K(\mathbb Z_2,2),{\rm U}(1)) \simeq \mathbb \Z_2$.  These are generated by the classes $(-1)^{{\rm{Sq}}^2 E}, (-1)^{EM}, (-1)^{M {\rm{Sq}}^1 M}$ respectively.  The first of these determines whether the charge is a boson (as in the BcBl model) or a fermion (as in the FcBl and FcFl models).  The second class just states that the charge braids nontrivially with the magnetic flux loop.  Finally, the last class, $(-1)^{M {\rm{Sq}}^1 M}$, is just equal to the loop self-statistics $\mu$.

We would also like to make the following observation, noted by a referee of an earlier draft of this work.  Our loop self-statistics invariant can be written in a somewhat more symmetric way by re-ordering certain terms in the process.  The result can be checked to still be universal and equal to our invariant.  The expression is:

\begin{align*}
{\bf M} = &M_{24}[M_{13},M_{23}]M_{14}^{-1} \cdot M_{12}[M_{34},M_{13}]M_{24}^{-1} \cdot M_{23}[M_{14},M_{34}]M_{12}^{-1} \\
&M_{13}[M_{24},M_{14}]M_{23}^{-1} \cdot M_{34}[M_{12},M_{24}]M_{13}^{-1} \cdot M_{14}[M_{23},M_{12}]M_{34}^{-1}
\end{align*}

There are several avenues for future exploration.  
First, it would be nice to have a direct argument that $\mu=+1$ in any stand-alone 3+1d model, 
{\em i.e.},
one which does not rely on the classification of invertible fermionic phases in 3+1d.
In fact, it is likely sufficient to make a weaker assumption: 
namely, that any putative invertible fermionic phase in 3+1d admits a gapped 2+1d boundary 
that is possibly topologically ordered.  
For, in that case, one can cut a ball out around the $0$ vertex in~\cref{fig:invariant_figure},
large compared to the correlation length.  
This ensures that the only locations in which the circuit calculating~$\mu$ 
is potentially nontrivial are along nonintersecting edges from the outer vertices to the surface of the ball,
and by examining the process defining~$\mu$, 
one sees that all of these cancel.  
Thus, in a sense, the circuit calculating~$\mu$ localizes onto the $0$ vertex.  
Another potential approach to showing that $\mu=+1$ in any stand-alone 3+1d model 
is to argue that, if $\mu=-1$ say, then an open membrane operator on a disc 
has to square to an operator on the disc's boundary 
which is the worldline operator for a gauge neutral fermion.
But, such a particle does not exist in the theory, 
leading to a contradiction.
It remains to make any of these arguments fully rigorous.

Another avenue for further exploration is to construct continuum versions of these models, 
and in particular to find an action formulation.
In fact, thinking about 2d world-sheets embedded in 4d 
and the various ways one can put Pin structures on these 
motivated much of our early thinking on this problem.  
We would expect any such model to be closely related to that discussed in~\cite{Wang_2019}.
It would also be nice to make an explicit connection 
between our FcFl model and the $2$-form Chern-Simons field theory description 
of the 4+1d bulk that realizes all-fermion QED on its boundary~\cite{Swingle_2015}.

\begin{acknowledgments}
We thank Michael Freedman for useful conversations about Pin structures and Whitney umbrella singularities.  We thank Chong Wang for an interesting discussion.  LF is supported by NSF DMR-1939864.
\end{acknowledgments}

\appendix

\section{Details in defining the loop self-statistics}\label{sec:app_details}

\subsection{Proof of \cref{eq:hMprime}}\label{app:pfMprime}

Consider two different choices of membrane operators~$M_{ij}$ and~$M'_{ij}$.
Note that our convention is that $i<j$, 
which will be notationally important in the following discussion.  
Let us write~$M_{ij} = M_{ij;\text{int}} M_{ij;\text{bdry}}$ 
where $M_{ij;\text{int}}$ acts in the interior of the plaquette~$(ij0)$ 
and $M_{ij;\rm{bdry}}$ acts near the boundary.
Similarly, $M'_{ij} = M'_{ij;\text{int}} M'_{ij;\text{bdry}}$.  
Then, $M'_{ij;\text{int}} M_{ij;\text{int}}^{-1}$ is a membrane operator 
which preserves the ground state reduced density matrix inside the plaquette~$(ij0)$
and creates some topologically trivial excitations at its boundary.
By virtue of being topologically trivial,
these can be removed by a shallow circuit $G_{\rm{bdry}}^{-1}$ acting near this boundary, 
so the operator 
\begin{align}
	F_{ij}^{ij0} \equiv G_{\rm{bdry}}^{-1} M'_{ij;\rm{int}} M_{ij;\rm{int}}^{-1}
\end{align}
has support on the interior of the plaquette~$(ij0)$ 
and maps the ground state to itself up to a $U(1)$ phase.
In fact, since all of the configuration states $|\cnf\rangle$ 
look like the ground state in the interior of the plaquette~$(ij0)$, 
all of these $|\cnf\rangle$ are eigenstates of $F_{ij}^{ij0}$ 
with the same eigenvalue.
Now,
\begin{align}
F_{ij} 
&\equiv M'_{ij} M_{ij}^{-1} 
\nonumber\\
&= 
M'_{ij;\rm{int}} M'_{ij;\rm{bdry}}M_{ij;\rm{bdry}}^{-1} M_{ij;\rm{int}}^{-1} 
\\&= 
\left(M'_{ij;\rm{int}}M'_{ij;\rm{bdry}}M_{ij;\rm{bdry}}^{-1}(M'_{ij;\rm{int}})^{-1} \right) \left(M'_{ij;\rm{int}} M_{ij;\rm{int}}^{-1}\right) 
\nonumber\\ &=  
\left(M'_{ij;\rm{int}}M'_{ij;\rm{bdry}}M_{ij;\rm{bdry}}^{-1}(M'_{ij;\rm{int}})^{-1} G_{\rm{bdry}} \right) F_{ij}^{ij0}
\nonumber
\end{align}
The operator in parentheses of the last line is a shallow circuit
supported near the boundary of the plaquette~$(ij0)$.
Since every~$\ket \cnf$ is an eigenstate of~$F_{ij}^{ij0}$
and
any state of form~$M_{ij} \ket{\cnf'}$ is an eigenstate of~$F_{ij}$,
we see that any state of form~$M_{ij} \ket{\cnf'}$ is an eigenstate of this circuit.
We can partition this circuit as
\begin{align} \label{eq:a3}
	\left(M'_{ij;\rm{int}}M'_{ij;\rm{bdry}}M_{ij;\rm{bdry}}^{-1}(M'_{ij;\rm{int}})^{-1} G_{\rm{bdry}} \right)
	=
	F_{ij}^0 F_{ij}^i F_{ij}^j F_{ij}^{i0} F_{ij}^{j0} F_{ij}^{ij} 
\end{align}
where $F_{ij}^{i0}$, $F_{ij}^{j0}$, and~$F_{ij}^{ij}$
are supported on neighborhoods of the interiors of edges~$(i0)$, $(j0)$ and~$(ij)$, respectively, 
and $F_{ij}^0$, $F_{ij}^i$, and~$F_{ij}^j$ 
are supported on neighborhoods of the vertices~$0$, $i$, and~$j$, respectively, 
as illustrated in \cref{fig:Fij_figure}.
Furthermore, we can ensure that, 
first, the states $|\cnf\rangle$ are eigenstates of $F_{ij}^{ij}$ 
with a common eigenvalue~$f_{ij}^{ij}$,
and second, the states~$|\cnf\rangle$ are 
eigenstates of~$F_{ij}^{i0}$ with eigenvalues~$f_{ij}^{i0}(\cnf)$ 
that depend on~$\cnf$ only through the occupation number of the edge~$(i0)$, 
and similarly for $F_{ij}^{j0}$.
Indeed, we can pick any partition we like, 
and then modify $F_{ij}^{i0}$, $F_{ij}^{j0}$, and $F_{ij}^{ij}$ locally 
near the endpoints of the corresponding intervals as needed 
to remove local excitations.  
Note that if there are gauge charges left at any of the endpoints, 
they will be left behind at \emph{both} endpoints of all of these operators; 
in this case we can multiply $G_\text{bdry}$ by a closed gauge string operator 
surrounding the plaquette to get to a situation where there are no gauge charges left at the endpoints.

Thus, finally we have
\begin{align} 
	F_{ij} 
	= 
	F_{ij}^0 F_{ij}^i F_{ij}^j F_{ij}^{i0} F_{ij}^{j0} F_{ij}^{ij} F_{ij}^{ij0} 
	\label{eq:Fdecomp}
\end{align}

\begin{figure}
\centering
\includegraphics[width=2.0in]{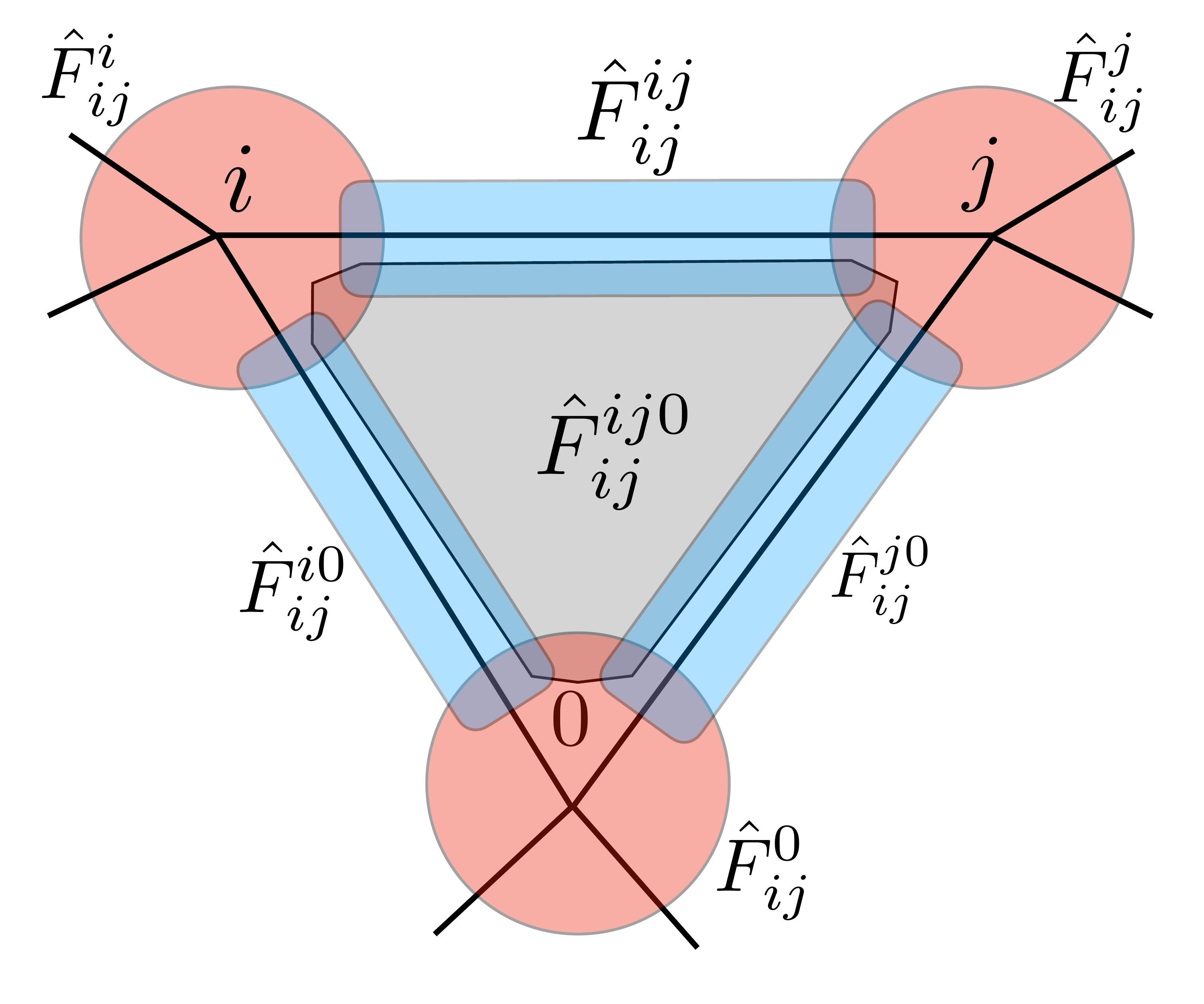}
\caption{}
\label{fig:Fij_figure}
\end{figure}

Now, suppose that $|\cnf\rangle$ is a configuration state 
that gets acted on by~$M_{ij}$, or is the result of acting with $M_{ij}^{-1}$, 
in the expression for $\bM$.
Then the above facts taken together imply that $M_{ij} |\cnf\rangle$, 
which is equal to some other configuration state~$|\cnf'\rangle$ up to a $U(1)$ phase, 
is an eigenstate of all seven of the operators appearing on the right-hand side of \cref{eq:Fdecomp}. 
The corresponding eigenvalues have the following properties.  
The eigenvalues~$f_{ij}^{ij0}$ and~$f_{ij}^{ij}$ are independent of $\cnf'$.  
The eigenvalues~$f_{ij}^{i0}(\cnf')$ and~$f_{ij}^{j0}(\cnf')$ depend on~$\cnf'$ 
only through the occupation numbers of the edges~$(i0)$ and $(j0)$, respectively.
Finally, the eigenvalues~$f_{ij}^0(\cnf')$, $f_{ij}^i(\cnf')$, and~$f_{ij}^j(\cnf')$ depend on~$\cnf'$ 
only through how this configuration looks locally near the vertices~$0$, $i$, and~$j$, respectively.
To see these,
we simply note that for any two configurations~$\cnf', \cnf''$ 
which look locally the same in a ball-like region~$R$ 
({\em e.g.}, the neighborhood of a vertex or of the interior of an edge), 
the corresponding states~$|\cnf'\rangle$ and~$|\cnf''\rangle$ 
can be transformed into each other by a unitary supported away from~$R$; 
such a unitary will commute with any operator supported on~$R$.

We thus have
\begin{align}
M'_{ij}|\cnf\rangle 
=
F_{ij}^0(\cnf') F_{ij}^i(\cnf') F_{ij}^j(\cnf') F_{ij}^{i0}(\cnf') F_{ij}^{j0}(\cnf') F_{ij}^{ij} F_{ij}^{ij0} M_{ij}|\cnf\rangle
\end{align}
Let us define $u_{ij}^i(\cnf) = F_{ij}^i(\cnf') F_{ij}^{i0}(\cnf')$, 
$u_{ij}^j(\cnf)=F_{ij}^j(\cnf') F_{ij}^{j0}(\cnf')$, 
and $u_{ij}^0(\cnf) = F_{ij}^0(\cnf') F_{ij}^{ij} F_{ij}^{ij0}$.  
Note that these still depend only on how $\cnf$ looks locally near the vertices~$i$, $j$, and~$0$, respectively.  
We finally have:
\begin{align*}
	M'_{ij}|\cnf\rangle = u_{ij}^i(\cnf) u_{ij}^j(\cnf) u_{ij}^0(\cnf) M_{ij}|\cnf\rangle
\end{align*}
which is \cref{eq:hMprime}.

\subsection{$\mu$ is independent of the choice of $M_{ij}$} \label{sec:app_details2}

Now let us prove that the various phases in the expression for the loop self-statistics in terms of the $M'_{ij}$ 
cancel in pairs with their complex conjugates.  
Fix a particular~$(ij)$.
Then, as shown in~\cref{fig:decorated}, 
each~$M_{ij}$ occurs precisely three times in the expression for~$\bM$, as does its inverse.  
Then we have three instances for each of 
$u_{ij}^i, u_{ij}^j, u_{ij}^0, {\overline{u_{ij}^i}}, {\overline{u_{ij}^j}}, {\overline{u_{ij}^0}}$, 
evaluated on different configurations~$\cnf$. 
The fact that these all cancel is illustrated, for the various choices of $(ij)$, 
in~\cref{fig:M12_M13,fig:M14_M23,fig:M24_M34}, 
which are color coded in the same way as~\cref{fig:decorated}, 
and show only the portion of the configuration near the relevant $(ij0)$~plaquette.
Specifically, consider first $u_{ij}^0$ and~${\overline{u_{ij}^0}}$.  
Then the factors of~$u_{ij}^0(\cnf)$ generated on each of the three configurations at the top row 
cancel with the corresponding factors of~${\overline{u_{ij}^0}}$ generated on the configurations directly below.  
This is because, when one acts with $M_{ij}^{-1}$ on each of the configurations in the bottom row, 
the local configuration near~$0$ becomes identical to that of the corresponding configuration directly above; 
recall that, according to~\cref{eq:hMprimeconjugate}, 
the $\cnf$ that appears in the factor of~${\overline{u_{ij}^0}}$ 
generated by the action of~$\left(M'_{ij}\right)^{-1}$ 
is the configuration obtained after~$\left(M'_{ij}\right)^{-1}$ has acted.
Similarly, the factors of $u_{ij}^i$ and ${\overline{u_{ij}^i}}$ 
cancel between the configurations on the top and bottom row 
connected by a line labeled with $i$, 
and likewise for~$u_{ij}^j$ and~${\overline{u_{ij}^j}}$.

\begin{figure}
\centering
\includegraphics[width=6in, trim={0ex 43ex 0ex 0ex},clip]{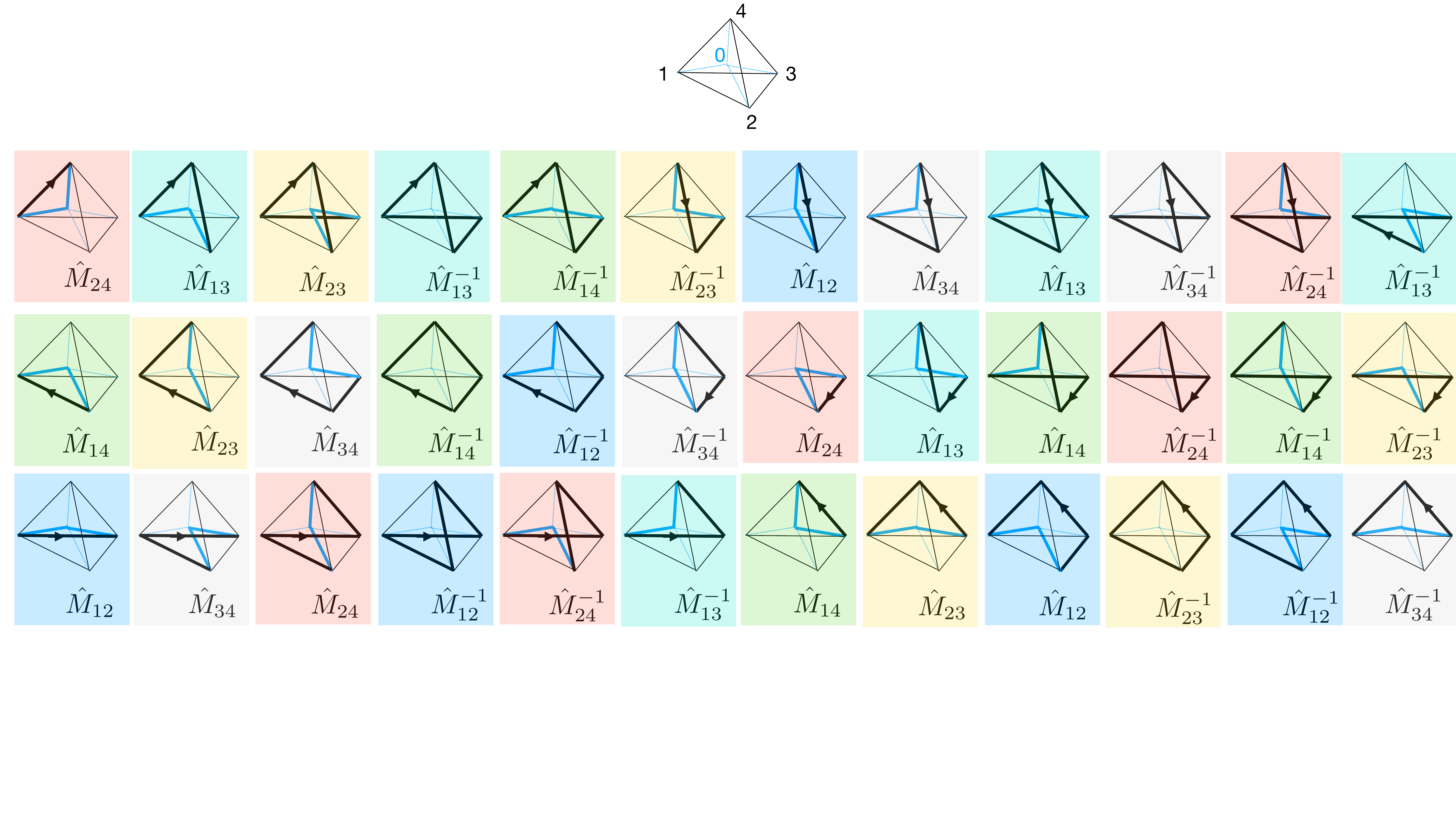}
\caption{There are six different colors, corresponding to the six possible choices of $(ij)$.  Each corresponding $M_{ij}$ occurs precisely three times, as does its inverse.  We start with the configuration in the top left, and act on each configuration state with the operator directly below it.}
\label{fig:decorated}
\end{figure}

\begin{figure}
\centering
\includegraphics[width=0.8\textwidth, trim={20ex 50ex 20ex 50ex},clip]{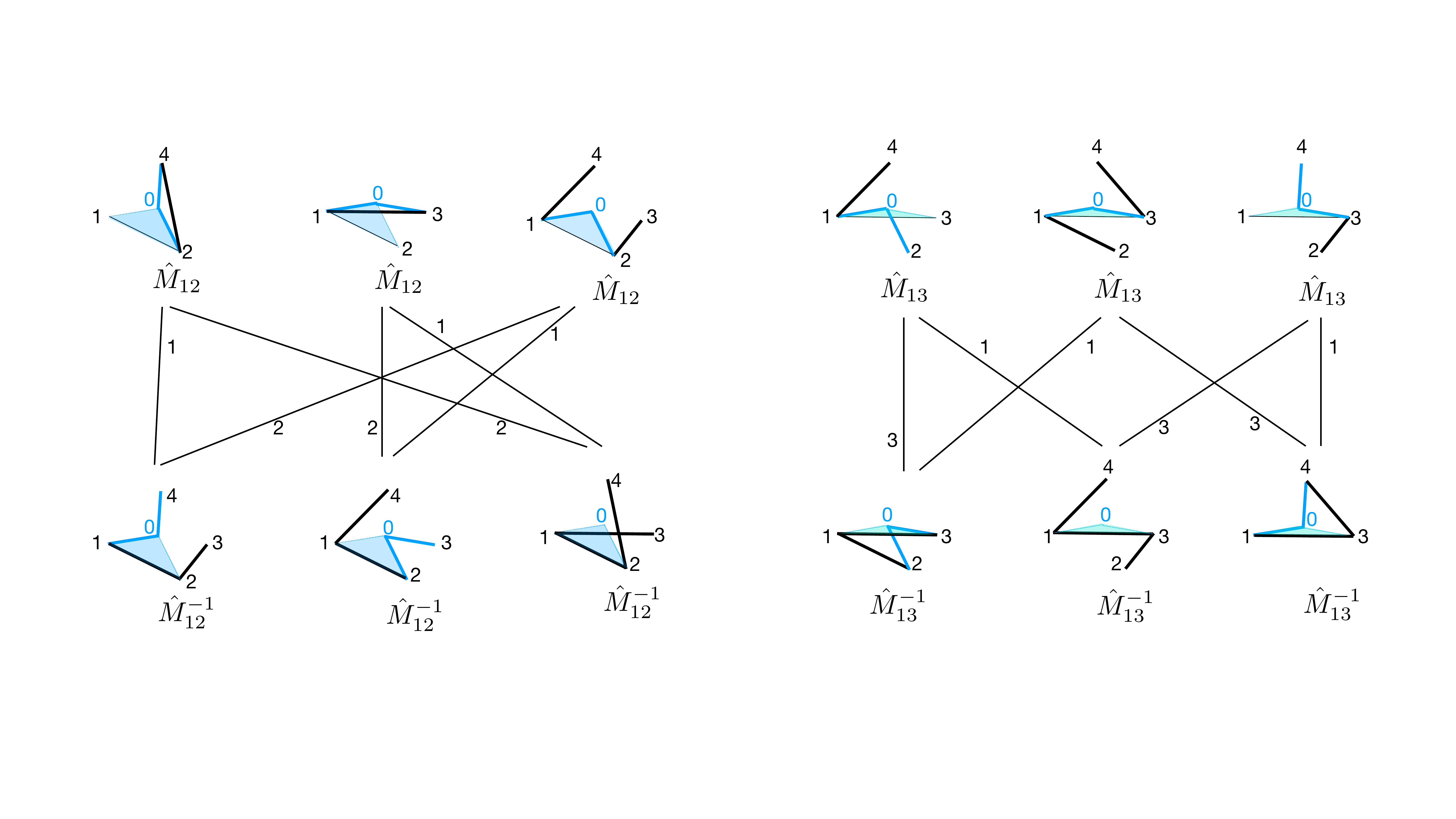}
\caption{Cancellation of phases for $(ij)=(12)$ and $(13)$}
\label{fig:M12_M13}
\end{figure}

\begin{figure}
\centering
\includegraphics[width=6in, trim={0ex 45ex 0ex 50ex},clip]{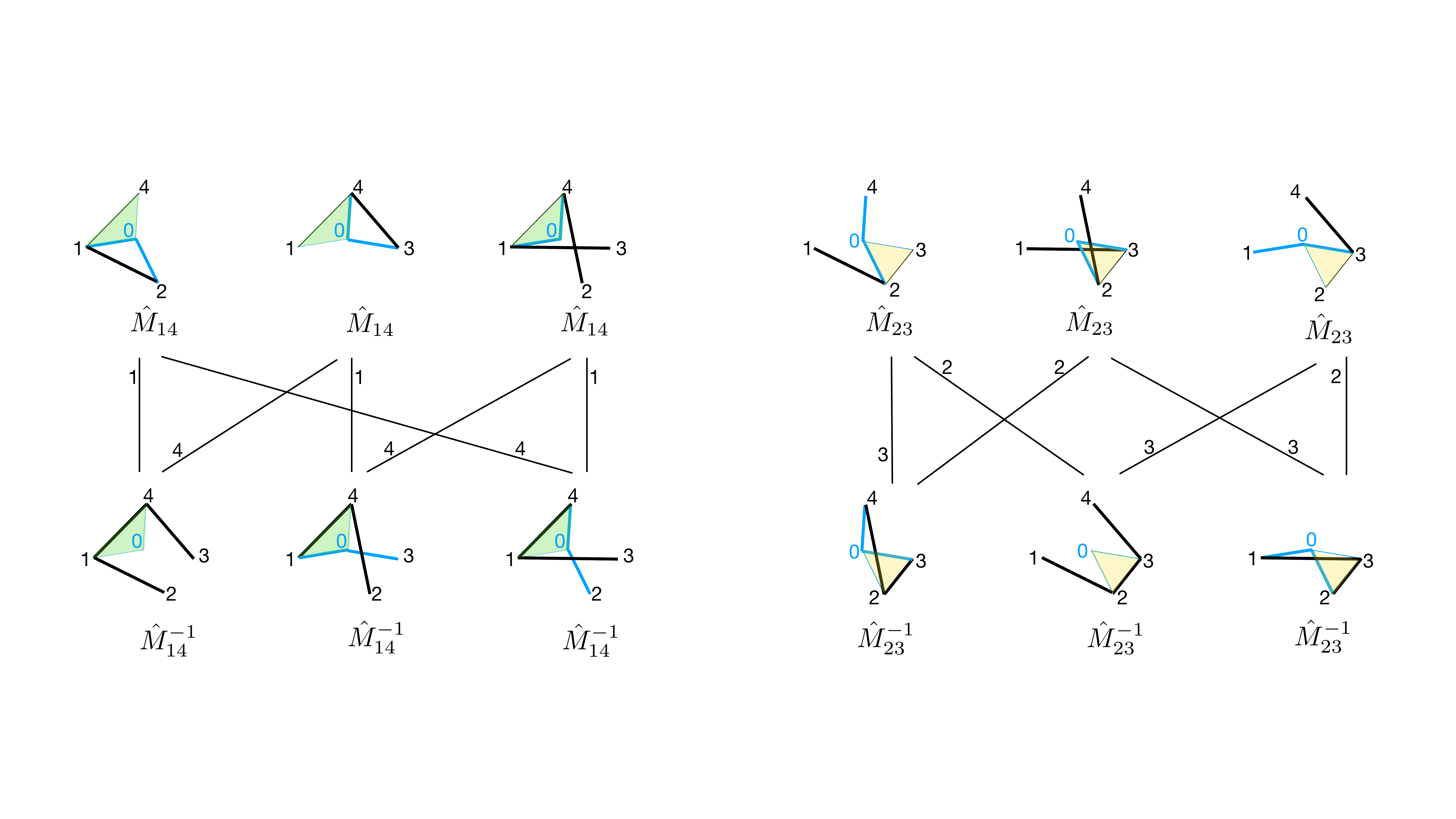}
\caption{Cancellation of phases for $(ij)=(14)$ and $(23)$}
\label{fig:M14_M23}
\end{figure}

\begin{figure}
\centering
\includegraphics[width=6in, trim={0ex 45ex 0ex 50ex},clip]{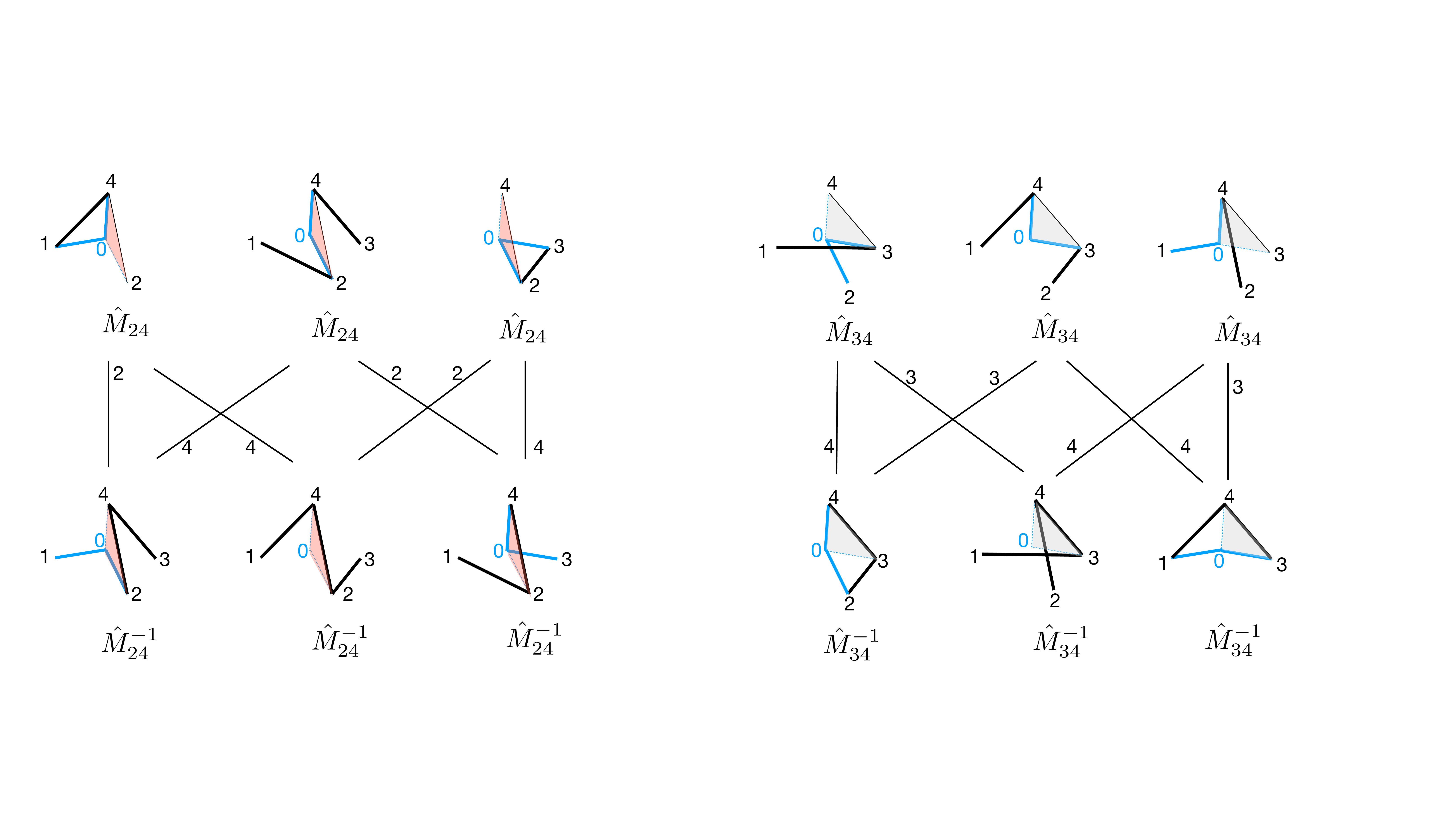}
\caption{Cancellation of phases for $(ij)=(24)$ and $(34)$}
\label{fig:M24_M34}
\end{figure}

\section{Decoration coefficients}\label{app:consistency}

Let $V = \{0,1,2,\ldots\}$ be a label set of $5$ or more elements.
Whenever $a,b,c$ are distinct elements of the label set,
we denote by $[abc]=[cba]$ a value in $\ZZ_2 = \{0,1\}$.
Suppose for any distinct $a,b,c,d \in V$
\begin{align}
	0 &= (ab0) \equiv [ab0] + [b0a] + [0ab] \qquad \text{ if } a,b \neq 0, \label{eq:03c}\\
	0 &= (abcd) \equiv [abc] + [bcd] + [cda] + [dab]. \label{eq:4c}
\end{align}
Regard $[0ab] = [ba0]$ as primitive variables for distinct~$a,b$;
we have not checked whether they are indeed independent variables.
Solve for $[a0b]$ and $[abc]$ in terms of these primitive variables:
\begin{align}
[a0b] &= [0ba] + [0ab],\\
[abc] &= [bc0] + [c0a] + [0ab] = [0cb] + [0ac] + [0ca] + [0ab].\nonumber
\end{align}
Then, it readily follows that for all distinct $a,b,c \in V$
\begin{align}
	(abc) = [abc]+[bca]+[cab] = 0. \label{eq:3c}
\end{align}
The condition~$(abcd) = 0$ puts a constraint on our primitive variables:
\begin{align}
	0 = (abcd) = [a0b] + [b0c] + [c0d] + [d0a] \label{eq:const}
\end{align}
which further implies that for all distinct $x,a,b,c \in V$
\begin{align}
	(abcde) &\equiv [abc] + [bcd] + [cde] + [dea] + [eab],\nonumber\\
	(01234) &= (01234) + (0123) + (0234) + (023) = [401]+[103]+[302]+[204] = 0,\nonumber\\
	(abcde) &= [eab] + [bad] + [dac] + [cae] = 0 \label{eq:tinv}
\end{align}
where the last line is because our manipulation is invariant under label permutations.
\Cref{eq:tinv} implies that for all distinct $x,a,b,c \in V$
\begin{align}
	t(x;abc) \equiv [axb] + [bxc] + [cxa] = t(x;adc). \label{eq:tdef}
\end{align}
Since any two unordered triples can be ``connected'' via a sequence of unordered triples
where any neighboring pair in the sequence share two elements,
this means that \emph{$t(x) = t(x;abc)$ is independent of $abc$.}
Then, \eqref{eq:3c} implies that for all distinct $a,b,c,d \in V$
\begin{align}
	t(a) + t(b) + t(c) + t(d) &= t(a;bcd) + t(b;acd) + t(c;abd) + t(d;abc) = 0.
\end{align}
This means that $t(x) = t(x;012) = t(0) + t(1) + t(2) = t$ is a constant for all $x \neq 0,1,2$.
If we have only $5$ labels, then the left-hand side of a formula~$\sum_{a=0}^4 t(a) = t(0)$
is invariant under label permutations,
so we have $t = t(a)$ for all~$a \in V$.
If we have $6$ or more labels, then $t(a) = t(5) + t(4) + t(3) = t$ for $a = 0,1,2$.
Therefore, \emph{$t(a) = t$ is always a constant for all~$a \in V$ if~$|V| \ge 5$}.

\paragraph{$t=0$.}

We regard~$[b 0 c]$ as a variable on ``edge''~$b c$,
{\em i.e.}, the set $\{ [b 0 c] : b,0,c \text{ are distinct}\}$ 
is a $1$-cochain on the abstract simplex (a complete graph) formed by~$\{1,2,\ldots\} = V \setminus \{0 \}$.
The equation~$0 = t = [b 0 c] + [c 0 d] + [d 0 b]$ is a cocycle condition,
but the degree~$1$ cohomology of a simplex whose dimension is $\ge 3$ is trivial,
so we have~$[b 0 c] = [b 0] + [c 0]$ 
for some choice of~$\{ [b 0] \in \ZZ_2 : b \neq 0 \}$;
here, $[b 0]$ is a new symbol.
Defining $[b c] = [b c 0] + [c 0]$,
we see that $[b c] + [c b] = [b c 0] + [c 0] + [cb0] + [b0] = [bc0] + [cb0] + [b0c] = 0$ by \eqref{eq:3c}.
Then, \eqref{eq:4c} implies that 
\begin{align}
[bcd]&=[cd 0]+[d 0 b]+[0 bc] = [cd0]+[d0b]+([b0c]+[bc0]) \nonumber\\
 &=[cd0] + ([d0b]+[b0c])+[bc0] = [cd0]+[c0d]+[bc0] \\
 &= [cd0]+([d0]+[c0]) + [bc0] = [cd]+[bc].\nonumber
\end{align}
Conversely, if $[ab]=[ba] \in \ZZ_2$ are arbitrarily given 
and we set $[abc] = [ab] + [bc]$, which is equal to $[ba] + [cb] = [cba]$,
then $t=0$ and all \cref{eq:3c,eq:4c} are obviously satisfied.
Therefore, \emph{we conclude that $t=0$ if and only if
there is a choice of $\{[ab]=[ba] \in \ZZ_2~:~ a,b \in V, a \neq b\}$ such that $[abc] = [ab] + [bc]$}.

\paragraph{$t=1$.}

Since the difference of any two $t=1$ solutions is a $t=0$ solution,
it suffices to find one $t=1$ solution.
If $V = \{0,1,2,3,4\}$,
a solution is obtained by setting for any distinct $a,b,c \in \{1,2,3,4\}$,
\begin{align}
	[0ab] &= 0 \text{ except } [012] = [034] = 1.
\end{align}
This satisfies our sole nontrivial condition~\eqref{eq:const};
there are three cases to consider $(abcd) = (1234)$, $(1324)$, and $(1243)$.

For larger~$V$, we use induction.
Suppose we have a solution~$\{[abc]\}$ on~$V\supset \{0,1\}$ 
and we seek to find a solution on~$V \sqcup \{\star\}$.
Define
\begin{align}
[0\star a] &= 0 \text{ for all }a \in V,\quad \text{ and } \quad [0 1 \star] = 0.
\end{align}
Since $[0\star a]$ and $[01\star]$ are new variables
and
no linear relation involving two primitive variables may be generated by~\eqref{eq:const},
we see that this is an allowed choice.
It remains to find~$[0y\star]$ for~$y \in V\setminus\{0,1\}$.
\Cref{eq:tdef} implies that for any $y \in V \setminus\{0,1\}$
\begin{align}
	[\star 0 y] + [y 0 1] + [1 0 \star] &= t = 1,\nonumber\\
	([0 \star y] + [0 y \star]) + [y01] + ([01\star] + [0\star 1]) &= 1,\nonumber\\
	[0y \star] + [y01] &= 1.
\end{align}
The last line determines~$[0y\star]$ because $[y01]$ is given by the induction hypothesis.
This completes the induction for a $t=1$ solution.
(The same induction is applicable for $t=0$ once we have a solution with $|V| = 5$.)

\section{Local orientations of triangles}\label{app:agreement}

Let $V = \{0,1,2,\ldots\}$ be a set (of vertices).
We call an unordered pair of distinct vertices an edge,
and an unordered triple of distinct vertices a triangle.
We may regard $V$ as a simplicial complex of one simplex~$V$;
we will be interested in the $2$-skeleton of this simplicial complex.
\begin{definition}
An \emph{alignment}~$\chi$ is a symmetric $\ZZ_2$-valued function 
on all pairs of edge-sharing triangles.
An alignment~$\chi$ is \emph{regular} if $\chi$ satisfies the following two conditions.
\begin{itemize}
	\item (edge) for any triple~$abc,abd,abe$ of triangles that share a common edge~$ab$,
		we have $\chi(abc|abd) + \chi(abd|abe) + \chi(abe|abc) = 1$.
	\item (vertex) for any triple~$vab,vbc,vca$ of triangles that share a common vertex~$v$ 
	but without any edge common to all,
		we have $\chi(vab|vbc) + \chi(vbc|vca) + \chi(vca|vab) = 0$.
\end{itemize}
\end{definition}

An orientation for a triangle is by definition a preferred cyclic ordering of its vertices.
Let us define a regular alignment on~$V$ where each triangle is given an orientation.
On each edge, the orientation of a triangle that contains the edge
induces an ordering of vertices of the edge.
For two triangles~$abc,bcd$ that share an edge,
we define $\chi(abc|bcd)$ to be~$1$ if and only if the induced orderings on the shared edge agree,
{\em i.e.}, the value of the alignment is nonzero if and only if the local orientation
gives an orientation domain wall element on the shared edge.
Let us show that this alignment is regular.
For any triple of triangles that share a common edge,
we have three induced orderings on the common edge.
Either all three orderings agree or only two agree while the third does not.
In any case, the sum of alignment values must be odd.
This is the edge condition.
For any triple of triangles that share a common vertex but without any edge common to the three triangles,
either exactly two alignment values are~$1$ or none of them is~$1$.
In any case, the sum of $\chi$ values around the vertex must be even.
This is the vertex condition.
Therefore, $\chi$ is regular.
Let us say in this case that $\chi$ is \emph{compatible} with the local orientation of triangles.
\begin{lemma}\label{lem:alignment}
	For a regular alignment~$\chi$ on~$V$,
	there exists a local orientation for all triangles compatible with~$\chi$.
\end{lemma}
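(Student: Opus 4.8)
The plan is to turn the existence of a compatible local orientation into a coboundary statement on a graph and then verify that coboundary condition by induction on $|V|$. Fix once and for all a \emph{reference} local orientation $o_0$ — say, the increasing cyclic order on the vertices of each triangle under some total order on $V$; by the paragraph just before \cref{lem:alignment}, its induced alignment $\chi_0$ is regular. Every local orientation is $o_0$ \emph{twisted} by a function $\delta\colon\{\text{triangles}\}\to\ZZ_2$, where $\delta(t)=1$ means $t$ is oriented oppositely to $o_0$; and for a pair of triangles $t_1,t_2$ sharing an edge, the twisted orientation $o_0+\delta$ agrees with $\chi$ on that edge precisely when
\[
\delta(t_1)+\delta(t_2)=\chi(t_1|t_2)+\chi_0(t_1|t_2)=:\psi(t_1,t_2).
\]
So it suffices to find a $\delta$ whose ``difference'' along every edge of the graph $G$ — one vertex per triangle of $V$, one edge per edge-sharing pair of triangles — equals the prescribed symmetric function $\psi$.

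Since $\chi$ and $\chi_0$ are both regular, $\psi=\chi+\chi_0$ sums to $1+1=0$ around any \emph{edge triple} (three triangles with a common edge) and to $0+0=0$ around any \emph{vertex triple} (three triangles with a common vertex but no common edge). Viewing these triples as the $2$-cells of a complex $\widehat G$ built on the (connected) graph $G$, this says $\psi$ is a $1$-cocycle of $\widehat G$, and such a $\psi$ is the coboundary of some $\delta$ exactly when $H_1(\widehat G;\ZZ_2)=0$, i.e.\ when the edge and vertex triples generate the $\ZZ_2$ cycle space of $G$. One may equally well skip $\psi$ and build $\delta$ directly along a spanning tree of $G$, in which case the same generation statement is precisely what makes the off-tree constraints automatic.

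I would prove the generation statement — equivalently, construct $\delta$ — by induction on $|V|$. For $|V|\le 3$ the graph $G$ has no edges and there is nothing to do. For the step, pick a vertex $\star$ and set $V'=V\setminus\{\star\}$; every triangle, edge triple and vertex triple of $V'$ is one of $V$, so by induction we get $\delta'$ on the triangles of $V'$ solving all constraints internal to $V'$. It remains to define $\delta$ on the triangles $\star ab$, $ab\in\binom{V'}{2}$. For fixed $ab$, each triangle $abc$ of $V'$ imposes $\delta(\star ab)=\delta'(abc)+\psi(\star ab,abc)$; two such prescriptions (from $abc$ and $abc'$) coincide because $(\star ab,abc,abc')$ is an edge triple, so $\psi(\star ab,abc)+\psi(abc,abc')+\psi(abc',\star ab)=0$ while $\delta'(abc)+\delta'(abc')=\psi(abc,abc')$. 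Thus $\delta(\star ab)$ is well defined, and the only constraints left are those between two triangles $\star ab$, $\star ac$ through $\star$ (shared edge $\star a$). For $|V|\ge 5$ pick $d\neq\star,a,b,c$; adding the three already-satisfied relations $\delta(\star ab)+\delta(abd)=\psi(\star ab,abd)$, $\delta(acd)+\delta(\star ac)=\psi(acd,\star ac)$, $\delta(abd)+\delta(acd)=\psi(abd,acd)$ reduces the claim to $\psi$ summing to zero around the $4$-cycle $\star ab\!-\!abd\!-\!acd\!-\!\star ac\!-\!\star ab$ of $G$; and that $4$-cycle is the $\ZZ_2$-sum of the vertex triples $(\star ab,abd,\star ad)$, $(\star ad,acd,\star ac)$ and the edge triples $(abd,acd,\star ad)$, $(\star ab,\star ac,\star ad)$, each of which carries zero $\psi$-sum. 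For $|V|=4$ there is no spare $d$, and the single vertex triple $(\star ab,abc,\star ac)$ does the job directly.

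The main obstacle is this last point in the inductive step: exhibiting the relevant short cycle of $G$ as an explicit $\ZZ_2$-combination of edge and vertex triples (equivalently, that the elementary relations generate $H_1(G;\ZZ_2)$). Everything else — setting up the reference twist $\chi_0$, reading off the compatibility equation, propagating $\delta$ — is formal; the only genuine computation is the bookkeeping of which $G$-edges cancel when the four elementary $3$-cycles above are summed, together with keeping track of which auxiliary triangles $\star ad$ must be adjoined. Once that is in place, the constructed $\delta$, and hence $o_0+\delta$, is the local orientation compatible with $\chi$.
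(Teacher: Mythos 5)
Your proof is correct, but it takes a genuinely different route from the paper's. The paper fixes a total order on $V$, orients the triangle $012$ once and for all, propagates the orientation along the explicit chain $012$, $01c$, $0bc$, $abc$, and then verifies compatibility for every remaining edge-sharing pair by an exhaustive diagrammatic case analysis (thirteen labeled inference steps applied to several families of configurations). You instead linearize the problem: subtracting the regular alignment $\chi_0$ induced by a fixed reference orientation turns ``find a compatible orientation'' into ``find $\delta$ with $\delta(t_1)+\delta(t_2)=\psi(t_1,t_2)$ on the triangle-adjacency graph,'' where $\psi=\chi+\chi_0$ sums to zero over every edge triple and every vertex triple, and you then build $\delta$ by induction on $|V|$. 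The only real work in your version is the decomposition of the $4$-cycle through $\star ab$, $abd$, $acd$, $\star ac$ into the two vertex triples $(\star ab,abd,\star ad)$, $(\star ad,acd,\star ac)$ and the two edge triples $(abd,acd,\star ad)$, $(\star ab,\star ac,\star ad)$, together with the single vertex triple $(\star ab,abc,\star ac)$ when $|V|=4$; I checked the cancellation pattern in that four-relation sum, the well-definedness of $\delta(\star ab)$ via the edge triple $(\star ab,abc,abc')$, and the classification of which pairs of new triangles actually share an edge, and all of it goes through. Both arguments ultimately rest on the same fact---that the elementary edge and vertex relations generate the relevant cycles of the adjacency graph---but your induction localizes the verification to two short computations, whereas the paper's spanning-chain construction demands a longer enumeration; in exchange, the paper's version produces the orientation by a completely explicit formula with no reference orientation and no appeal to the regularity of $\chi_0$.
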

\begin{proof}
	If a triangle $abc$ is oriented, and $bcd$ is an edge-sharing triangle,
	then the orientation of~$bcd$ is determined by~$\chi(abc|bcd)$ in order for $\chi$ to be compatible:
	the induced orderings on $bc$ must be opposite if $\chi(abc|bcd) = 0$; otherwise they must be the same.
	We write $\xymatrix@=2ex{abc \ar@{-}[r]& bcd}$ to denote that 
	the two triangles $abc,bcd$ are given orientations that are compatible with~$\chi(abc|bcd)$.
	Then, the edge and vertex conditions translate to the following diagrammatic inference rules.
	\begin{align}
		\xymatrix@!0{
			& abd\ar@{-}[dl]\ar@{-}[dr] & \\
			abc& &abe
		} 
		&\quad\Longrightarrow\quad
		\xymatrix@!0{
			& abd\ar@{-}[dl]\ar@{-}[dr] & \\
			abc\ar@{-}[rr]^\prime& &abe
		}&\text{(edge condition)}\label{eq:edgecond}\\
		\xymatrix@!0{
			& vbc\ar@{-}[dl]\ar@{-}[dr] & \\
			vab& &vca
		} 
		&\quad\Longrightarrow\quad
		\xymatrix@!0{
			& vbc\ar@{-}[dl]\ar@{-}[dr] & \\
			vab\ar@{-}[rr]^\prime& &vca
		}&\text{(vertex condition)}\label{eq:vertexcond}
	\end{align}
	Here, distinct letters stand for distinct vertices,
	and primes denote inferred compatibility.
	
	To define a local orientation, we order $V = \{0,1,2,\ldots\}$ by the integer ordering.
	Orient the first triangle~$012$ once and for all.
	Then, we orient all triangles by
	\begin{align}
		\xymatrix{
			012 \ar@{-}[r] & 01c \ar@{-}[r]& 0bc \ar@{-}[r] & abc
		}\label{eq:orient}
	\end{align}
	for any $a,b,c$ such that $1 \le a < b < c$.
	
	It remains to show that the orientation defined by~\eqref{eq:orient}
	is compatible with~$\chi$ for any pair of edge-sharing triangles.
	To this end we start with an example chain of inference:
	\begin{align}
		&\xymatrix{
			& 012\ar@{-}[dl]\ar@{-}[rd] & \\
			01c\ar@{-}[dd]\ar@{-}[rr]|{(i)}\ar@{-}[dr]|{(iii)} & & 01z\ar@{-}[dl]\ar@{-}[dd]\\
			& 0cz\ar@{-}[dl]|{(iv)}\ar@{-}[dr]|{(ii)} &\\
			0bc\ar@{-}[rr]|{(v)} & &0bz
		}\qquad
		\xymatrix{
			&&\\
			0bc\ar@{-}[dd]\ar@{-}[rr]|{(ii)} \ar@{-}[dr]|{(vi)} & & 0yc\ar@{-}[dl]\ar@{-}[dd]\\
			& byc\ar@{-}[dl]|{(viii)}\ar@{-}[dr]|{(vii)} &\\
			abc\ar@{-}[rr]|{(ix)} & & ayc
		}
		\qquad
		\xymatrix{
		0ca\ar@{-}[r]|{(iv)}\ar@{-}[d] & 0bc & 0cz \ar@{-}[l]|{(iv)}\ar@{-}[d]\\
		bca \ar@{-}[ru]|{(x)} \ar@{-}[rr]|{(xi)} && bcz\ar@{-}[lu]|{(x)}\\
		 & 0bc \ar@{-}[dr]|{(x)}\ar@{-}[r] & xbc\ar@{-}[d]|{(xiii)}\\
		bac\ar@{-}[ru]|{(vi)}\ar@{-}[rr]|{(xii)} & & bcz
		} \label{eq:startingdiagram}
	\end{align}
	Here, any unmarked line exists due to~\eqref{eq:orient}.
	The letters denote arbitrary vertices that are distinct subject to the ordering implied by the notation.
	For example, in the lower right diagram it may be that~$b=1$.
	The lines marked by roman numeral is inferred in order 
	by either the edge or vertex condition.
	As each line is a mathematical statement,
	equivalent statements are labeled the same.
	
	The diagrams in~\eqref{eq:startingdiagram} actually prove the compatibility for all cases.
	To see this, we enumerate all edge-sharing pairs of triangles in a systematic fashion below,
	where a class of pairs is given a wiggly line.
	A wiggly line with a circle means that it is part of the definition~\eqref{eq:orient}.
	A wiggly line with a roman numeral means that it follows from
	the corresponding labeled edge in~\eqref{eq:startingdiagram}.
	
	For triangles sharing an edge with triangle~$012$, we have 
	\begin{align}
		\xymatrix{
		& 012 & \\ 
		01x\ar@{~}[ru]|\circ  & 02x\ar@{~}[u]|{(iii)} & 12x\ar@{~}[lu]|{(x)}
		}.
	\end{align}
	For triangles sharing an edge with triangle~$01c$ where $c > 2$, we have
	\begin{align}
		\xymatrix{
		&& 01c & & \\
		1cz \ar@{~}[rru]|{(x)} & 0cz \ar@{~}[ru]|{(iii)} & 0yc\ar@{~}[u]|\circ & 01w\ar@{~}[lu]|{(i)} & 1vc \ar@{~}[llu]|{(vi)}
		}.
	\end{align}
	For triangles sharing an edge with triangle~$0bc$ where $b,c > 1$, we have
	\begin{align}
		\xymatrix{
		0uc\ar@{~}[rr]|{(ii)}&& 0bc && 0cv \ar@{~}[ll]|{(iv)}\\
		0bz\ar@{~}[rru]|{(v)} & 0yb\ar@{~}[ru]|{(iv)} & xbc \ar@{~}[u]|{\circ} & bwc \ar@{~}[lu]|{(vi)} & bcv \ar@{~}[llu]|{(x)}
		}.
	\end{align}
	For triangles sharing an edge with triangle~$abc$ where $a,b,c > 0$, we have
	\begin{align}
		\xymatrix{
			& abc & \\
			bcz \ar@{~}[ru]|{(xiii)} & byc \ar@{~}[u]|{(viii)} & xbc \ar@{~}[lu]|{(vii)}
		}\qquad
		\xymatrix{
			bac & \\
			byc \ar@{~}[u]|{(ix)} & bcz \ar@{~}[lu]|{(xii)}
		}\qquad
		\xymatrix{
			bca \\
			bcz \ar@{~}[u]|{(xi)}
		}
	\end{align}
	where we do not have to consider cases where~$c < b$ by symmetry~$b \leftrightarrow c$.	
\end{proof}

\section{Proof that the loop self-statistics $\mu=\pm 1$}\label{app:mu}

\begin{figure}[b]
\centering
\includegraphics[width=0.5\textwidth]{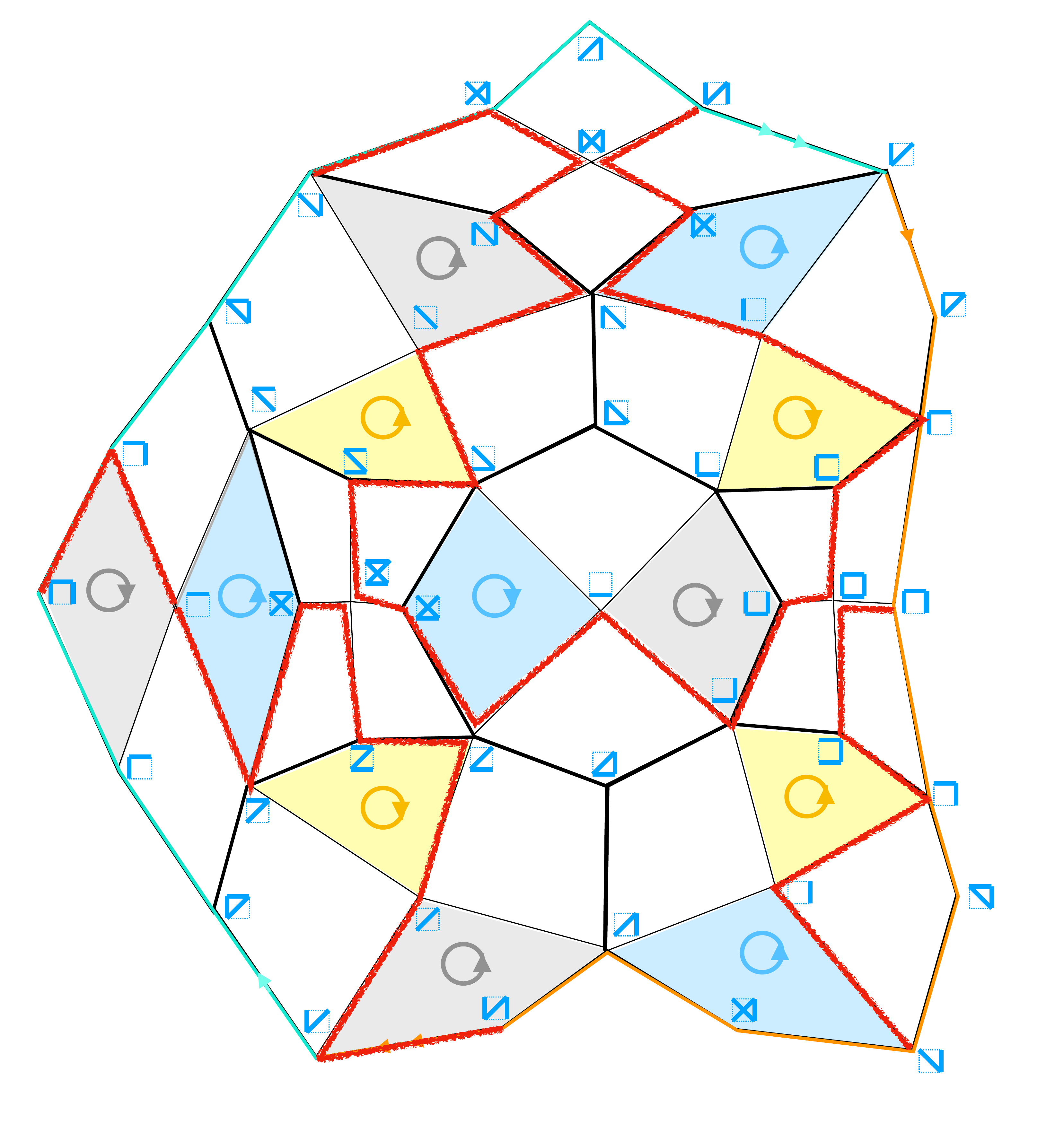}
\caption{
The complex defined in the text is explicitly seen to be the $2$ dimensional projective plane.  
Note that the outer orange edges are identified with the outer blue ones as indicated by the arrows.  
The figures at the vertices represent the corresponding gauge flux loop configuration, 
with the $0$ vertex, and the edges connecting to it, suppressed.  
The red path corresponds to the sequence of membrane operators and their inverses used to define $\mu$, 
see \cref{eq:defhI}.
As we can explicitly see from the figure, 
this path is nontrivial in the fundamental group of~$\mathbb{RP}^2$.  
The three sets of four quadrilaterals associated to the three group commutators 
of the form $M^{-1}_{jk} M^{-1}_{il} M_{jk} M_{il}$ are colored in yellow, grey, and blue.
}
\label{fig:complex_figure}
\end{figure}


\begin{figure}[tbh]
\centering
\includegraphics[width=0.5\textwidth, trim={80mm 210mm 230mm 30mm}, clip]{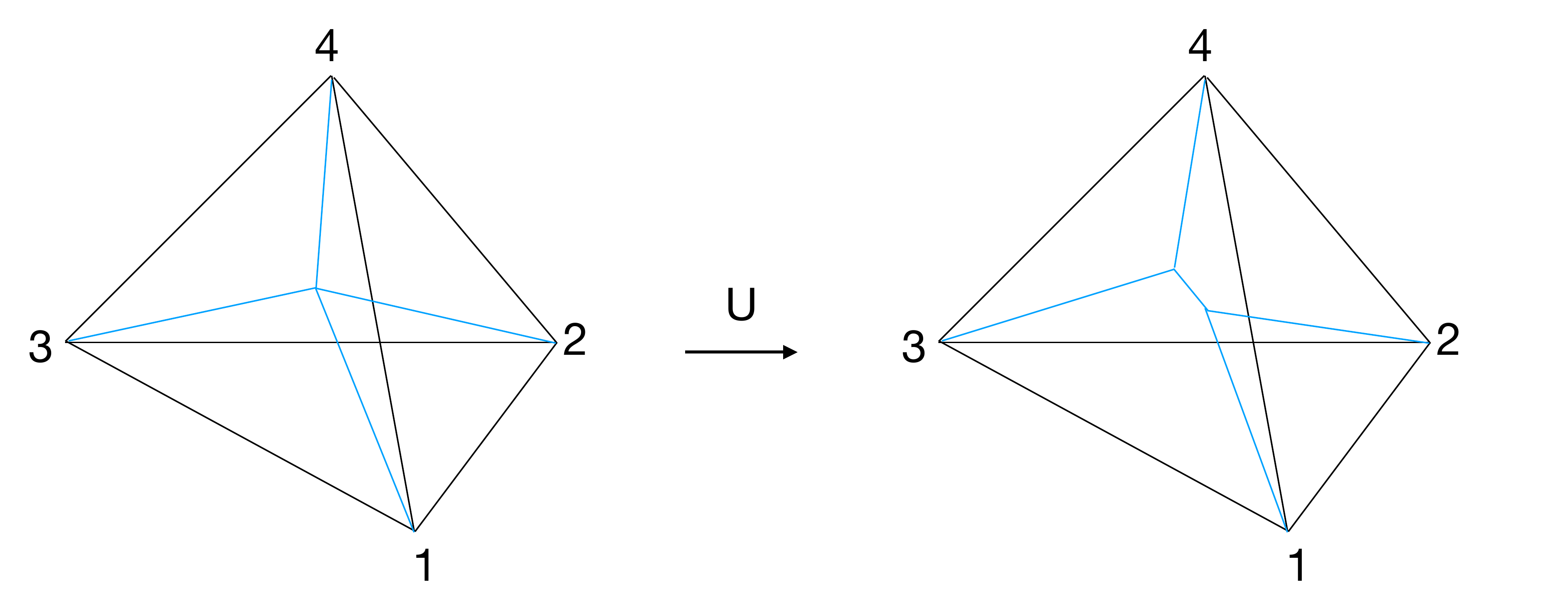}
\caption{There exists a shallow quantum circuit~$U$ that transforms any gauge flux configuration on the skeleton on the left to the corresponding configuration on the right.  The occupation of the extra middle link on the right is always uniquely determined.  Note that the Lieb--Robinson of $U$ may be longer than this extra middle link, which is assumed to be much shorter than the other links.  Now, for the configurations on the right, it is clear that one may choose membrane operators such that those associated to $(012)$ and $(034)$ commute.  Conjugating these by~$U$, one obtains membrane operators for the configurations on the left, again with the property that $M_{12}$ and $M_{34}$ commute. }
\label{fig:quantization_fig2}
\end{figure}

To prove that $\mu = \pm 1$,
it is useful to first construct a certain $2$-dimensional complex~$\mathcal L$, as follows.
The vertices of this complex are configurations of a gauge flux loop 
on a tetrahedron~$\mathfrak T$ with vertices~$1,2,3,4$ and a central $0$~vertex.
Each vertex of~$\mathcal L$ is denoted by a sequence of $\mathfrak T$-vertices,~{\it e.g.,}~$(1203)$,
up to cyclic and reverse reordering,~{\it e.g.,}~$(1203) = (2031) = (1302)$.
Specifically, they are the $33$ configurations that appear in the definition of~$\mu$, 
together with four additional configurations~$(123),(124),(134)$, and~$(234)$,
consisting of triangular loops on the surface of~$\mathfrak T$.
Two such configurations are connected by an edge 
if there exists a membrane operator~$M_{ij}$
that maps one to the other. 

The faces of~$\mathcal L$ are all quadrilaterals.
By dividing up each quadrilateral into two triangles, 
we will have a simplicial complex, 
though we will not need to do this.
The quadrilaterals are defined as follows.
First, for each edge~$(ij)$ of $\mathfrak T$ with $i,j \in \{1,2,3,4\}$
we define four quadrilaterals of~$\mathcal L$ as follows.
Given $i,j$, let $k,l \in \{1,2,3,4\}$ be the two vertices of $\mathfrak T$, different from $i,j$.
Then, the four quadrilaterals, 
defined by specifying the $\mathcal L$-vertices around their perimeters, are:
\begin{align}
&\{(ij0),(ijl0),(kijl0),(kij0)\}, & \{(ij0),(ijk0),(lijk0),(lij0)\}, \\
&\{(ij0),(ijk0),(ijk),(kij0)\}, & \{(ij0),(ijl0),(ijl),(lij0)\}.  \nonumber
\end{align}
Second, for each $(ijkl)$ of~$(1234),(1324)$, and~$(1243)$,
which are the three permutations of~$(1234)$ up to cyclic and reverse reordering,
we define four quadrilaterals 
\begin{align}
&\{(ijkl),(ijkl0),(ijk0),(ijk0l)\}, & \{(ijkl),(ijk0l),(ij0l),(ij0kl)\},\\
&\{(ijkl),(ij0kl),(i0kl),(i0jkl)\}, &\{(ijkl),(i0jkl),(0jkl),(0ijkl)\}.\nonumber
\end{align}
The resulting $2$-complex~$\mathcal L$ is in fact a $2$-manifold~$\mathbb{RP}^2$,
the real $2$-dimensional projective plane,
as illustrated explicitly in \cref{fig:complex_figure}.

Each quadrilateral in~$\mathcal L$ is associated 
with the eigenvalue of a certain group commutator of membrane operators acting on a certain configuration state.
Up to permutations of $\mathfrak T$-vertices,
there are three types of such eigenvalues:
\begin{align}
\begin{cases}
M^{-1}_{jk} M^{-1}_{il} M_{jk} M_{il} \text{ acting on } \ket{(ij0)},\\
M^{-1}_{jk} M^{-1}_{ik} M_{jk} M_{ik} \text{ acting on }\ket{(ij0)},\\
M^{-1}_{jk} M^{-1}_{ik} M_{jk} M_{ik} \text{ acting on } \ket{(ilj0)},  
\end{cases}
\end{align}
where $i,j,k,l \in \{1,2,3,4\}$ are distinct.
Here, the second and third lines have the same operators evaluated on different states.
We claim that, after a possible re-definition of membrane operators~$M_{**}$, 
these eigenvalues can all be set equal to~$+1$.

To prove this, we first consider the first type of group commutator, 
$M^{-1}_{jk} M^{-1}_{il} M_{jk} M_{il}$.  
Note that there are exactly three such group commutators.  
Each of these three acts on four possible states, 
yielding three sets of four quadrilaterals in \cref{fig:complex_figure}, 
each set of four colored in a different color.  
Also, the three group commutators of this form involve disjoint sets of membrane operators, 
so we can look at them in turn.  
Let us thus consider $M^{-1}_{34} M^{-1}_{12} M_{34} M_{12}$ 
(colored yellow in~\cref{fig:complex_figure}).
This group commutator acts on the four states $(130),(240),(140),(230)$, 
which are eigenstates of it with eigenvalues $\alpha,\beta,\gamma,\delta \in U(1) \subset \CC$.
Also, as an operator, it is localized near the center $0$ vertex.  To see that $\alpha,\beta,\gamma,\delta$ can all be set equal to one,
we simply note that, as illustrated in~\cref{fig:quantization_fig2}, there exists a choice of membrane operators such that $M_{12}$ and $M_{34}$ commute identically.  The same argument can be repeated for the other two group commutators of the form $M^{-1}_{jk} M^{-1}_{il} M_{jk} M_{il}$.


Now let us consider the second type of group commutator, $M^{-1}_{jk} M^{-1}_{ik} M_{jk} M_{ik}$.
Note that as a shallow quantum circuit, it is supported in the vicinity of the $(k0)$ edge, 
so its eigenvalue is the same on $\ket{(ij0)}$ and $\ket{(ilj0)}$.
Again, it is easy to see that one can modify the membrane operators near the $k$ vertex 
to set the eigenvalues of all group commutators of this form to~$1$.  Explicitly,
note that for each $k$, there are three commutators of this form 
--- this is just the number of ways of choosing two out of the three edges 
that connect the vertex~$k$ of~$\mathfrak T$ to noncentral vertices of~$\mathfrak T$.
For each such commutator
$M^{-1}_{jk} M^{-1}_{ik} M_{jk} M_{ik}$, 
we now modify the membrane operator~$M_{ik}$ by multiplying it by a phase 
that depends on the occupation number of the $(jk)$ edge, 
in such a way as to make the commutator equal to~$+1$.
Note that this modification does not affect the eigenvalues of the other two commutators 
associated to $k$, so by this argument they can all be set equal to~$+1$.  Furthermore, since all these modifications take place near the $1,2,3,4$ vertices, they are far away from the central~$0$ vertex, and hence do not change the value of $M^{-1}_{jk} M^{-1}_{il} M_{jk} M_{il}$ (which has already been set equal to $1$).

Thus, we can assume that, with an appropriate choice of membrane operators, 
all of the group commutator actions corresponding to the quadrilaterals in \cref{fig:complex_figure} 
have eigenvalues~$+1$.
Now consider computing~$\mu^2$, 
which corresponds to tracing out the red path in \cref{fig:complex_figure} twice.
This doubled path is contractible --- 
it can be deformed to a trivial path by sliding it over the various quadrilaterals
(in fact, over each quadrilateral exactly once).
Since the eigenvalue corresponding to each quadrilateral is~$+1$, 
we see that such path deformations do not change the eigenvalue associated to the path.
Since the eigenvalue associated to the trivial path is~$+1$, 
we thus see that $\mu^2=1$, so that $\mu=\pm 1$, as desired.

\section{Frame parity of fermion worldlines}\label{app:frm}

In this appendix, we collect results about~$\Frm$ and its extension~$\overline\Frm$, 
improving upon~\cite[Lem~II.4]{4dbeyond}.
The object~$\overline\Frm$ is a quadratic form on all $1$-cycles over~$\ZZ_2$.
This function~$\Frm$ resembles a spin structure 
in that $\Frm$ can be used to define a Hamiltonian with emergent fermions,
and the set of all possible $\overline\Frm$ is acted on transitively and freely by
the first cohomology group of the underlying space.
In essence, $\overline\Frm$ measures in $\ZZ_2$ how every $1$-cycle is twisted.
However, it is different from spin structures 
since there is no obstruction to define~$\overline\Frm$ given an underlying space ---
our results will be applicable to any $2$- or higher dimensional combinatorial manifold,
regardless of the Stiefel--Whitney classes.

We will begin by introducing a continuous map from $2$-skeleton down to $\RR^2$ 
which encodes how each null-homologous worldline is twisted, 
and define a $\ZZ_2$-valued function~$\Frm$.
We will note that $\Frm$ is essentially unique and independent of the continuous map down to~$\RR^2$,
and construct fermion string operators
by examining localizations of worldline fluctuation operators.
The results on fermion string operators are used to analyze 
a commuting Pauli Hamiltonian that is topologically ordered, 
is fixed under an entanglement renormalization group flow,
and has a unique nontrivial topological charge that is a fermion 
whenever the underlying space has dimension~$3$ or higher.
We will extend $\Frm$ to $\overline\Frm$ on all $1$-cycles
and consider the action on $\overline\Frm$ by first cohomology group of the underlying space.

We do not consider any primary chains or any primary qubits in the following discussion on~$\Frm$.
For simplicity we only consider simplicial complexes;
for a more general cell complex, one can use the prescription in~\cref{sec:fcbl},
especially~\cref{eq:lfc}.

\subsection{Projection and $\Frm$}

\begin{definition}\label{def:phi}
	For a $2$-dimensional simplicial complex~$\mathcal K$,%
	\footnote{a locally finite cell complex in which every cell is a simplex
	and the intersection of any two simplices is a simplex.}
	a \emph{projection} is a continuous map $\phi: \mathcal K \to \RR^2$
	such that all $1$-cells are mapped to transverse straight line segments
	and each $2$-cell, a triangle, is mapped injectively to the triangle in $\RR^2$ defined by its sides.
\end{definition}
\noindent
Unless states otherwise, 
$\mathcal K$ will always denote a simplicial $2$-complex equipped with a projection~$\phi$.

For each $2$-cell $f_2$,
we define $L_{Fc}(f_2)$ to be the product of Pauli~$X$ along the boundary of~$f_2$
and Pauli~$Z$ on every $1$-cell~$e_1$ such that
$\phi(e_1)$ intersects the interior of~$\phi(f_2)$.
This definition is the same as in~\cref{sec:fcbl} 
(except for $Z$ on the primary qubit on the Poincar\'e dual of~$f_2$).

By construction, the frame parity~$\Frm(a_1)$ of a null-homologous $1$-cycle~$a_1$ is computed by
any $2$-chain~$\tilde a_2$ bounded by~$a_1$:
\begin{align}
	L(\tilde a_2) &= \prod_{f_2 \in \tilde a_2}  L_{Fc}(f_2) = (-1)^{\Frm(a_1)} \prod X \prod Z, \label{eq:Lb2} \\
	(-1)^{\Frm(a_1)} &=  \bra{a_1} L(\tilde a_2) \ket 0.
\end{align}
where $\ket 0$ is the $+1$ eigenstate of all the relevant $Z$ operators.
\begin{proposition}\label{lem:Frm}
	The map $\Frm$ from any finite null-homologous cycle~$a_1$ to a sign,
	is a well-defined function that can be evaluated 
	by any $2$-chain~$\tilde a_2$ with~$\bd \tilde a_2 = a_1$.
\end{proposition}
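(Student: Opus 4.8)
The plan is to reduce the well‑definedness of $\Frm$ to a purely algebraic statement about $2$‑cycles, and then to a single sign computation. First I would record three elementary properties of the operators $L_{Fc}(f_2)=\bigl(\prod_{e_1\in\bd f_2}X(e_1)\bigr)\bigl(\prod_{e_1\in C(f_2)}Z(e_1)\bigr)$, where $C(f_2)$ is the set of $1$‑cells whose $\phi$‑image meets the interior of $\phi(f_2)$. Each $L_{Fc}(f_2)$ is a tensor product of single‑qubit Paulis, since the three sides of a triangle are mapped by $\phi$ to the sides of $\phi(f_2)$ and so no edge of $\bd f_2$ lies in $C(f_2)$; hence $L_{Fc}(f_2)^2=\one$. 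The $L_{Fc}(f_2)$ pairwise commute (this is already used in building $H^{WW}_{FcBl}$, following \cite[\S II.C.a]{4dbeyond}). Consequently, for any $\ZZ_2$ $2$‑chain $p$ the product $L(p):=\prod_{f_2\in p}L_{Fc}(f_2)$ is independent of ordering, satisfies $L(p)^2=\one$ and $L(p+q)=L(p)L(q)$, and has $X$‑part $\prod_{e_1\in\bd p}X(e_1)$.

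Next I would carry out the reduction. For a $2$‑cycle $z_2$ the $X$‑part of $L(z_2)$ is trivial, so $L(z_2)$ is diagonal in the $Z$‑basis and the reference state $\ket0$ is an eigenvector: $L(z_2)\ket0=\epsilon(z_2)\ket0$ with $\epsilon(z_2)\in\{\pm1\}$; from $L_{Fc}(f_2)^2=\one$ the map $\epsilon$ is a homomorphism $Z_2(\mathcal K;\ZZ_2)\to\{\pm1\}$, so it suffices to check $\epsilon=1$ on a generating set of $H_2(\mathcal K;\ZZ_2)$ (there are no $3$‑cells, so $Z_2=H_2$). Now, given $\bd\tilde a_2=\bd\tilde a_2'=a_1$, set $z_2:=\tilde a_2+\tilde a_2'$; by \cref{eq:Lb2} one has $L(\tilde a_2)\ket0=(-1)^{\Frm_{\tilde a_2}(a_1)}\ket{a_1}$, with $\ket{a_1}$ the $Z$‑basis state supported on $a_1$, and likewise for $\tilde a_2'$, whence
\begin{align*}
(-1)^{\Frm_{\tilde a_2'}(a_1)}\ket{a_1}=L(\tilde a_2')\ket0=L(\tilde a_2)L(z_2)\ket0=\epsilon(z_2)\,L(\tilde a_2)\ket0=\epsilon(z_2)\,(-1)^{\Frm_{\tilde a_2}(a_1)}\ket{a_1}.
\end{align*}
Thus $\Frm$ is well defined as soon as $\epsilon(z_2)=1$ for every $2$‑cycle $z_2$.

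The remaining step, $\epsilon(z_2)=1$, is the only part that is not bookkeeping, and I expect it to be the main obstacle. Building $z_2=f^{(1)}+\dots+f^{(N)}$ one $2$‑cell at a time and pushing all $X$‑factors to the left, the only signs arise from moving $Z$‑factors of earlier $L_{Fc}$'s past $X$‑factors of later ones, giving $\epsilon(z_2)=(-1)^{\sum_l|C(p_{l-1})\cap\bd f^{(l)}|}$, where $p_{l-1}=f^{(1)}+\dots+f^{(l-1)}$ and $C$ is extended additively to chains; this exponent is order‑independent by commutativity. I would then prove it is even by exploiting the projection: introduce an auxiliary $0$‑cell ``at infinity'' and cone $\mathcal K$ over it, extending $\phi$ exactly as in the proof of \cref{lem:circuitR}, so that $z_2$ becomes null‑homologous in the enlarged projected complex, where every $2$‑cycle is a boundary of a $3$‑cell and $\epsilon=1$ follows from a direct local check. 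Alternatively one can peel off, one at a time, the $2$‑cells of $z_2$ whose images are extremal regions of $\phi(z_2)\subset\RR^2$ — for which $C(p_{l-1})\cap\bd f^{(l)}=\emptyset$ — and induct. This is precisely the place where the geometry of $\phi$ enters and where the improvement over \cite[Lem~II.4]{4dbeyond} does its work; everything before it is formal once pairwise commutativity of the $L_{Fc}$ is granted.
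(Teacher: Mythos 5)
Your proposal is correct and follows essentially the same route as the paper: after the formal reduction to showing that $L(z_2)\ket{0}=+\ket{0}$ for every $2$-cycle $z_2$, you invoke the same cone construction over an auxiliary apex to exhibit $z_2$ as the boundary of a chain of tetrahedra and finish with the same direct local check $\bra{0}L(\bd e_3)\ket{0}=+1$ (the paper cones only over $\Supp(z_2)$ rather than all of $\mathcal K$ and notes there are just two projected tetrahedron configurations up to homotopy, but this is cosmetic). Your explicit bookkeeping that $\epsilon$ is a homomorphism on $Z_2=H_2$ is just a cleaner packaging of the paper's identity $L(\tilde a_2)L(\tilde a_2')=L(\tilde a_2+\tilde a_2')$.
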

\begin{proof}
Let $a_1 = \bd \tilde a_2$ be any null-homologous $1$-cycle of $\mathcal K$,
and let $\tilde a_2'$ be another $2$-chain such that $\bd \tilde a_2' = a_1$.
We have to show that
\begin{align}
	L(\tilde a_2) L(\tilde a_2') = L(\tilde a_2 + \tilde a_2') = + \prod Z,
\end{align}
which cannot have any $X$ factor because $\bd(\tilde a_2 + \tilde a_2') = 0$.
Consider a cone over the support~$N$ of $\tilde a_2 + \tilde a_2'$.%
\footnote{This argument is the same as in~\cite[\S II.C.b]{4dbeyond}.}
The subcomplex~$N$ is $\ZZ_2$-closed, so if $c$ is the chain of all $3$-cells of the cone,
the boundary~$\bd c$ is $\tilde a_2 + \tilde a_2'$.
Let us extend $\phi$ so that the extension~$\phi^e$ is defined on the $2$-skeleton of the cone.
This is straightforward:
place an additional generic point~$p$ on~$\RR^2$ for the apex of the cone,
and connect the images of the $0$-cells of~$N$ to $p$ by straight lines.
Every $2$-cell of the cone consisting of some $1$-cell of~$N$ and the apex,
has an obvious image under $\phi^e$.
Define $L(\bd e_3) = \prod_{f_2 \in \bd e_3} L_{Fc}(f_2)$ for any $3$-cell~$e_3$ of the cone.
Since $\bd c = \tilde a_2 + \tilde a_2'$,
we have 
\begin{align}
	\prod_{e_3 \in c} L(\bd e_3) = L(\tilde a_2 + \tilde a_2').
\end{align}
Each $e_3$ is a tetrahedron, 
and there are only two different projections of a tetrahedron, up to homotopy,
depending on whether a vertex is inside a triangle.
By direct calculation, we confirm that $\bra 0 L(\bd e_3)\ket 0 = +1$,
and therefore the claim is proved.
\end{proof}

\begin{remark}\label{rem:extphi}
If we extend $(\mathcal K,\phi)$ by adding more cells to obtain $(\mathcal K^e,\phi^e)$ where $\phi^e|_{\mathcal K} = \phi$,
the induced function $\Frm^e$ on null-homologous $1$-cycles of $\mathcal K^e$
agrees with $\Frm$ on those of $\mathcal K$.
This is simply because $\Frm^e(a_1)$ for a null-homologous $1$-cycle $a_1$ of $\mathcal K$ 
can be evaluated by some $2$-chain of $\mathcal K$,
over which $L_{Fc}$ are the same for $\mathcal K$ and $\mathcal K^e$.
Once we have an extension~$(\mathcal K^e,\phi^e)$,
the function $\Frm^e(a_1)$ where $a_1$ is a cycle of $\mathcal K$ may be evaluated by a $2$-chain of $\mathcal K^e$.
$\lozenge$\end{remark}

\begin{proposition}\label{lem:diffphi}
	Let $\phi'$~be another projection $\mathcal K \to \RR^2$,
	which defines a different set of operators~$L_{Fc}'$,
	and let $\Frm'$ be the resulting function by~\cref{eq:Lb2}.
	Assume that $\mathcal K$ is the $2$-skeleton of an $n$-dimensional simplicial manifold with~$n \ge 2$
	such that every $2$-cell of~$\mathcal K$ has at most~$m$ $1$-cells in its boundary
	and every $0$-cell has at most~$m$ $1$-cells in its coboundary.
	Then, $\Frm + \Frm'$ on the space of all null-homologous cycles is locally computable,
	{\em i.e.}, there exists a quantum circuit of depth at most a constant depending only on~$m$
	which implements $\ket a \mapsto (-1)^{\Frm(a) + \Frm'(a)}\ket a$ for all null-homologous cycles~$a$.
\end{proposition}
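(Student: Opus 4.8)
The plan is to reduce the statement to the construction of a single \emph{local $\ZZ_2$-valued function} on chains, and then observe that any such function is evaluated by a bounded-depth circuit. Call a function $Q:\ZZ_2^{\{1\text{-cells}\}}\to\ZZ_2$ \emph{$r$-local} if $Q(a)=\sum_i q_i(a|_{R_i})$ for some cover $\{R_i\}$ of the $1$-cells by subsets of diameter and size $\le r$, with each $1$-cell in $\le r$ of the $R_i$. Given an $r$-local $Q$ with $r=O_m(1)$ that agrees with $\Frm+\Frm'$ on all null-homologous cycles, one $O_m(1)$-colours the intersection graph of $\{R_i\}$ (possible since it has bounded degree) and, for each colour class, assembles the mutually commuting single-cluster gates $\ket a\mapsto(-1)^{q_i(a|_{R_i})}$ into one layer; this yields a depth-$O_m(1)$ circuit with the required action on null-homologous cycles, its action elsewhere being irrelevant. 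So it suffices to build $Q$.

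To build $Q$, I would run the cone construction already used for \cref{lem:Frm} and \cref{lem:circuitR}. Adjoin one auxiliary $0$-cell $\infty$, projected to a generic point $p$ ``infinitely'' far from the image of everything else, and extend \emph{both} $\phi$ and $\phi'$ over the $2$-skeleton of the cone over $\mathcal K$, keeping the \emph{same} apex image $p$; write $\phi^e,\phi'^e$ and $\Frm^e,\Frm'^e$ for the results, so $\Frm^e=\Frm$, $\Frm'^e=\Frm'$ on cycles of $\mathcal K$ by \cref{rem:extphi}. In the cone every null-homologous cycle $a_1$ bounds the distinguished chain $\widehat a_1=\sum_{e_1\in a_1}[e_1,\infty]$, whose cells are in bijection with the cells of $a_1$. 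Hence, by \cref{eq:Lb2} applied inside $\mathcal K^e$,
\begin{align*}
(-1)^{\Frm(a_1)+\Frm'(a_1)}=\bra{a_1}\,L^e(\widehat a_1)\,\ket 0\cdot\bra{a_1}\,L'^e(\widehat a_1)\,\ket 0 .
\end{align*}
The $X$-parts of $L^e(\widehat a_1)$ and $L'^e(\widehat a_1)$ both equal $\prod_{e_1\in a_1}X(e_1)$ (each edge $[u,\infty]$ occurs an even number of times and cancels), so in the product these cancel and what remains is a sign times a product of Pauli $Z$'s. The claim to establish is that, because $p$ is far away and the two extensions share this apex, the contributions to this sign and to the surviving $Z$-support coming from cells of $\mathcal K^e$ that are combinatorially far from $\Supp(a_1)$ cancel between the $\phi$- and $\phi'$-decorations, so that the residue is a sign and a set of $Z$'s determined by $a_1$ restricted to $O_m(1)$-neighbourhoods of the cells of $\Supp(a_1)$; evaluating these $Z$'s against $\ket 0$ then presents $(-1)^{\Frm(a_1)+\Frm'(a_1)}$ as $(-1)^{Q(a_1)}$ for an $O_m(1)$-local $Q$ (a quadratic form on the null-homologous cycles). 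Independence of the auxiliary choices is automatic since $\Frm,\Frm'$ are well defined by \cref{lem:Frm}.

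The hard part will be this cancellation-and-locality claim. It is the careful bookkeeping of crossing numbers and reordering signs in $L^e(\widehat a_1)\,L'^e(\widehat a_1)$, tracking not merely that the long-range pieces cancel (which already follows from well-definedness of $\Frm$) but that the residue is geometrically local with ranges controlled by $m$ --- essentially the analysis underlying \cref{lem:circuitR}, where the thorns of the fermion string operators along the cone are shown to be locally determined. An alternative route I would keep in reserve is to interpolate $\phi$ to $\phi'$ by a generic piecewise-linear homotopy: this decomposes into elementary moves of the projected $1$-skeleton, each changing $\Frm$ by a locally computable amount, and the bounded-geometry hypotheses (encoded in $m$, together with the requirement that $L_{Fc}$ be local operators for both projections) ensure that each $1$-cell participates in only $O_m(1)$ such moves, so the accumulated change $\Frm+\Frm'$ is again an $O_m(1)$-local quadratic form.
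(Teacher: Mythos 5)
Your primary route leaves its central claim unproved, and as stated that claim is doubtful. In the shared-apex cone construction, the coning triangle $[e_1,\infty]$ over an edge $e_1\in a_1$ projects to an enormous wedge in $\RR^2$, so the thorns of $L^e_{Fc}([e_1,\infty])$ sit on $1$-cells arbitrarily far, in the combinatorial metric of $\mathcal K$, from $e_1$; the sign in $L^e(\widehat a_1)$ accumulates from anticommutations between these long-range thorns and the $X$ factors on \emph{other} edges $e_1'\in a_1$, which is why $\Frm$ is a genuinely non-local quadratic form (pairs of combinatorially distant edges of $a_1$ whose projections cross contribute, as for the writhe of a knot diagram). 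Your cancellation claim would need these long-range contributions to match term-by-term between the $\phi^e$- and $\phi'^e$-decorations, but $\phi$ and $\phi'$ place distant cells at unrelated positions in $\RR^2$, so whether $\phi(e_1')$ lies in the $\phi$-wedge over $e_1$ bears no relation to whether $\phi'(e_1')$ lies in the $\phi'$-wedge. Well-definedness of $\Frm$ and $\Frm'$ (your appeal to \cref{lem:Frm}) only guarantees that each \emph{total} is independent of the bounding chain; locality of the difference is exactly the content of the proposition and does not come for free from the cone. Your reduction of the proposition to the existence of an $O_m(1)$-local $Q$ (via colouring the cluster intersection graph) is fine, but the construction of $Q$ is the entire problem and is not carried out.

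Your reserve route --- interpolating $\phi$ to $\phi'$ by moving projected $0$-cells and batching non-conflicting moves into $O_m(1)$ parallel layers --- is the paper's actual proof, but the one sentence you devote to it omits the step that makes a single elementary move locally computable, which is where the manifold hypothesis enters. When only the image of one $0$-cell $v_0$ changes, the paper reroutes any cycle $a$ through $v_0$ to a homologous $u(a)$ avoiding $v_0$, with $a+u(a)$ null-homologous and supported on the star $E(v_0)$ of $2$-cells at $v_0$ (possible because $\mathcal K$ is the $2$-skeleton of a manifold). Then $\Frm(a)+\Frm(u(a))$ and $\Frm'(a)+\Frm'(u(a))$ are each implemented by diagonal unitaries supported on $E(v_0)$ (the quadratic cross terms localize because the fermion string operator of the small cycle $a+u(a)$ localizes near $v_0$ by \cref{lem:cancelthorns}), while $\Frm(u(a))=\Frm'(u(a))$ since $u(a)$ avoids $v_0$ and the two projections agree elsewhere (\cref{rem:extphi}); composing gives a gate $G(v_0)$ on $E(v_0)$ computing $\Frm(a)+\Frm'(a)$ for that move. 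Without this rerouting argument the assertion that ``each elementary move changes $\Frm$ by a locally computable amount'' is unsupported, so the proof is incomplete on either route.
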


Thus, we may speak of \emph{the} frame parity~$\Frm$ 
without specifying a particular projection
for any ``uniform'' cellulation of a combinatorial manifold,
if we are interested in shallow-circuit-equivalence of many-body states.
Note that $\Frm$ is \emph{not} defined on $1$-cycles of nonzero homology class.
We will consider extensions of~$\Frm$ to all $1$-cycles in~\cref{prop:globalFrm}.

\begin{proof}
	Since $\mathcal K = \mathcal K_2$ is a simplicial $2$-complex,
	the projection is, by definition, determined by the image of the $0$-skeleton~$\mathcal K_0$.
	Suppose that $\phi'(\mathcal K_0)$ and~$\phi(\mathcal K_0)$ 
	differ only by the image of one point~$v_0 \in \mathcal K_0$.
	Since $\mathcal K$ is the $2$-skeleton of a manifold,
	any chain~$e$ that passes through~$v_0$ is homologous to one~$u(e)$ that does not.
	For any chain~$e$, we may choose $u(e)$ such that
	the change~$u(e) + e$ is a null-homologous cycle supported on 
	the union~$E(v_0)$ of the boundaries of $2$-cells that touch~$v_0$.
	Therefore, for any $1$-cycle~$a$, the change~$\Frm(a) + \Frm(u(a))$
	can be computed by a diagonal unitary~$F(v_0)$ supported on~$E(v_0)$,
 	and so can be $\Frm'(a)+\Frm'(u(a))$ by~$F'(v_0)$.
 	(The unitary~$F(v_0): \ket a \mapsto (-1)^{\Frm(a) + \Frm(u(a))} \ket a$ 
 	does not change the underlying cycle~$a$.)
 	But, we know $\Frm'(u(a)) = \Frm(u(a))$ due to~\cref{rem:extphi}.
 	Hence, $\Frm(a) + \Frm'(a)$ is computed 
 	by a local diagonal unitary~$G(v_0) = F(v_0)F'(v_0)$ supported on~$E(v_0)$.

	Now, consider general cases 
	where $\phi(\mathcal K_0)$ and $\phi'(\mathcal K_0)$ are arbitrarily different.
	Let $\phi^{(0)} = \phi$ and inductively define $\phi^{(t)}$ for $t \ge 1$ as follows.
	Choose a point~$v \in \mathcal K_0$ such that $\phi^{(t-1)}(v) \neq \phi'(v)$;
	if there is no such a point, then $\phi^{(t-1)} = \phi'$.
	Choose any maximal subset~$V(t)$ of~$\mathcal K_0$ such that 
	for any~$v,v' \in V(t)$ we have~$E(v) \cap E(v') = \emptyset$ whenever~$v \neq v'$,
	and define~$\phi^{(t)}(p) = \phi'(p)$ if~$p \in V(t)$ and~$\phi^{(t)}(p) = \phi^{(t-1)}(p)$ if~$p \notin V(t)$.
	That is, as $t$ increases, we gradually ``move'' projected points under~$\phi$ to match those under~$\phi'$.
	At any given~$t$, the gates~$G(v)$ for $v \in V(t)$ do not overlap, 
	so we may apply them in parallel.
	The assumption on neighbor numbers implies that for some~$t_\text{max}$ upper bounded only through~$m$,
	we have $\phi^{(t_\text{max})} = \phi'$,
	and hence the cumulative change $\Frm + \Frm'$ is computed 
	by a quantum circuit of depth~$t_\text{max}$ that is bounded only through~$m$.
\end{proof}

\begin{remark}
The function~$\Frm$ is not just a function but a $\ZZ_2$-valued quadratic form,
{\em i.e.}, $\Frm(a_1+a_1') - \Frm(a_1) - \Frm(a_1')$ is bilinear.
To see this, for any pair of secondary $2$- and $1$-chain $(b_2, a_1)$
we define $\BiFrm((b_2, a_1))$ to be the mod~$2$ number of times the \emph{thorns} of $L(b_2)$,
the $\prod Z$ part of~\cref{eq:Lb2}, intersect $a_1$.
Since $L_{Fc}$ are Pauli operators of which multiplication ignoring signs
may be regarded as $\ZZ_2$-vector addition,
we see that $\BiFrm$ is bilinear.
Since $\BiFrm((\tilde a_2, \bd \tilde a_2'))$ is
the sign upon rearranging Pauli factors in the product~$L(\tilde a_2)L(\tilde a_2')$,
we conclude that
\begin{align}
	\Frm(a_1 + a_1') = \Frm(a_1) + \Frm(a_1') + \BiFrm(( \tilde a_2, a_1' )) \label{eq:Frm}
\end{align}
where $\tilde a_2$ is any secondary $2$-chain such that $\bd \tilde a_2 = a_1$.
Since $\Frm$ is a function of null-homologous $1$-cycles,
the association 
\begin{align} \label{eq:association}
	(a_1,a_1') \mapsto \BiFrm(a_1,a_1') = \BiFrm((\tilde a_2, a_1')) = \BiFrm((\tilde a_2', a_1))
\end{align}
is a well-defined bilinear, symmetric function on the space of all null-homologous $1$-cycles.
We will see again, after~\cref{cor:localizing} below,
that $\BiFrm$ is independent of $2$-chains bounded by one of the argument $1$-chains.
$\lozenge$\end{remark}

\begin{figure}
\centering
\includegraphics[width=\textwidth, trim={0ex 20ex 55ex 0ex}, clip]{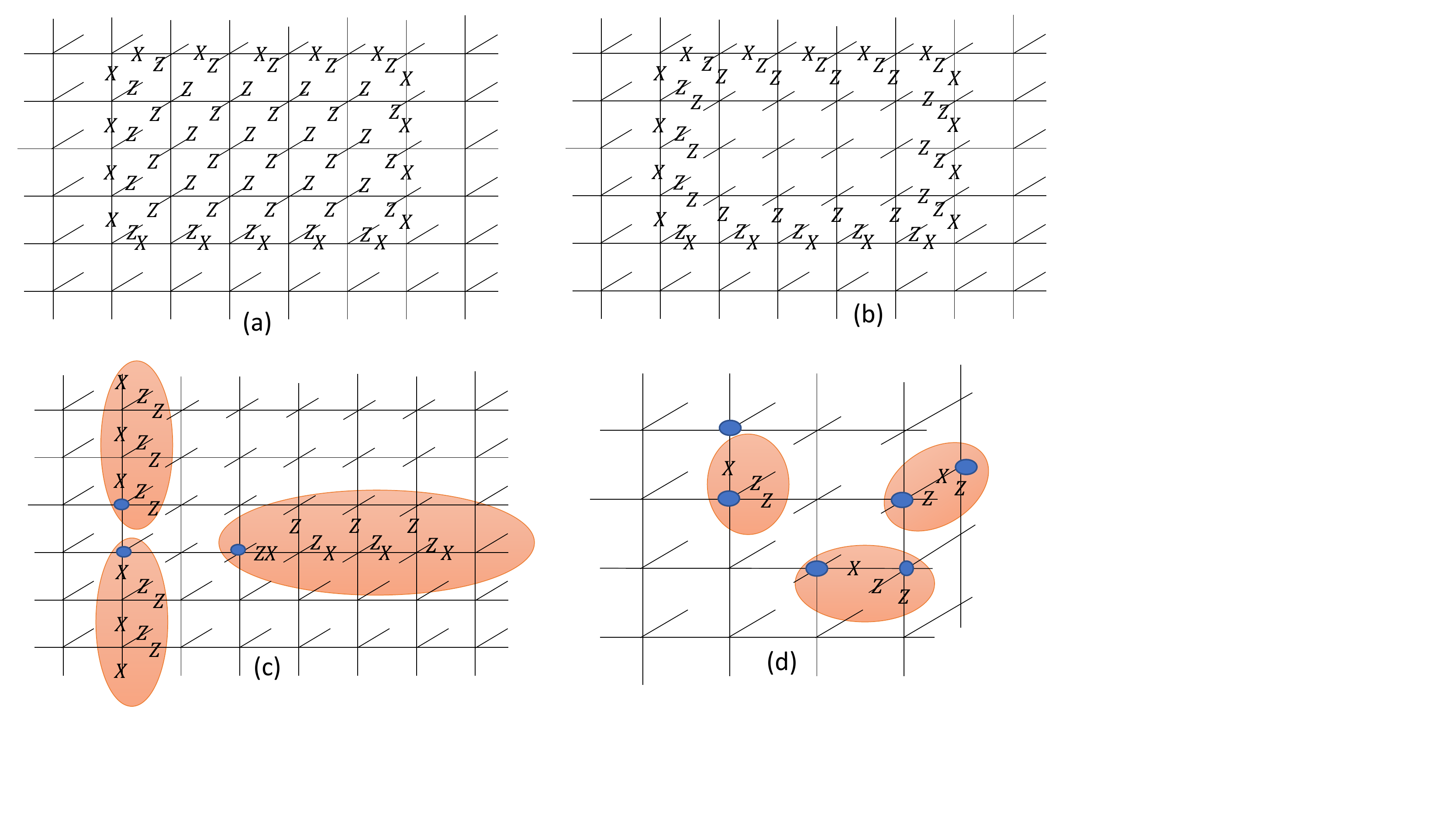}
\caption{
	The secondary cellulation of the system boundary is shown.
	The drawing specifies a projection~$\phi$.
	The fourth direction into the $4$-dimensional bulk is not shown.
	($\mathsf a$) A product of worldline fluctuation operators over a $2$-chain $b_2$.
	The $X$ factors are along~$\bd b_2$, and $Z$ factors are ``thorns.''
	($\mathsf b$) After multiplying by the cycle-enforcing 
	operators in the interior where the winding parity of $b_2$ is odd,
	the operator over~$b_2$ localizes along~$\bd b_2$.
	($\mathsf c$) If we break up the closed string operator,
	there will be a quasiparticle at the end of the string.
	The three semi-infinite operators have end configurations such that 
	when any pair of them are joined by overlaying the solid dot,
	the resulting operator commutes with every Hamiltonian term.
	Among the three, one pair is anticommuting, 
	showing that the quasiparticle is a fermion.
	With primary qubits and worldsheets that link with worldlines,
	the fermion is confined,
	but in a system with worldlines only, 
	the fermion is a deconfined topological excitation.
	($\mathsf d$) Any fermion string operator is built out of three types of segments.
}
\label{fig:stringop}
\end{figure}

\subsection{Fermion string operators}

Let us show that the product~$L(b_2) = \prod_{f_2 \in b_2} L_{Fc}(b_2)$ of $L_{Fc}$ 
over a $2$-chain~$b_2$ localizes around~$\bd b_2$ 
by multiplying~$L(b_2)$ by certain cycle-enforcing operators~$\pi(v_0) = \prod_{e_1 \in \db v_0} Z(e_1)$.
For any chain~$x$, let $\Supp(x)$ be (the topological closure of) the union of all cells in~$x$.
\begin{lemma}\label{lem:cancelthorns}
	For any $2$-chain~$b_2$, there exists a subset~$V_0$ of vertices in~$\Supp(b_2) \setminus \Supp(\bd b_2)$
	such that the operator $S = L(b_2)\prod_{v_0 \in V_0} \pi(v_0)$ is supported on
	the collection of $1$-cells that meet $\Supp(\bd b_2)$.
\end{lemma}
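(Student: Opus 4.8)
The plan is to write $S$ out explicitly and cancel its unwanted $Z$-factors one vertex at a time, following the strategy of~\cite[Lem~II.4]{4dbeyond}. Expanding $L(b_2)=\prod_{f_2\in b_2}L_{Fc}(f_2)$ and using that each $L_{Fc}(f_2)$ is Pauli~$X$ along $\bd f_2$ times Pauli~$Z$ on its ``thorn'' $1$-cells, the $X$-factors telescope to an $X$-string supported along the mod-$2$ boundary $\bd b_2$ (since $\bd\bd b_2=0$), while the $Z$-factors multiply, up to an overall sign which plays no role for a support claim, to a $Z$-operator supported on a mod-$2$ $1$-chain $T(b_2)$ --- the symmetric difference of the thorn sets of the individual $f_2\in b_2$. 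Since $\pi(v_0)=\prod_{e_1\in\db v_0}Z(e_1)$ is $Z$ over the star of $v_0$, multiplying $L(b_2)$ by $\prod_{v_0\in V_0}\pi(v_0)$ amounts, on the $Z$-part, to adding $\sum_{v_0\in V_0}\operatorname{star}(v_0)$ to $T(b_2)$. Thus the lemma reduces to the purely combinatorial claim: there is a set $V_0$ of $0$-cells in $\Supp(b_2)\setminus\Supp(\bd b_2)$ such that $T(b_2)+\sum_{v_0\in V_0}\operatorname{star}(v_0)$ is supported on $1$-cells that meet $\Supp(\bd b_2)$.

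To build $V_0$ I would assign to each $0$-cell $v_0$ a mod-$2$ local covering number $n_{b_2}(v_0)$: displace $\phi(v_0)$ by a short vector in a fixed generic direction and count, mod~$2$, the $f_2\in b_2$ whose open projected $2$-cell contains the displaced point. Genericity of $\phi$ (no projected $1$-cell passes through $\phi(v_0)$ except those incident to $v_0$) makes this well-posed once the direction is fixed; and when $v_0\notin\Supp(\bd b_2)$ it is moreover independent of that direction, since $n_{b_2}$ changes between adjacent angular sectors at $\phi(v_0)$ only by an even amount, no $1$-cell incident to $v_0$ lying in $\bd b_2$. I would then set $V_0=\{\,v_0\notin\Supp(\bd b_2):n_{b_2}(v_0)=1\,\}$; since $n_{b_2}(v_0)=1$ forces $v_0$ to be a vertex of some $f_2\in b_2$, this $V_0$ lies in $\Supp(b_2)\setminus\Supp(\bd b_2)$ as required.

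It remains to verify that, with this $V_0$, every $1$-cell $e_1=(v,w)$ with both $v,w\notin\Supp(\bd b_2)$ has vanishing $Z(e_1)$-exponent in $S$, i.e.
\begin{align*}
	[\,e_1\in T(b_2)\,]+n_{b_2}(v)+n_{b_2}(w)=0\pmod 2 .
\end{align*}
This is the geometric heart: the number of $f_2\in b_2$ whose projected $2$-cell the projected edge $\phi(e_1)$ crosses has the parity of $n_{b_2}(v)+n_{b_2}(w)$. I would prove it by the cone construction already used in the proof of \cref{lem:Frm} (and as in~\cite[\S II.C.b]{4dbeyond}): extend $(\mathcal K,\phi)$ by coning off $\Supp(b_2)$ so that $b_2=\bd c$ for $c$ the chain of all $3$-cells of the cone, reduce the identity to the statement $\bra 0 \prod_{e_3\in c}L(\bd e_3)\ket 0 = +1$, and check it cell-by-cell using that a projected tetrahedron has only two homotopy types.

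The step I expect to be the main obstacle is exactly this geometric identity, together with the bookkeeping of which $1$-cells may survive: edges with \emph{one} endpoint in $\Supp(\bd b_2)$ are allowed to carry $Z$-factors (they ``meet $\Supp(\bd b_2)$''), so the cancellation only has to be checked on edges disjoint from $\Supp(\bd b_2)$ --- which is precisely where $n_{b_2}$ is unambiguously defined. For a non-simplicial cell complex one would additionally have to run the whole argument through the folded-polygon prescription of \cref{eq:lfc}; since the lemma is stated for simplicial $\mathcal K$ this does not arise here.
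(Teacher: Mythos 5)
Your overall strategy is the same as the paper's: reduce to a mod-$2$ cancellation of the $Z$-chain $T(b_2)$, attach to each interior vertex a local covering parity, and take $V_0$ to be the vertices where that parity is odd. Your $n_{b_2}(v_0)$ coincides with the paper's ``winding parity'' (the paper defines it as the mod-$2$ winding number of $\phi(\bd N(v_0))$ around $\phi(v_0)$, where $N(v_0)$ is the star of $v_0$ in $b_2$; this equals the mod-$2$ number of projected triangles of $N(v_0)$ covering a slightly displaced point, and your argument for direction-independence is exactly the paper's observation that no edge through $v_0$ survives in $\bd N(v_0)$). Your single identity $[e_1\in T(b_2)]=n_{b_2}(v)+n_{b_2}(w)$ cleanly unifies the paper's two separate claims (external thorns at a vertex $\leftrightarrow$ odd winding parity there; internal thorns on $s_1$ $\leftrightarrow$ odd sum of parities at its endpoints). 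Up to this point the proposal matches the paper.

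The gap is in how you propose to prove that identity. The cone construction from the proof of \cref{lem:Frm} establishes a \emph{sign} statement: the cell-by-cell check there is $\bra{0}L(\bd e_3)\ket{0}=+1$, and sandwiching with $\ket{0}$ discards all information about \emph{where} the $Z$ factors sit. Your identity, by contrast, is a statement about the $Z$-\emph{support} of $L(b_2)$, and the cone argument as described cannot see it. (Two further symptoms: for a non-closed $b_2$ the cone chain $c$ satisfies $\bd c=b_2+\mathrm{cone}(\bd b_2)$, not $b_2$; and the identity cannot be reduced to single triangles by additivity, since the hypothesis $v,w\notin\Supp(\bd b_2)$ is destroyed when $b_2$ is decomposed --- indeed for a single triangle the per-triangle statement is false for a segment passing through its projected interior with both endpoints outside.) What is actually needed, and what the paper supplies, is a direct local computation at each vertex star: for an edge $e_1\ni v_0$, membership of $\phi(e_1)$ in the open projected sector of each $f_2\in N(v_0)$ at $\phi(v_0)$ is counted by sweeping the segment $\ell(x)$ from $\phi(v_0)$ over $\phi(\bd N(v_0))$, which identifies the thorn parity with the winding parity; the internal case then follows by combining the two endpoint stars and noting that triangles containing $e_1$ itself contribute nothing. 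You would need to substitute such an argument for the cone step. (Both your write-up and the paper also tacitly assume the projection is local enough that $\phi(e_1)$ can only meet $\phi(f_2)$ for $f_2$ adjacent to $e_1$; without this, $\pi(v_0)$'s, being supported on $\db v_0$, could not cancel thorns on $1$-cells disjoint from $\Supp(b_2)$, so this assumption is forced on any proof of the lemma and is not a defect specific to yours.)
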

\begin{proof}
	We first define \emph{winding parity} of~$b_2$ at an \emph{interior} 
	vertex~$v_0 \in \Supp(b_2)\setminus\Supp(\bd b_2)$.
	Let~$N(v_0)$ be a chain of all $2$-cells of~$b_2$ which meet~$v_0$.
	The projected boundary~$\phi(\bd N(v_0))$ is a $1$-cycle in the plane~$\RR^2$, one which does not pass through~$\phi(v_0)$.
	Let $\gamma$ be an integral piecewise linear $1$-cycle of~$\RR^2$
	that reduces to~$\phi(\bd N(v_0))$ modulo~$2$. We may regard~$\gamma$ as a collection of (not necessarily disjoint) oriented immersed circles on~$\RR^2 = \CC$.
	The winding parity is defined to be the complex contour integral
	\begin{align}	
		\left(\frac{1}{2\pi i}\oint_{\gamma} \frac{\mathrm d z}{z - \phi(v_0)}\right) \bmod 2.
	\end{align}
	The lifting~$\gamma$ is arbitrary but any lifting differs from~$\gamma$
	by a cycle that is zero modulo~$2$.
	So, the difference in the contour integral is two times some contour integral over a cycle, 
	which vanishes modulo~$2$,
	and therefore the winding parity is well defined.
	
	A thorn of~$L(b_2)$ may be on a $1$-cell in~$\Supp(b_2)$ (internal) or not (external).
	We claim that external thorns of~$L(b_2)$ appear around an interior 
	$0$-cell~$v_0 \in \Supp(b_2)\setminus\Supp(\bd b_2)$
	if and only if the winding parity of~$b_2$ at~$v_0$ is odd.
	Indeed, an external thorn on~$e_1$ with~$\bd e_1 \ni v_0$ exists
	if and only if $\phi(e_1)$ intersects an odd number of~$\phi(f_2)$ for~$f_2 \in N(v_0)$.
	Consider a straight line segment~$\ell(x)$
	between~$\phi(v_0)$ and any point~$x$ on $\phi(\bd N(v_0))$.
	As $x$ moves along $\phi(\bd N(v_0))$, the line segment~$\ell(x)$ may overlap with~$\phi(e_1)$,
	in which case, tautologically, there is a $2$-cell in~$N(v_0)$ whose projection intersects~$\phi(e_1)$.
	We can choose an integral lifting~$\gamma$ of~$\phi(\bd N(v_0))$ 
	such that every nonzero coefficient is $\pm 1$. This is easy as follows.
	We follow a line segment of~$\phi(\bd N(v_0))$ until we encounter a junction
	and we make an arbitrary choice what segment to continue on,
	and stop if we return to the starting position.
	Repeat by choosing a segment we have never visited.
	Since every junction joins an even number of segments,
	this procedure takes us through every segment exactly once.
	With such a lifting~$\gamma$,	we let $x$ traverse~$\phi(\bd N(v_0))$, and
	the signed number of times that $\ell(x)$ overlaps with~$\phi(e_1)$,
	where $\phi(e_1)$ just fixes a reference axis, 
	is the value of the contour integral above.
	Hence the mod~$2$ reduction of this count determines the external thorn on~$e_1$.
	
	An internal thorn on~$s_1 \in \Supp(b_2)$ where both vertices of~$\bd s_1$ 
	are in the interior of~$\Supp(b_2)$,
	is determined by~$N(\bd s_1)$
	that is the chain of all $2$-cells of~$b_2$ which meet either of the boundary vertices~$t_0,u_0$ of~$s_1$.
	We should count the $2$-cells of $N(\bd s_1)$ that intersect $s_1$ with their interior.
	There are two classes of such $2$-cells:
	those in~$N(\bd s_1)\setminus N(t_0)$ or in~$N(\bd s_1) \setminus N(u_0)$.
	There cannot be any contribution from~$N(t_0) \cap N(u_0)$ 
	because the $2$-cells in this intersection has~$s_1$ as a boundary $1$-cell.
	But, even if we count the number of times that~$\phi(s_1)$ overlaps with~$\ell(x)$ 
	where~$x$ runs over all $\bd N(t_0)$ and all $\bd N(u_0)$,
	any contribution from~$N(t_0) \cap N(u_0)$ is counted twice.
	Hence, we conclude that an internal thorn on~$s_1 \in \Supp(b_2)$ exists 
	if and only if
	the sum of the winding parities of $b_2$ at the two end points of~$s_1$ is odd.
	
	Therefore, if we multiply~$L(b_2)$ by the cycle-enforcing operators $\pi(v_0)$ 
	for all interior $0$-cells~$v_0$ at which the winding parity is odd,
	then we obtain an operator without any thorns around any interior $0$-cells of~$\Supp(b_2)$.
	See~\cref{fig:stringop}(a,b).
\end{proof}

\begin{corollary}\label{cor:localizing}
	Given a null-homologous $1$-cycle~$a_1$, 
	the operator~$S(a_1)$ obtained by canceling interior thorns of~$L(b_2)$ where~$\bd b_2 = a_1$
	is independent of the choice of~$b_2$ up to cycle-enforcing operators~$\pi$.
	In addition, $S(a_1)$ commutes with any other $S(a_1')$ where $a_1'$ is any null-homologous $1$-cycle.
\end{corollary}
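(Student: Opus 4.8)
The plan is to derive both assertions from two facts already established for the worldline fluctuation operators: that any two of the $L_{Fc}(f_2)$ commute and each squares to the identity, and that $L$ of a $2$-\emph{cycle}, after its thorns are cancelled as in \cref{lem:cancelthorns}, is exactly a product of cycle-enforcing operators $\pi(v_0)$. Concretely: since $L_{Fc}(f_2)$ is a Pauli $X$ string along $\bd f_2$ times Pauli $Z$'s on cells disjoint from $\bd f_2$, we have $L_{Fc}(f_2)^2 = 1$; any two $L_{Fc}(f_2), L_{Fc}(f_2')$ commute (the ``nontrivial reason'' of \cref{sec:fcbl}, following \cite{4dbeyond}); hence for any $2$-chains $b_2, b_2'$ one has $L(b_2)^2 = 1$ and the sign-free identity $L(b_2)L(b_2') = L(b_2 + b_2')$ (cells common to both contribute $L_{Fc}^2 = 1$, and no reordering sign arises since all factors commute). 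Moreover $\pi(v_0) = \prod_{e_1 \in \db v_0} Z(e_1)$ commutes with every $L_{Fc}(f_2)$ --- $Z$ strings commute with $Z$ strings, and the $X$ string $X(\bd f_2)$ meets the star $\db v_0$ in an even number of edges ($0$ or $2$, as $f_2$ is a triangle) --- and, being diagonal, commutes with every other $\pi(v_0')$.

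For the independence statement, let $b_2$ and $b_2'$ both bound $a_1$, so $c_2 := b_2 + b_2'$ is a $2$-cycle. Applying \cref{lem:cancelthorns} to $c_2$, whose boundary is empty, produces a vertex set $V_0''$ such that $L(c_2)\prod_{v_0 \in V_0''}\pi(v_0)$ is supported on no $1$-cells and hence is a scalar; evaluating in the state $\ket{0}$, on which the $\pi$'s act trivially, that scalar equals $\bra{0}L(c_2)\ket{0}$, which by \cref{lem:Frm} is $(-1)^{\Frm(\bd c_2)} = (-1)^{\Frm(0)} = +1$. Therefore $L(c_2) = \prod_{v_0 \in V_0''}\pi(v_0)$ lies in the group generated by the $\pi$'s. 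Now write $S(a_1) = L(b_2)\prod_{v_0 \in V_0}\pi(v_0)$ and $S'(a_1) = L(b_2')\prod_{v_0 \in V_0'}\pi(v_0)$ for the vertex sets $V_0, V_0'$ furnished by \cref{lem:cancelthorns}. Using $L(b_2')^2 = 1$ and the commutations of the first paragraph, $S(a_1)\,S'(a_1)^{-1}$ equals $L(b_2)L(b_2')$ times a product of $\pi$'s; since $L(b_2)L(b_2') = L(c_2) = \prod_{v_0 \in V_0''}\pi(v_0)$, this is itself a product of cycle-enforcing operators, which is the first claim.

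For the commutation statement, express $S(a_1) = L(b_2)\prod_{v_0 \in V_0}\pi(v_0)$ and $S(a_1') = L(b_2')\prod_{v_0 \in V_0'}\pi(v_0)$ for arbitrary bounding $2$-chains. Every factor appearing --- each $L_{Fc}(f_2)$ entering $L(b_2)$ or $L(b_2')$, and each $\pi(v_0)$ --- lies in one pairwise-commuting family by the first paragraph, so $S(a_1)$ and $S(a_1')$, being products of elements of a commuting set, commute. There is no serious obstacle beyond \cref{lem:cancelthorns} and \cref{lem:Frm} themselves; the one point requiring attention is recognising that $S(a_1)$ should be carried around in the form $L(b_2)\prod\pi$, which both places it manifestly inside the commuting family and reduces the ``difference up to $\pi$'s'' claim to the statement that $L$ of a $2$-cycle belongs to the group generated by the $\pi$'s. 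Ensuring that this group membership comes with coefficient exactly $+1$ --- so that the claim concerns honest operators rather than operators-up-to-phase --- is precisely where \cref{lem:Frm} enters.
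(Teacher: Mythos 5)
Your proof is correct and follows essentially the same route as the paper's: reduce to $L$ of the $2$-cycle $b_2+b_2'$, use \cref{lem:cancelthorns} to see it localizes to a scalar times a product of $\pi$'s, and use \cref{lem:Frm} to pin that scalar to $+1$. The only difference is that you spell out the commutation claim explicitly (everything lies in one commuting family), which the paper leaves implicit.
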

We will refer to~$S(a_1)$ as the fermion string operator along~$a_1$.
\begin{proof}
	If~$b_2'$ is another $2$-chain such that~$\bd b_2' = a_1$,
	then $L(b_2')L(b_2) = L(b_2' + b_2)$ is a product of~$Z$ only;
	the sign is $+1$ by~\cref{lem:Frm}.
	By~\cref{lem:cancelthorns}, $L(b_2'+b_2)$ must localize around~$\bd (b_2' + b_2) = 0$,
	implying that $L(b_2' + b_2)$ is in fact a product of cycle-enforcing operators~$\pi$.
	If $S$ and $S'$ are two localized operators from $L(b_2)$ and $L(b_2')$, respectively,
	then $S' S^{-1}$ is an operator supported on the collection of $1$-cells that meet~$\Supp(a_1)$,
	and we have shown that $S' S^{-1} = (L(b_2')\prod \pi)(L(b_2) \prod \pi)$ is a product of~$\pi$'s.
\end{proof}

\begin{proposition}\label{prop:longS}
Assume that for any two $1$-cycles~$x_1,x_1'$ 
there is a cycle~$x_1''$ homologous to~$x_1'$
such that no $L_{Fc}(f_2)$ with a $2$-cell~$f_2$ intersects both $S(x_1)$ and $S(x_1'')$.
Then, for arbitrary $1$-cycle~$a$ there is an operator $S(a)$, unique up to a sign,
such that $a \mapsto S(a)$ modulo signs and cycle-enforcing operators~$\pi$
is a $\ZZ_2$-linear extension of~$S$ in~\cref{cor:localizing}.
Two operators~$S(a)$ and~$S(a')$ for any two $1$-cycles~$a,a'$ commute.
\end{proposition}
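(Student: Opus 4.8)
The plan is to extend $S$ to all $1$-cycles by choosing cycle representatives of a basis of $H_1(\mathcal K;\ZZ_2)$, defining their string operators by means of a coned-off projection, and then extending $\ZZ_2$-linearly; $\ZZ_2$-linearity is then essentially bookkeeping, and the commutativity claim is deduced from the separation hypothesis through a bilinearity argument that reduces everything to \cref{cor:localizing}. The hypothesis enters only in this last step.

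For the extension, I would fix cycles $\{g_i\}$ representing a basis of $H_1(\mathcal K;\ZZ_2)$ and a splitting of $Z_1$ (the space of $1$-cycles) as $B_1 \oplus \langle g_i\rangle$, where $B_1$ is the subspace of null-homologous cycles. For each $g_i$ I would enlarge $(\mathcal K,\phi)$ by coning $g_i$ off with a fresh auxiliary $0$-cell placed generically ``far away'' on $\RR^2$, exactly as in the proof of \cref{lem:circuitR}; in the enlarged complex $g_i$ is null-homologous, so \cref{cor:localizing} furnishes a fermion string operator whose $X$-part is the $X$-string along $g_i$ and whose remaining Pauli factors are $Z$'s. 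Discarding the factors supported on the auxiliary cell --- they commute with everything that will matter --- leaves an operator $S(g_i)$ on the original qubits, supported near $g_i$ and well defined up to a sign and a product of cycle-enforcing operators $\pi$. For a general $1$-cycle $a$ I would write $a = z + \sum_{i\in I}g_i$ with $z\in B_1$ and $I$ finite, and set $S(a) := S(z)\prod_{i\in I}S(g_i)$; since every factor is a Pauli operator up to sign, all factors commute up to sign, so this is unambiguous up to a sign, and on $B_1$ it reduces to \cref{cor:localizing}.

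Checking $\ZZ_2$-linearity modulo sign and $\pi$ is then routine. On $B_1$ it holds because $b_2 \mapsto L(b_2) = \prod_{f_2\in b_2}L_{Fc}(f_2)$ is additive ($L_{Fc}(f_2)^2 = 1$ and the $L_{Fc}$ commute, cf.\ \cite[\S II.C.a]{4dbeyond}) and $S(\bd b_2)\equiv L(b_2)$ modulo $\pi$'s; for the generators $S(g_i)^2 = \pm1$ since $S(g_i)$ is Pauli up to sign. Combining these with $a + a' = (z + z') + \sum_{i\in I\triangle J}g_i$ and commuting the Pauli factors past one another yields $S(a)S(a') \equiv \pm S(a+a')$ modulo $\pi$'s. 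Uniqueness up to a sign is immediate: with the above choices fixed, the only freedom left in the prescription for $S(a)$ is the ordering of commuting-up-to-sign factors.

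The commutativity claim is the one substantive point. Since $S$ is linear modulo sign and $\pi$, and each $\pi$ commutes with every $S(a)$ --- an $X$-string along a cycle meets each vertex an even number of times, and the $Z$-thorns commute with the $Z$'s of $\pi$ --- the function $\beta(a,a')\in\ZZ_2$ recording whether $S(a)$ and $S(a')$ commute is a symmetric $\ZZ_2$-bilinear form on $Z_1$. By \cref{cor:localizing}, $\beta$ vanishes on $B_1\times B_1$. For a generator $g_i$ and an exact cycle $e$, I would apply the hypothesis with $x_1 = e$ and $x_1' = g_i$ to get $g_i'' = g_i + \bd d$ homologous to $g_i$ with no $L_{Fc}(f_2)$ meeting both $S(e)$ and $S(g_i'')$; this forces $\Supp(S(e))\cap\Supp(S(g_i'')) = \emptyset$, since a common $1$-cell would lie in the boundary of some $2$-cell $f_2$ of $\mathcal K$ and then $L_{Fc}(f_2)$ would meet both. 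Hence $\beta(g_i'',e) = 0$; also $\beta(\bd d,e) = 0$ by \cref{cor:localizing} because $\bd d$ and $e$ are both exact; so $\beta(g_i,e) = \beta(g_i'',e) + \beta(\bd d,e) = 0$, and by bilinearity $\beta(a,e) = 0$ for every $a\in Z_1$ and every exact $e$. Finally, for two generators $g_i,g_j$ the same move --- replacing $g_j$ by a homologous $g_j'' = g_j + \bd d'$ separated from $g_i$ --- gives $\beta(g_i,g_j) = \beta(g_i,g_j'') + \beta(g_i,\bd d') = 0 + 0$, and bilinearity upgrades this to $\beta\equiv0$, i.e., all the $S(a)$ commute. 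The point of friction I anticipate is the coning step itself: verifying that the \cref{cor:localizing} operator computed in the enlarged complex descends to a bona fide, sign-and-$\pi$-well-defined operator on the original qubits, supported near $g_i$, with $Z$-thorns determined by local data (as in the proof of \cref{lem:circuitR}); granting that, the linearity and commutativity arguments are formal.
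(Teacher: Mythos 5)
Your construction and the paper's part ways at the very first step, and that difference is what opens a gap in your commutativity argument. The paper applies the hypothesis with $x_1 = x_1' = a_1$ and \emph{defines} $S(a_1)$ as the tube of $S(a_1+x_1'')$ localized along $a_1$, so every $S(a)$ is by construction supported on the $1$-cells meeting $\Supp(a)$; it then shows each $S(a)$ commutes with every $L_{Fc}$ (any $L_{Fc}$ commutes with the product $S(a_1)S(x_1'')$ and, by the separation hypothesis, touches at most one factor), after which $S(a)$ commutes with $S(a') = \bigl(S(a'+y')\bigr)S(y')^{-1}$ because the first factor is a product of $L_{Fc}$'s and $\pi$'s and the second is spatially separated from $S(a)$. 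In your version, $S$ at a non-basis representative $g_i'' = g_i + \bd d$ is only defined through the linear extension, $S(g_i'') = S(\bd d)S(g_i)$ up to sign and $\pi$'s, and this operator is supported near $g_i \cup g_i''$: the $X$-factors along $g_i$ cancel between the two factors, but their $Z$-thorns near $g_i$ have no reason to. The separation hypothesis controls only a neighborhood of $g_i''$, so the step ``$\Supp(S(e))\cap\Supp(S(g_i''))=\emptyset$, hence $\beta(g_i'',e)=0$'' does not apply to the operator your extension actually assigns to $g_i''$; and if you instead take $S(g_i'')$ to mean a freshly localized operator along $g_i''$, then the identity $\beta(g_i'',e)=\beta(g_i,e)+\beta(\bd d,e)$ is unavailable until you prove that this localized operator agrees with $S(\bd d)S(g_i)$ modulo signs, $\pi$'s, and factors invisible to $S(e)$. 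That agreement is precisely the content of the paper's defining relation $S(a_1+x_1'')=S(a_1)S(x_1'')$, i.e.\ the thing your route was designed to bypass. The same issue recurs verbatim in the $\beta(g_i,g_j)$ step.

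The gap is patchable --- for instance by working in the coned-off complex where every cycle bounds, invoking \cref{cor:localizing} there to get both localization and linearity simultaneously, and checking that upon restriction to $\mathcal K$ the extended cycle-enforcing terms and the cone-cell $Z$'s commute with everything supported on $\mathcal K$ --- but that patch is exactly the support bookkeeping your write-up defers. You flag the coning step as the ``point of friction,'' but the friction sits elsewhere: the definition of $S(g_i)$ is fine, and it is the identification of the linear extension at a \emph{displaced} representative with an operator localized along that representative which is missing. The remainder of your argument (linearity modulo sign and $\pi$, symmetry and bilinearity of $\beta$, its vanishing on pairs of null-homologous cycles, and the final upgrade by bilinearity) is sound, and the overall strategy can be made to work once this identification is supplied.
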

\begin{proof}
Take~$x_1 = x_1' = a_1$ in the assumption.
Then, by \cref{cor:localizing} 
the string operator $S(a_1 + x_1'')$ is supported on two clearly separated ``tubes,''
one along~$a_1$ and the other along~$x_1''$.
We \emph{define} that the string operator~$S(a_1)$ is the part along~$a_1$:
\begin{align}
	S(a_1 + x_1'') = S(a_1) S(x_1''). \label{eq:fermionstring}
\end{align}
This definition leaves the sign of $S(a_1)$ \emph{undetermined} 
if $a_1$ is homologically nontrivial,
but the operator content (Pauli factors) is unique up to cycle-enforcing terms.
Clearly, this definition is a $\ZZ_2$-linear extension of~$S$ in~\cref{cor:localizing}.

Being a product of worldline fluctuation and cycle-enforcing terms,
the operator $S(a_1 + x_1'')$ commutes with every worldline fluctuation term~$L_{Fc}$.
But any~$L_{Fc}$ can meet only one of~$S(a_1)$ and~$S(a_1')$ by assumption,
so every~$L_{Fc}$ commutes with $S(a_1)$.
Therefore, for any~$a_1$, the fermion string operator~$S(a_1)$ commutes with every~$L_{Fc}$.

If $a_1'$ is another cycle,
we have a cycle~$y_1'$ homologous to~$a_1'$ and separated from~$a_1$.
The product of two string operators~$S(a_1' + y_1') = S(a_1')S(y_1')$,
being a product of~$L_{Fc}$ and cycle-enforcing terms,
commutes with $S(a_1)$.
Since $S(y_1')$ is separated from~$S(a_1)$,
we know $S(y_1')$ commutes with $S(a_1)$.
Therefore, $S(a_1)$ commutes with $(S(a_1')S(y_1')) \cdot S(y_1')^{-1} = S(a_1')$.
\end{proof}

\subsection{$\overline\Frm$ for all $1$-cycles}

Now, we consider how to extend $\Frm$ beyond the space of null-homologous $1$-cycles.
We begin with an observation that
\cref{cor:localizing} makes it possible to define
$\BiFrm$ in~\cref{eq:association} more directly.
Namely, for two null-homologous $1$-cycles $a_1$ and $a_1'$,
the bilinear form~$\BiFrm(a_1,a_1')$ equals the mod~$2$ number of times 
that the thorns (the $Z$ part) of the fermion string operator~$S(a_1)$ intersect $a_1'$.
The ambiguity in the string operator by cycle-enforcing terms
does not affect the value~$\BiFrm(a_1,a_1')$ 
since the cycle-enforcing terms always overlap with a cycle an even number of times.
Since two closed fermion string operators commute, $\BiFrm$ is symmetric.

In~\cref{prop:longS} we have defined a fermion string operator for any $a_1$, null-homologous or not.
Define an extension~$\overline\BiFrm(a_1,a_1')$ for \emph{any} $1$-cycles~$a_1,a_1'$
to be the mod~$2$ number of times that
the $Z$ thorns of $S(a_1)$ intersect~$a_1'$.
Clearly, $\overline\BiFrm$ is linear in the second argument.
Since $S(a_1)$ and $S(a_1')$ are products of Pauli $X$ and $Z$ and they commute,
we must have $\overline\BiFrm(a_1,a_1') = \overline\BiFrm(a_1',a_1)$.
We have shown that $\overline\BiFrm$ is a symmetric bilinear form
on the $\ZZ_2$ vector space of all $1$-cycles of~$\mathcal K$,
extending~$\BiFrm$.

\begin{remark}\label{rem:BiFrmByExt}
Analogously to~\cref{rem:extphi},
$\overline\BiFrm(a_1,a_1')$ for $1$-cycles~$a_1,a_1'$ of $\mathcal K$,
can be evaluated by any extension~$(\mathcal K^e, \phi^e)$ obtained by adding more cells,
as long as the assumption of~\cref{prop:longS} is satisfied.
Here, an extension means that $\mathcal K \subseteq \mathcal K^e$ and $\phi^e |_{\mathcal K} = \phi$.
This is because the string operator~$S(a_1)$ can still be obtained 
by some $2$-chain of~$\mathcal K$ whose boundary contains~$a_1$.
Additional $1$-cells from the extension 
may give extra thorns to $S(a_1)$,
but those extra thorns do not contribute to~$\overline\BiFrm$ since $a_1'$ is on~$\mathcal K$.
$\lozenge$\end{remark}

\begin{proposition}\label{prop:globalFrm}
	Under the assumption of~\cref{prop:longS},
	there exists a quadratic form~$\overline \Frm$, 
	an extension of~$\Frm$ to all $1$-cycles with the associated biliear form~$\overline\BiFrm$.
	Any other extension is given by 
	$\overline \Frm + h^1$ for some $1$-cocycle~$h^1$ of $\mathcal K$;
	The set of all possible extensions is acted on transitively and freely
	by the first cohomology group of~$\mathcal K$ over~$\ZZ_2$.
\end{proposition}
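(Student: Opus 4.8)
The plan is to construct $\overline\Frm$ by realizing $\Frm$ inside an auxiliary complex in which every $1$-cycle bounds, and then to read off the $H^1$-torsor structure of the set of extensions from elementary $\ZZ_2$-linear algebra.

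\emph{Existence.} I would adjoin a single $0$-cell $\infty$ to $\mathcal K$ and form the cone $\mathcal K^e$ over the $2$-skeleton of $\mathcal K$ with apex $\infty$, extending $\phi$ to a projection $\phi^e$ exactly as in the proof of \cref{lem:circuitR}: put $\phi(\infty)$ in generic position in $\RR^2$, connect the images of the $0$-cells of $\mathcal K$ to it by straight segments, and map each new triangle $\{u,v,\infty\}$ onto the triangle it spans. Since a cone is contractible, $H_1(\mathcal K^e;\ZZ_2)=0$; concretely, a finite $1$-cycle $a_1$ of $\mathcal K$ bounds the finite $2$-chain $\hat a_1$ made of the triangles $\{u,v,\infty\}$ for $\{u,v\}\in a_1$. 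Hence $\Frm^e$ is defined on \emph{all} $1$-cycles of $\mathcal K^e$; it is a quadratic form there, with bilinear form $\BiFrm^e$, by the polarization identity~\cref{eq:Frm} (valid since every cycle of $\mathcal K^e$ is null-homologous), and the hypothesis of \cref{prop:longS} holds for $(\mathcal K^e,\phi^e)$ trivially. I then define $\overline\Frm := \Frm^e|_{Z_1(\mathcal K;\ZZ_2)}$. This is again a quadratic form, with bilinear form $\BiFrm^e|_{Z_1(\mathcal K;\ZZ_2)}$, which equals $\overline\BiFrm$ by \cref{rem:BiFrmByExt}; and by \cref{rem:extphi} the restriction of $\overline\Frm$ to the null-homologous $1$-cycles of $\mathcal K$ is $\Frm$. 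So $\overline\Frm$ is the required extension. (An alternative is to split $Z_1 = B_1 \oplus C$, assign arbitrary values on a basis of $C$, and propagate through the quadratic identity; but this needs $\overline\BiFrm$ to be alternating, which is itself most transparently obtained from the cone.)

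\emph{The torsor structure.} Let $\overline\Frm,\overline\Frm'$ be any two quadratic extensions of $\Frm$ to $Z_1(\mathcal K;\ZZ_2)$, both with bilinear form $\overline\BiFrm$. Then $\ell := \overline\Frm + \overline\Frm'$ satisfies $\ell(a+a') = \ell(a) + \ell(a')$, since the two copies of $\overline\BiFrm(a,a')$ cancel; so $\ell$ is $\ZZ_2$-linear on $Z_1$, and it vanishes on $B_1$ because both extensions restrict to $\Frm$ there. Extend the functional $\ell$ from the subspace $Z_1 \subseteq C_1$ to a $1$-cochain $h^1$ (any complement of $Z_1$ in $C_1$ yields such an extension). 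For every $2$-cell $f_2$ one gets $(\delta h^1)(f_2) = h^1(\partial f_2) = \ell(\partial f_2) = 0$, so $h^1$ is a $1$-cocycle and $\overline\Frm' = \overline\Frm + h^1$. Conversely, for any $1$-cocycle $h^1$ the function $\overline\Frm + h^1$ is again quadratic with bilinear form $\overline\BiFrm$ (as $h^1$ is linear) and still restricts to $\Frm$ on $B_1$ (a cocycle annihilates boundaries), hence is an extension. Finally $\overline\Frm + h^1 = \overline\Frm + (h^1)'$ on $Z_1$ exactly when $h^1 + (h^1)'$ vanishes on every cycle, i.e.\ when $h^1 - (h^1)'$ is a coboundary, by the universal-coefficient isomorphism $H^1(\mathcal K;\ZZ_2) \cong \operatorname{Hom}(H_1(\mathcal K;\ZZ_2),\ZZ_2)$. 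Therefore $\overline\Frm \mapsto \overline\Frm + h^1$ is a well-defined, free, transitive action of $H^1(\mathcal K;\ZZ_2)$ on the set of extensions.

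\emph{Expected obstacle.} The linear algebra above is routine; the only point that needs care is that the cone $(\mathcal K^e,\phi^e)$ really inherits everything used in this appendix --- that it is a simplicial $2$-complex carrying a valid projection, that $\Frm^e$, $\BiFrm^e$ and the fermion string operators $S(\cdot)$ are defined on it, and, most importantly, that passing to the cone changes neither $\Frm$ nor $\overline\BiFrm$ on cycles of $\mathcal K$. These facts are exactly \cref{rem:extphi} and \cref{rem:BiFrmByExt} together with the extension construction already used for \cref{lem:circuitR}, so the argument reduces to bookkeeping rather than any new geometric input.
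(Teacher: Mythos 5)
Your proof is correct, but the existence half takes a genuinely different route from the paper's. The paper constructs $\overline\Frm$ purely algebraically: it picks a $\ZZ_2$ basis $[h_1(1)],\ldots,[h_1(n)]$ of $H_1(\mathcal K;\ZZ_2)$, declares $\overline\Frm(h_1(j))=0$, and propagates values to all of $Z_1$ through the quadratic identity with $\overline\BiFrm$ one basis element at a time --- exactly the "alternative" you mention parenthetically and set aside. You instead cone off $\mathcal K$ as in the proof of \cref{lem:circuitR}, so that every $1$-cycle becomes null-homologous and $\Frm^e$ is already defined everywhere by \cref{lem:Frm}; the content then reduces to \cref{rem:extphi} and \cref{rem:BiFrmByExt}, which guarantee that restricting $\Frm^e$ to $Z_1(\mathcal K)$ recovers $\Frm$ on boundaries and has $\overline\BiFrm$ as its polarization. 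Both are sound; the paper's version needs no auxiliary complex and makes it manifest that the values on a homology basis can be prescribed freely, while yours avoids choosing a basis and reuses geometric machinery the appendix has already built (at the mild cost of re-checking that the cone is an admissible $(\mathcal K^e,\phi^e)$, which \cref{lem:circuitR} already does). Your concern about $\overline\BiFrm$ needing to be alternating for the algebraic route is not actually an obstruction --- the inductive formula only uses bilinearity, and well-definedness follows because each step adds one new basis direction --- so the two existence arguments are genuinely interchangeable. For the torsor statement your argument is the same as the paper's in substance, just more explicit about one point the paper elides: a linear functional on $Z_1$ is not literally a cochain until extended to $C_1$, and you verify that any such extension is automatically a cocycle and that the resulting $H^1$-action is free via $H^1(\mathcal K;\ZZ_2)\cong\operatorname{Hom}(H_1(\mathcal K;\ZZ_2),\ZZ_2)$.
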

\begin{proof}
	Choose a $\ZZ_2$ basis $[h_1(1)],\ldots,[h_1(n)]$ of the first homology of $\mathcal K$.
	Define inductively for all $m = 1,2,\ldots,n$
	\begin{align}
		\overline\Frm\left(z_1 + \sum_{j=1}^m c(j) h_1(j)\right) 
		= 
		\overline\Frm\left(z_1 + \sum_{j = 1}^{m-1}c(j) h_1(j) \right) 
			+
		c(m) \overline\BiFrm\left(z_1 + \sum_{j = 1}^{m-1}c(j) h_1(j), h_1(m)\right) 
\end{align}
	where $z_1 \in [0]$ and $c(j) \in \ZZ_2$.
	This is the unique way to define a quadratic form given that $\overline\Frm(h_1(j)) = 0$ for all~$j$.
	This shows the existence of an extension.	
		
	Any two extensions that share an associated bilinear form differ by a linear form~$\Delta$; 
	over $\ZZ_2$, the bilinear form measures the deviation of a quadratic form from being a linear form.
	This difference~$\Delta$ must be coclosed
	because on an exact $1$-cycle, we already have $\overline\Frm = \Frm$.
	If $\Delta$ is coexact, then $\Delta$ vanishes on all $1$-cycles.
	Therefore, the first cohomology acts transitively and freely 
	on the set of all possible extensions	with~$\overline\BiFrm$ fixed.
\end{proof}

\subsection{Fermionic charge only (Fc)}

\begin{remark}\label{rem:Fc}
Consider a wavefunction 
(on a system given a projection~$\phi$ for secondary qubits but without any primary qubits)
\begin{align}
	\ket{\text{Fc};[h_1]} = \sum_{a_1 \in [h_1]} (-1)^{\overline\Frm(a_1)} \ket{a_1}
\end{align}
where $a_1$ range over a homology class~$[h_1]$ of~$1$-cycles.
There is a commuting Pauli Hamiltonian
\begin{align}
	H_{Fc} = - \sum_{f_2 : \text{ cells}} L_{Fc}(f_2) - \sum_{v_0} \underbrace{\prod_{e_1 \in \db v_0} Z(e_1)}_{\pi(v_0)}
\end{align}
whose ground state subspace is spanned by $\{ \ket{\text{Fc};[h_1]} : h_1 \text{ is any $1$-cycle.}\}$.
Here, $L_{Fc}$ lacks the factors of~$Z(f^2)$ on primary qubits.  Since~$[h]$ is any homology class, the basis of the ground subspace of this Hamiltonian 
can be identified with the first homology group of~$\mathcal K$.
$\lozenge$\end{remark}

\begin{remark}\label{rem:FcEntRG}
The state~$\ket{\text{Fc}}_{\mathcal K}$ on a combinatorial manifold 
embeds naturally into a system of a refined triangulation.
By refinement we mean an injective linear chain map~$\iota: \mathcal K \to \mathcal K'$
such that $0$-cells are mapped to $0$-cells and $\iota \circ \bd = \bd \circ \iota$.
When the refined triangulation is uniform in the sense of~\cref{lem:diffphi},
we can choose any projection~$\phi': \mathcal K' \to \RR^2$,
and a handy choice of~$\phi'$ is one that has the same images for $0$-cells of $\mathcal K$.
Consider two string operators~$S(\iota \bd b_2)$ on~$\mathcal K'$ and~$S(\bd b_2)$ on~$\mathcal K$,
acting on the span of~$\{ \iota \bd b_2 ~:~ b_2 \text{ is a $2$-chain of }\mathcal K \}$.
\Cref{lem:cancelthorns} says that the only difference
between these two string operators is in the thorns around~$\iota \bd b_2$. 
Indeed, some $1$-cell~$e_1$ of~$\mathcal K$ may have been subdivided in~$\mathcal K'$ 
so that $\iota(e_1)$ consists of two or more $1$-cells of~$\mathcal K'$,
in which case there can be some thorns attached to~$e_1\setminus \bd e_1$.
But there is no worldline segment from the embedded cycles on these thorns.
Hence, any null-homologous $1$-cycles of~$\mathcal K$ embeds into~$\mathcal K'$
with frame parity unchanged.

The embedding of~$\ket{\text{Fc}}$ is realized by a shallow quantum circuit.
To this end, we aim to disentangle a qubit on a $1$-cell~$e_1'$ of~$\mathcal K'$ 
that is not in the image of~$\iota$.
Choose one~$f_2'$ of the $2$-cells in the coboundary of~$e_1'$.
There is a local unitary that conjugates $L_{Fc}(f_2')$ to $X(e_1')$.
We choose a composition~$U$ of Clifford gates, control-$X$ and control-$Z$,
where the control is on~$e_1'$ and 
the ``target'' ranges over all other $1$-cells in the support of~$L_{Fc}(f_2')$.
This local unitary~$U$ commutes with any product~$L_{Fc}(f_2') L_{Fc}(p_2')$ 
for any~$p_2'$ that intersects~$f_2'$ along~$e_1'$.
Also, $U$ conjugates the cycle-enforcing terms at the end points of~$e_1'$
to strip off~$Z(e_1')$.
Hence, the unitary~$U$ disentangles one qubit at $e_1'$ and maps~$\ket{\text{Fc}}_{\mathcal K'}$
to a state of form~$\ket{\text{Fc}}$ on a nonsimplicial complex that lacks~$e_1'$
but admits the prescription of~$L_{Fc}$ in~\cref{sec:fcbl}.
After disentangling all qubits on $1$-cells outside the image of~$\iota$,
we recover the state~$\ket{\text{Fc}}_{\mathcal K}$.

Therefore, we can say that
if $\mathcal K$ is the $2$-skeleton of a combinatorial $n$-manifold where $n \ge 3$,
the Hamiltonian~$H_{Fc}$ is a $\ZZ_2$-gauge theory with \emph{fermionic} charges.
We only have to check this for some cellulation, {\em e.g.}, the hypercubic lattice,
which we do explicitly in~\cref{fig:stringop}.
This justifies our naming, fermion string operators.
Furthermore, we may speak of \emph{the fermion} string operator~$S(a_1)$ 
for any null-homological $1$-cycle~$a_1$.

One might wonder why we need~$n \ge 3$.
Even if $n=2$, the Hamiltonian~$H_{Fc}$ is well defined and 
all the lemmas and propositions above remain true;
however, there are ``too few'' thorns in the product~$L(b_2)$ of~$L_{Fc}$ 
over a $2$-chain~$b_2$,
and the string operators transports a boson.
In fact, $H_{Fc}$ on a $2$-dimensional manifold is shallow-circuit-equivalent to
the toric code Hamiltonian on that manifold.
$\lozenge$\end{remark}

\begin{remark}\label{rem:tologicalchargeFc}
On a combinatorial $3$- or higher dimensional manifold with sufficiently refined triangulation,
there is only one deconfined topological charge with respect to~$H_{Fc}$.  We can see this as follows.
Since $H_{Fc}$ consists of commuting Pauli terms,
we may identify an excitation with a set of flipped terms.
Consider any excited state where only $L_{Fc}$ are flipped.
We know from~\cref{lem:Frm} that the product of~$L_{Fc}$ over the boundary of a $3$-ball
is $+\prod Z$ which must assume~$+1$ on~$\ket{\text{Fc}}$.
Hence, the excitation consisting of flipped~$L_{Fc}$'s 
corresponds to a $2$-cocycle~$b$, {\em i.e.}, a flux.
Disallowing extensive energy for an excitation,
we see that $b$ has to be exact,
and the excitation is caused by some $Z$s that form a $1$-cochain
whose coboundary is~$b$.
In particular, if the excitation is localized around a region 
in which no nontrivial second cohomology class may be supported,
then the excitation can be annihilated by $Z$s near the region.
A small neighborhood of a $1$-chain qualifies as such a region.
Now, more general excitation contains flips of the cycle-enforcing terms~$\pi$.
An even number of violated $\pi$ terms can be canceled by a $1$-chain of $X$ operators
whose boundary $0$-cells correspond the flipped $\pi$'s.
Therefore, any excitation localized around a point can be annihilated locally
if and only if there are an even number of flips of $\pi$.
Since a truncated fermion string operator creates an excitation with one flipped $\pi$,
the fermion at the string end is the only deconfined topological excitation of~$H_{Fc}$.
$\lozenge$\end{remark}

\begin{remark}\label{rem:fermionstringsegments}
In the setting of~\cref{rem:tologicalchargeFc},
we can assign to each $1$-cell a specific fermion string segment
that has only one $X$ factor on the $1$-cell and some $Z$ factors.
Any long fermion string operator will be a product of the operator segments.
An example set of such segments is depicted in~\cref{fig:stringop}(d).
A complete set of segments can always be found 
by (i) assigning a specific excitation that is a fermion to each $0$-cell
and (ii) letting segments transport the excitation from an end to the other end of the segment.
The latter~(ii) is possible because we have the unique deconfined topological charge in~$H_{Fc}$.
The former~(i) is achieved by choosing a base point and some fermion string operator
on a closed path from the base point to a given $0$-cell~$v$ and back.
If the path bounds a $2$-chain~$b_2$ that has no interior points, 
{\em i.e.}, every $0$-cell of $\Supp(b_2)$ belongs to $\Supp(\bd b_2)$,
which may happen if the path almost self-intersects,
then it may be tricky to truncate the path so that at the end point there is a localized nontrivial excitation.
To avoid such degenerate paths,
we require that the path is sufficiently non-self-intersecting
that
\emph{if a plaquette term $L_{Fc}$ overlaps with the path,
then it may do so only along one $1$-cell.}
This requirement is fulfilled for a sufficiently refined triangulation.
For such a non-self-intersecting path, 
a truncation of string operator at~$v$ gives a fermion there.
The excitation at the base point is specified 
after all other $0$-cells are assigned with a fermion,
by letting one of other $0$-cells be a new base point.
$\lozenge$\end{remark}

\bibliography{afq-ref}
\end{document}